\documentclass[twocolumn]{article}
\usepackage[text={6.8in,8in},centering]{geometry}
\usepackage{xcolor}
\usepackage{amsmath}
\usepackage{placeins}
\usepackage[affil-it]{authblk}
\makeatletter
\renewcommand{\AB@affilsepx}{\qquad}
\makeatother
\usepackage{caption}
\usepackage{graphicx, yquant, tikz} 
\usepackage{pgf}
\usepackage{booktabs}
\usepackage{amsthm}
\usepackage{amsfonts}
\usepackage{amssymb}
\usepackage{mathtools}
\usepackage{array}
\usepackage{bm}
\usepackage{bbm}
\usepackage{multirow}
\usepackage[colorlinks=true, linkcolor=blue, citecolor=blue, urlcolor=blue]{hyperref}
\usepackage{cleveref}
\usepackage{qcircuit}
\usepackage{subcaption}

\DeclareCaptionLabelFormat{figlabel}{\textbf{Figure~#2}}
\newtheorem{theorem}{Theorem}
\newtheorem{lemma}[theorem]{Lemma}

\newtheorem{remark}[theorem]{Remark}

\newcommand{\E}{\mathbb E}

\newcommand{\F}{\mathbb{F}}

\newcommand{\bra}[1]{\langle #1|}
\newcommand{\braket}[1]{\langle #1\rangle}
\newcommand{\ket}[1]{|#1\rangle}

\definecolor{grey}{rgb}{0.5,0.5,0.5}
\yquantdefinebox{vdots}[inner sep=4pt]{$\vdots$}
\yquantdefinebox{blankS}[draw=none,fill=none,inner sep=0pt]{$\phantom{S}$}

\definecolor{ibmblue}{HTML}{002D9C}
\definecolor{ibmred}{HTML}{da1e28}
\definecolor{ibmpurple}{HTML}{8A3FFC}
\definecolor{ibmcyan}{HTML}{00539A}
\definecolor{ibmgreen}{HTML}{6fdc8c}
\definecolor{ibmteal}{HTML}{08bdba}

\DeclareCaptionLabelFormat{figlabel}{\textbf{Figure~#2}}
\title{Sampling hard circuits with verifiably high fidelity}

\author[1]{Simon Martiel\textsuperscript{*}}
\author[1]{Jay-U Chung\textsuperscript{*}}
\author[1]{Alireza Seif}
\author[2]{Soumik Ghosh}
\author[1]{Ian Hincks}
\author[1]{\authorcr Abhinav Deshpande}
\author[3]{Hidetaka Manabe}
\author[4]{Hanfeng Gu}
\author[3]{Feng Pan}
\author[2]{\authorcr Bill Fefferman}
\author[1]{Jay M. Gambetta}
\author[1]{Ali Javadi-Abhari\textsuperscript{\(\dagger\)}}
\affil[1]{IBM Research}
\affil[2]{University of Chicago\protect\\}
\affil[3]{Singapore University of Technology and Design}
\affil[4]{NVIDIA Corporation}

\date{} 

\begin{document}

\twocolumn[
\maketitle
\begin{center}
\begin{minipage}{0.9\textwidth}
\noindent
Sampling-based proposals are prominent candidates for demonstrating quantum computations beyond the reach of classical supercomputers. However, it has been difficult to combine their complexity-theoretic hardness with two capabilities needed for scalable quantum computing more generally: suppressing hardware errors, and verifying the quantum computation itself. Here we address both issues by introducing structured circuits, which, in addition to provable hardness guarantees, admit an encoding in a quantum code. This allows us to simultaneously reach high fidelities at high circuit depths, and to certify an experimental fidelity via the circuit structure and measurement of code syndromes. The resulting certificate is device dependent, but requires substantially weaker noise assumptions than existing fidelity proxy benchmarks. We demonstrate our proposal with a $64$-qubit, depth-$73$ Clifford circuit, doped with $314$ $T$ gates. We use a total of $76$ physical qubits to encode this computation in spacetime codes, effectively suppressing gate error rates by $10\times$ after syndrome post-selection, and yielding a state with a fidelity lower bound of $0.349$ with $95\%$ confidence. Our construction is a systematic method for promoting a stabilizer state to a magic state while keeping an error-detected fidelity certificate.
\end{minipage}
\end{center}
\vspace{.5em}
]

\begingroup
\renewcommand{\thefootnote}{\fnsymbol{footnote}}
\footnotetext[1]{These authors contributed equally.}
\footnotetext[2]{
\href{mailto:ali.javadi@ibm.com}{ali.javadi@ibm.com}}
\endgroup

\begin{figure*}[t]
\centering
\includegraphics[width=.99\textwidth]{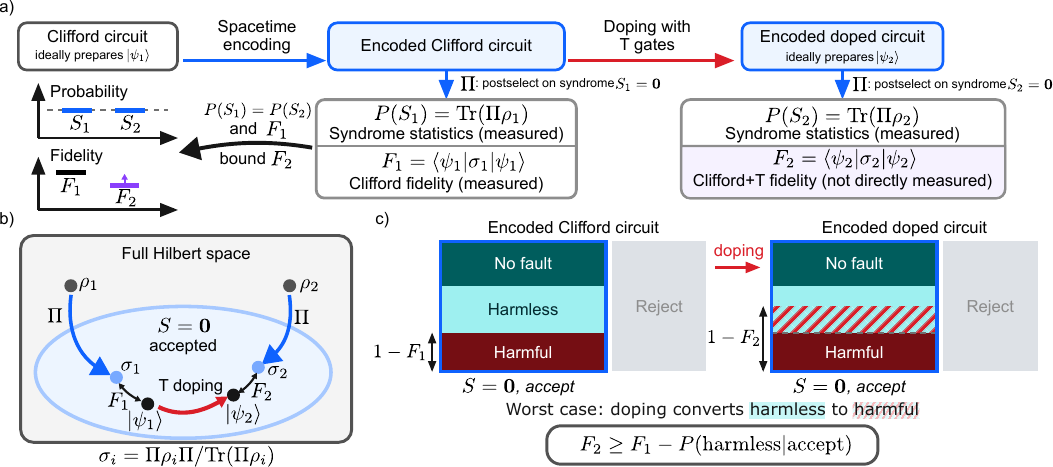} 
\caption{
{\bf  Bounding the fidelity of hard quantum states.} {\bf a.} An encoded Clifford circuit's state fidelity $F_1$ and its syndrome distributions $S_1$ are measured efficiently. The circuit is doped with $T$ gates that commute with code stabilizers, yielding statistically indistinguishable syndromes $S_2$. This leads to a lower bound on the fidelity $F_2$ of the (hard) doped circuit. {\bf b.} The code projects both states to the same subspace, and doping only rotates within this subspace. {\bf c.} A lower bound on $F_2$ can be efficiently estimated from the classification of Pauli faults affecting the Clifford circuit. The set of rejected faults and the no-fault probability remain invariant with doping. Any loss in fidelity is only possible due to the conversion of harmless faults, which are rare in random Cliffords.}\label{fig:fig1}
\end{figure*}

\begin{figure}[t]
\centering
\includegraphics[width=1\columnwidth]{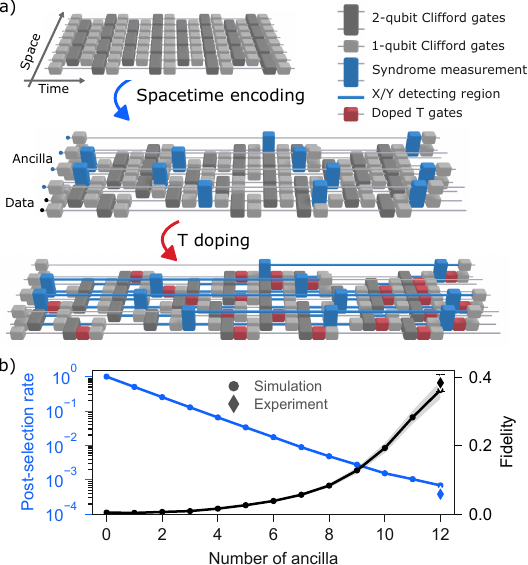} 
\caption{
{\bf  Constructing doped Clifford circuits within spacetime codes.} {\bf a.}  A highly entangling Clifford circuit is encoded in a spacetime code and doped with $T$ ($Z$-rotation) gates on wires not subject to $X$ or $Y$ error detection, producing an encoded state with high entanglement and magic. {\bf b.} Additional ancillas improve fidelity at the cost of lower post-selection rates. Pauli-noise simulations agree well with experiment, assuming $0.18\%$ $CZ$ error, idle process fidelity $e^{-t/\tau}$ with $\tau=170 \ \mu$s, and $0.1\%$ readout error.}\label{fig:fig2}
\end{figure}

\begin{figure*}[t]
\centering
\includegraphics[width=\textwidth]{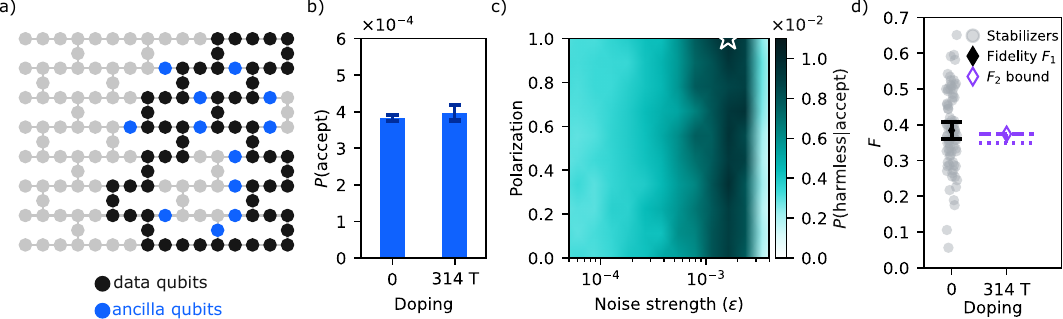} 
\caption{
{\bf  Quantum advantage experiment with a fidelity bound.} {\bf a.} Qubit layout on the IBM Boston superconducting processor, consisting of $64$ data and $12$ ancilla qubits. {\bf b.} Zero-syndrome probability $P(\mathrm{accept})$ is similar before and after doping, consistent with $T$ gates being noiseless.  {\bf c.} Estimated probability of harmless, accepted errors under various noise strengths and polarizations. The maximum possible fidelity drop that doping can cause due to conversion of harmless faults is $0.010(1)$. {\bf d.} Direct fidelity estimation of the Clifford state (0 doping) shows fidelity $F_1 = 0.38(1)$. We lower bound the fidelity of the hard doped state to $F_2 \geq 0.349$ (dotted line), and collect 1,389 post-selected samples in a total runtime of $14.5$ minutes. Error bars show normal $95\%$ confidence intervals.}\label{fig:fig3}
\end{figure*}

\begin{figure*}[t]
    \centering
\includegraphics[width=\linewidth]{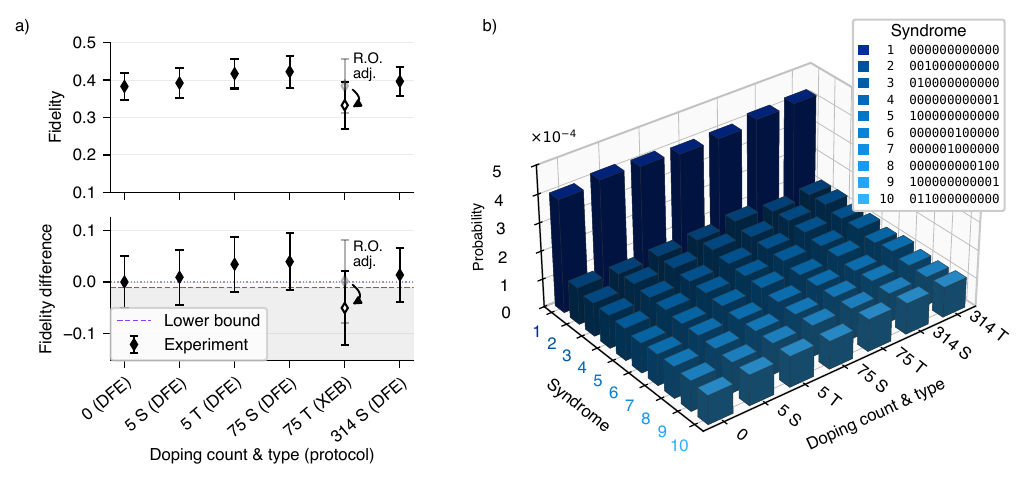}
\caption{{\bf Validation of fidelity lower bound.} {\bf a.} Various doping counts (5, 75, 314) and types ($T$ or $S$ gates) are chosen that permit efficient measurement of fidelity (DFE) or fidelity proxy (XEB). XEB is known to overestimate fidelity due to readout errors, so we also show a rescaled quantity that controls for this effect. All experiments show that our estimated lower bound accurately captures the maximum loss in fidelity due to doping. Error bars show $95\%$ confidence intervals. {\bf b.} Syndrome probability distributions show high consistency, regardless of doping strategy. This is consistent with a similar noise environment affecting all experiments, and doping gates being noiseless. }
\label{fig:fig4}
\end{figure*} 

\noindent Quantum processors have entered a distinct regime where their increasing size and precision now make it possible to perform computations that are difficult to reproduce classically. This provides an experimental probe of quantum complexity theory, testing the prediction that quantum computers achieve an exponential runtime separation over classical computation for certain tasks~\cite{bernstein1993quantum}. A prominent path to quantum advantage, supported by strong theoretical evidence for classical hardness, is to sample from the output distributions of certain circuit families~\cite{Bremner_2010,Aaronson2017,Movassagh2023}.

Beyond a complexity-theoretic foundation, however, a scalable demonstration of quantum advantage must satisfy two further criteria. First, increasing circuit size must be matched with a corresponding suppression of noise, such that the experiment evades classical simulation as it scales~\cite{Dalzell_2022,schuster2025polynomial}. Second, one must be able to trust that these complex calculations were performed with high fidelity, even in the presence of noise~\cite{harrow2017quantum,lanes2025framework}. Both requirements have been difficult to satisfy in existing sampling experiments: hardware-friendly proposals~\cite{Boixo2018} are poorly matched to low-overhead protection by encoding, whereas more structured proposals~\cite{Bremner_2010,jozsa2013classical} are difficult to map to near-term hardware in classically-hard regimes~\cite{hangleiter2404fault}. Verifying these experiments has its own scaling problem: resolving fidelity in broadly distributed outputs requires a prohibitive number of samples~\cite{Dalzell_2022, Deshpande_2022, hangleiter2019sample}, and more sample-efficient benchmarks rely on strong classical simulation and strong assumptions about hardware noise~\cite{Morvan2024, GaoKalinowski2024, fefferman2023effectnonunitalnoiserandom, ware2023sharp}.

Here we develop a protocol for quantum advantage based on {\em doped Clifford sampling (DCS)}, and prove its asymptotic hardness. We show that the circuit structure enables simultaneously larger circuit volumes, higher fidelities, and an experimentally accessible fidelity certificate. Our construction is aided by the use of spacetime codes for Clifford circuits~\cite{bacon2017sparse, delfosse2023spacetime} and a method for injecting non-Cliffordness that preserves the code. The spacetime code detects errors during computation, increasing fidelity at the cost of a sampling overhead. Leveraging the fact that the fidelity of Clifford states is efficiently measurable, and by doping such that code syndromes are preserved, we are able to sample from hard quantum states and lower bound their fidelity, even when classical simulation is infeasible.

We experimentally demonstrate our protocol on a superconducting quantum processor using a $64$-qubit, depth-$73$ Clifford circuit doped with $314$ T gates, encoded in a total of $76$ physical qubits. Syndrome post-selection produces a Clifford reference state with fidelity $0.38(1)$, from which we derive a $95\%$-confidence lower bound of $0.349$ for the fidelity of the classically hard doped state. We validate the fidelity lower bound in independently measurable regimes at low and intermediate $T$ counts. Our experiment generates roughly 1,400 samples in 14 minutes, which we estimate to be infeasible for current tensor-network and stabilizer-based algorithms. It is reasonable to expect that the time to simulate this particular instance will change with better simulation methods and improved classical hardware. Nevertheless, our primary goal is to demonstrate that algorithm-agnostic hardness guarantees can coexist with trusted execution of deep, error-detected circuits on current hardware.

Figure~\ref{fig:fig1} summarizes how the fidelity certificate is obtained in a DCS workflow. We begin with a random Clifford circuit, whose output state fidelity can be measured and verified efficiently using direct fidelity estimation (DFE)~\cite{direct_fid_est,dfedasilve}. Notably, DFE makes no assumptions about hardware noise. We then encode this circuit in a spacetime code, where post-selection on the zero syndrome projects the state into a low-error subspace~\cite{error_detection}.  Lastly, we increase the amount of non-Cliffordness (``magic'') by carefully doping with $T$ gates while preserving the code structure. This ensures that the two circuits (i.e., before and after doping) detect the same set of faults. We experimentally confirm that the syndrome distributions remain nearly identical. Together, these properties let us obtain a lower bound for the doped circuit's fidelity, even though it is not directly measurable or simulable. We do so by observing that quantum states before and after doping are both projected into the same subspace, and that the only loss in fidelity can come from a conversion of previously-harmless faults (i.e., faults that stabilize the Clifford circuit) to harmful faults. We show that the probability of harmless faults in random Clifford circuits is asymptotically negligible, and can be estimated efficiently for a given circuit instance (Supplementary Information \ref{sec:sfe}).

This method of lower bounding fidelity is in contrast to existing approaches which largely verify such experiments by a fidelity proxy~\cite{hangleiter2026has}: computing statistical properties of the observed samples that are expected to track fidelity under particular noise assumptions. For example, cross-entropy benchmarking (XEB) requires sufficiently weak, spatially homogeneous, and temporally independent noise~\cite{GaoKalinowski2024,ware2023sharp,dalzell2024random}, while mirror benchmarking requires Markovian, gate-independent noise that is uniform on each layer~\cite{mayer2021theory}. By contrast, we permit strong, non-uniform, and spatiotemporally correlated noise. We also do not impose anticoncentration or expensive classical compute requirements. We assume a general Pauli noise model, which is attainable by twirling the experiment's noisy gates~\cite{wallman2016randomized}.

Figure~\ref{fig:fig2} describes the circuit and code construction (also see Supplementary Information \ref{sec:circuit_implementation}). We begin with a brickwork Clifford skeleton of similar width and depth, constructed on qubits connected in a cycle. It has been shown that a linear-depth circuit is sufficient to create an arbitrary stabilizer state on a cycle~\cite{goubaultdebrugiere2026toappear}, and that this connectivity quickly frustrates tensor-network simulation methods~\cite{manabe2026classical}. At the same time, this Clifford skeleton has inherent Pauli symmetries which can be measured with spacetime Pauli checks using additional ancilla qubits~\cite{error_detection}. Any Pauli operator with arbitrary support in space and time that stabilizes the output of the circuit is a valid check, and its measurement will detect faults that anticommute with that Pauli operator. 

Discovering good symmetries, however, is a difficult task for a large circuit: some are easier to measure owing to their Pauli weight and the connectivity of ancilla to data qubits, while the fraction of errors they detect can also vary considerably. We use fast decoder-based heuristics to search in the Pauli group generated by valid checks, yielding low-overhead syndrome measurements with large detecting regions. In the $64$-qubit Clifford circuit used in this work, the construction approaches distance-2 error-detection behavior using only $12$ ancillas, detecting $93\%$ of all first-order errors. We measure a $131\times$ increase in state fidelity compared to the unencoded circuit, at the cost of a $990\times$ decrease in effective sampling rate. This slowdown is tolerable for superconducting platforms and still leaves them comparatively faster than trapped-ion and neutral-atom systems~\cite{ransford202698, bluvstein2023logical}.

While the encoded Clifford circuit provides high entanglement and fidelity, non-Clifford resources are required for classical hardness. Existing approaches initialize or measure the circuit in non-Clifford bases and achieve theoretical hardness~\cite{jozsa2013classical,bouland2017complexity,yoganathan2019quantum,ghosh2023complexity}, but require a large problem size for practical quantum advantage since they only introduce an amount of magic linear in qubit count. We can improve this to quadratic without compromising the structure of the code, and its error detection capability, by using the spacetime picture.

To do this, we compute the detecting region of each check by back-propagating its ancilla Pauli measurement through the Clifford circuit and recording its path. This yields a Pauli operator supported on all $O(n^2)$ spacetime locations of the circuit, which determines the faults detectable by that ancilla measurement. If a local Pauli error \(P\in\{X,Y,Z\}\) remains undetected by all ancillas, then we can just as well intentionally inject a $P(\theta)$ rotation at that location, as it preserves all measured stabilizers by commutation. This doping dramatically increases the simulation hardness of the circuit, yet retains the same fidelity gains provided by the code.

While the above construction permits any non-Clifford rotation as long as it commutes with checks, $Z$ rotations (including $T$ gates) are special in our hardware: they are implemented by virtual frame tracking and do not add extra noise~\cite{virtualz, ball2016role}.  This means that $T$ doping does not introduce new errors and only changes how existing errors propagate. This fact, together with the fact that $T$ gates are placed in stabilizer-commuting locations, controls the magnitude of undetected errors, leading to the advertised fidelity bound.

We realize the DCS protocol in a $76$-qubit experiment designed to certify a high-fidelity quantum computation in a classically challenging regime. We take several steps to reduce hardware noise, leading to better fidelity and post-selection rates (see Supplementary Information \ref{sec:experiments} and \ref{sec:hardware}). We choose a subset of qubits with low gate errors, arranged as a cycle of data qubits with dangling ancillas. The spacetime Pauli checks are optimized by a randomized search to minimize undetected error channels. Lastly, we calibrate gates and readouts for optimal performance in our specific circuit, discard shots where non-Markovian errors are detected~\cite{olepaper}, and perform gate Pauli twirling to tailor the noise toward a stochastic Pauli channel.

Our experiment is described in Figure~\ref{fig:fig3}. The full physical circuit contains 2,644 $CZ$ gates: 2,336 belong to the depth-73 DCS computation and the remaining 308 couple the ancillas for syndrome extraction. We measure a fidelity of $0.38(1)$ for the stabilizer state, which we obtain by DFE using $80$ randomly chosen stabilizers and 250,000 shots per stabilizer, at a post-selection rate of $3.83(4)\times10^{-4}$. Using a readout mitigation strategy~\cite{PhysRevA.105.032620}, we estimate the true state fidelity to be $0.55(2)$. We inject 314 T gates, the maximum permitted by the code, and measure a statistically indistinguishable probability of acceptance between the Clifford and the $T$-doped circuits. This lets us lower bound the fidelity of the $T$-doped state by estimating the conditional probability of harmless faults among accepted faults in the Clifford circuit. To obtain this estimate, we perform a Monte Carlo simulation sweeping various Pauli noise strengths and polarizations, classifying each fault configuration by whether it is detected by the code and whether it stabilizes the state. We numerically estimate a maximum possible loss in fidelity of $0.010(1)$ due to doping.  Subtracting the upper confidence limit on this loss from the lower confidence limit on the DFE fidelity yields the conservative bound $F_2 \geq 0.349$ for the $T$-doped state at the $95\%$ confidence level.

\begin{figure}[!t]
\centering
\includegraphics[width=\columnwidth]{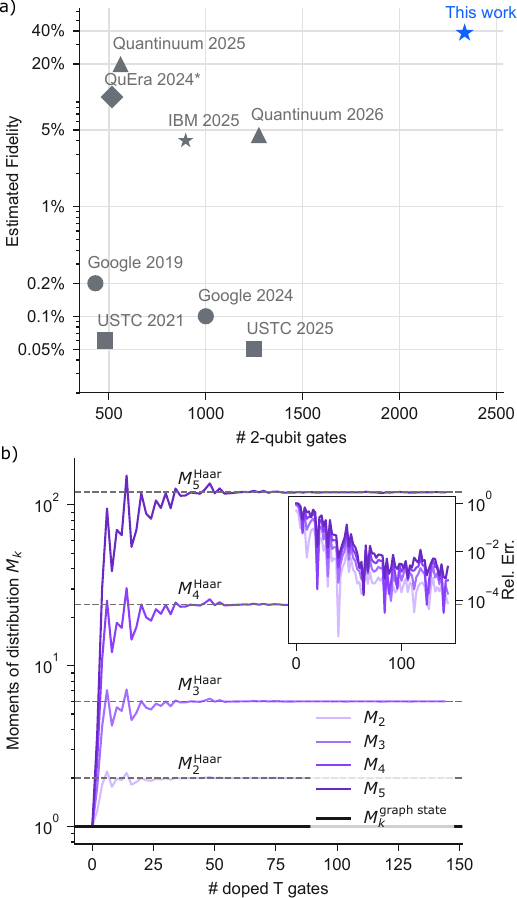} 
\caption{
{\bf DCS in the landscape of quantum sampling experiments.} {\bf a.} Spacetime encoding increases circuit size and fidelity, compared to prior work (asterisk indicates counting compiled $CCZ$ gates). {\bf b.} Similar to prior proposals, DCS distributions converge in moments to the Haar distribution. Simulations shown for a $24\times24$ doped Clifford circuit encoded using $12$ ancillas.
}\label{fig:fig5}
\end{figure}

We can independently validate our estimated lower bound on fidelity, since it holds after doping with any $Z$-rotation and any amount of doping. Figure~\ref{fig:fig4} summarizes these validation experiments. In the low-magic regime, DFE remains tractable since the state retains a sparse Pauli description (see Supplementary Information~\ref{sec:dfe}). Added $T$ gates can spread observable weight among an exponentially larger set of Paulis, making it harder to both classically importance-sample the Paulis and to experimentally resolve their expectation values. Nevertheless, we perform DFE after doping with $5$ $T$ gates and show agreement with our bound. As a separate check, we use $S$ gates for doping in more locations and measure DFE on these new Clifford circuits, again supporting the validity of the fidelity bound. 

At medium amounts of doping with $O(n)$ $T$ gates, a different validation strategy becomes available. In this regime, the output distribution begins to anticoncentrate, and its second moment approaches that of Haar-random states. Consequently, XEB becomes a calibrated proxy for fidelity (see Supplementary Information~\ref{sec:xeb}). At the same time, state-of-the-art Clifford + $T$ simulators are still capable of strong simulation with around 75 T gates~\cite{kissinger2022simulating}. These two facts together enable an independent validation via XEB, albeit with substantial computational effort and stronger assumptions on noise. We observe a similar validation of our fidelity bound in this regime. Lastly, we observe excellent agreement among the full distribution of syndromes obtained from all circuits, strongly supporting the assumption that doping gates are noiseless and the noise environment affecting all circuits is the same. Our experimental methodology has facilitated this by interleaving all shots over the span of a $5$-hour experiment, which distributes drifting noise similarly among shots.

Figure~\ref{fig:fig5} places our work in the context of previous sampling experiments. Random circuit sampling (RCS) is readily implemented using hardware-native gates~\cite{Arute2019, Wu2021, Morvan2024, DeCross2025, gao2025establishing, merkel_clifford_benchmarks, ransford202698}, but does not possess enough structure to easily admit software-level protections. This means that RCS has primarily been used as a benchmark of physical qubits and gates. More structured proposals such as IQP sampling can be encoded in error-detecting codes but remain difficult to scale~\cite{bluvstein2023logical,hangleiter2404fault}, and their algebraic structure can expose specialized classical attacks~\cite{codsi2023classically,maslov2024fast,nelson2026polynomial}. Our proposal uses a low-overhead encoding to move the accessible regime toward larger circuits and higher fidelities. Importantly, the circuit structure does not compromise anticoncentration, which is a desirable property of hard sampling proposals~\cite{hangleiter2018anticoncentration}: as the number of $T$ gates increases, we observe that low-order moments of the output distribution approach their Haar values~\cite{leone2026non}. DCS therefore combines the broad output statistics of hard sampling problems with circuit structure that can be exploited for error detection and fidelity certification.

In terms of classical hardness, our proposal is a compelling one for quantum advantage. Specifically, we show that our circuit ensemble is universal in the worst-case and that, assuming two appropriate complexity-theoretic conjectures---similar to the ones used in the theory of other sampling-based advantage proposals~\cite{aaronson2010computational, bremner2017achieving, bouland2019complexity}---no classical sampler exists for the output distribution in the average case (see Supplementary Information \ref{sec:asymptotic_hardness}). To quantify the hardness of simulation at the finite size accessed experimentally, we numerically investigated many leading algorithms and found them intractable (see Supplementary Information \ref{sec:simulation_hardness}). 

To see why, we note that our construction makes the classical cost due to entanglement and magic separately explicit: the former due to the high-depth Clifford backbone and the latter due to high $T$ count. In the worst case, Rank-width based tensor-network contraction of graph-like circuit representations scales exponentially as $O(2^{2r})$ with the width $r$ of a given rank decomposition~\cite{kuyanov2026efficientclassicalsimulationlowrankwidth}. While finding an optimal rank decomposition is hard~\cite{oum2017rank}, an extensive search over balanced cuts of the underlying graph state finds no bipartite entanglement entropy below $r=27$. Optimized contraction heuristics~\cite{Gray_2021} likewise struggle with the circuit's cylinder-like structure and large minimal separators---reducing the resulting width by slicing incurs a steep multiplicative cost because the bottlenecks are largely index-disjoint. Approximate MPS methods can avoid such costs, especially in 1D Haar-random circuits, by truncating rapidly decaying Schmidt spectra~\cite{PhysRevX.10.041038}. But our doped-Clifford circuits exhibit substantially flatter spectra, causing rapid loss of fidelity with truncation over the deep circuit. Stabilizer-based methods face a complementary obstruction: approximate simulation of $t$ $T$ gates is conjectured to require $\Omega\!\left(2^{0.228t}\right)$ stabilizer terms~\cite{bravyi2016improved}, which seems impractical for $t=314$. Hybrid methods can combine these approaches~\cite{ybnf-rjw8,clifft}, but are limited for highly entangled magic resources.

Noise can open additional routes for classical simulation, since sufficiently noisy circuits may be approximated by truncating many correlations or paths in the evolution~\cite{Aharonov_2023, Noh_2020}. This is precisely where circuit encoding can change the simulable regime. Simulations show that our logical circuit has been run at an effective $CZ$ error rate of $2\times 10^{-4}$---a $10\times$ improvement compared to physical $CZ$ rates on the device and far below state-of-the-art two-qubit error rates---likely making noisy approximations impractical in the regime accessed here. Finally, one might ask whether restricting $T$ gates to code-preserving locations inadvertently suppresses magic. Numerically we find no meaningful reduction in stabilizer entropy~\cite{leone2022stabilizer} relative to unrestricted doping, ruling out such simple shortcuts to simulation. We leave open the possibility of faster simulations with improved algorithms.

\paragraph{Outlook} Scalable quantum advantage hinges on the ability to remove errors from the computation as system sizes grow, and the ability to trust the computational output. Our work makes strides along this path. Two important open problems remain. On the scalability front, our approach uses error detection to push the computational boundary, achieving larger circuit volumes at higher fidelities. But error detection alone is not scalable indefinitely because of post-selection overhead. Yet with improved hardware, the same spacetime codes or other codes could be used to demonstrate fault-tolerant quantum advantage~\cite{delfosse2023spacetime, hangleiter2404fault}. On the verification front, our proposal relaxes noise assumptions by using circuit symmetries and syndromes to verify a quantum state~\cite{ringbauer2025verifiable,xiao2026situ}. However, a fully device-independent verification makes no quantum assumptions and only verifies the problem's output, proving that a quantum computer was used in the first place. This task is not efficient for random circuits with a broad output distribution, but recent progress in this direction has relied on planting secrets in quantum circuits that a classical verifier can check~\cite{aaronson2024verifiable, gharibyan2025heuristic,deshpande2025peaked}.

\paragraph{Data availability} All experimental data and the circuits used are available through Zenodo at \href{https://zenodo.org/records/21633064}{10.5281/zenodo.21633064}.

\paragraph{Note added} An earlier version of this manuscript demonstrated the DCS protocol on a one-dimensional circuit with open boundaries. Subsequent work exploited this geometry to perform tensor-network simulation using approximately 159 H100-GPU-hours, and independently verified the estimated quantum fidelity~\cite{manabe2026classical} (see Supplementary Information~\ref{sec:independent_validation}). The present manuscript instead uses periodic boundary conditions, eliminating the low-width contraction path exploited before. The theoretical framework is unchanged.

\paragraph{Acknowledgments} We thank Oliver Reardon-Smith, John van de Wetering, Matthew Sutcliffe, Bryan Clark, Zejun Liu, Xavier Waintal, Tristan Cam, Kevin Smith and Minh Tran for their valuable expertise on classical simulation algorithms. SM would like to thank Olivier Ezratty for stimulating discussions on quantum advantage criteria. We acknowledge helpful conversations with Abhinav Kandala, Samantha Barron, Ewout van den Berg, Sergey Bravyi, Dominik Hangleiter, Jonas Helsen, Liang Jiang, Blake Johnson, David McKay, Ilan Rosen, Liran Shirizly, Maika Takita, Kristan Temme and Nobuyuki Yoshioka. We thank the entire IBM Quantum team whose work enabled this study.

\FloatBarrier
\bibliographystyle{unsrt}
\bibliography{biblio}

\onecolumn
\clearpage

\setcounter{page}{1} 
\renewcommand{\thepage}{S\arabic{page}} 

\begin{center}
    {\LARGE\bfseries Supplementary Information}
\end{center}

\vspace{1em}

\tableofcontents
\clearpage

\setcounter{section}{0}
\setcounter{subsection}{0}
\setcounter{figure}{0}
\setcounter{table}{0}
\setcounter{equation}{0}
\setcounter{theorem}{0}

\renewcommand{\thesection}{S\arabic{section}}
\renewcommand{\thesubsection}{S\arabic{section}.\arabic{subsection}}
\renewcommand{\thefigure}{S\arabic{figure}}
\renewcommand{\thetable}{S\arabic{table}}
\renewcommand{\theequation}{S\arabic{equation}}
\renewcommand{\thetheorem}{S\arabic{theorem}}

\section{Constructing $T$-doped Clifford circuits within spacetime codes}\label{sec:circuit_implementation}

In this section we provide a detailed explanation of our circuit construction  methods.

\subsection{High-entanglement graph state generation}

To prepare highly entangled random graph states, we use a circuit ansatz shown in Figure~\ref{figure:base_circ}. This circuit begins with a layer of $-H-$ on every qubit, followed by a brickwork pattern alternating odd and even layers of $-CZ-$ gates on a loop of qubits, where each $-CZ-$ layer is followed by a layer of random single-qubit $-\sqrt{X}-$ or $-S-\sqrt{X}-$ gates. A final layer of single-qubit Cliffords is appended to the end to rotate the stabilizer state into a graph state.

We use qubits connected in a cycle graph, i.e., a one-dimensional chain with periodic boundary conditions. Such a connectivity admits a large number of ancilla, while frustrating tensor network contraction methods much more effectively than a chain with open boundary conditions~\cite{manabe2026classical}.

In the graph state picture, our circuit starts with the empty graph on $n$ qubits, repeatedly toggles the edges of the cycle graph, then applies local complementations or togglings of self-loops.

The circuits we use in this work have a $-CZ-$ depth slightly larger than $n$. It has been proven that depth $n$ is sufficient to prepare arbitrary graph states on a cycle architecture~\cite{goubaultdebrugiere2026toappear}, which gives us confidence that the circuit is expressive enough to generate graph states with maximum entanglement.
Indeed, we empirically find that we obtain random graph states that almost saturate the entanglement upper bound. For example, when computing the bipartite adjacency rank over $10^8$ random bipartitions in our $64$-qubit graph state, we obtain a minimum rank/bipartite entanglement entropy of $27$, which is close to the maximum possible rank of $32$.

While a cycle layout is used for $n$ logical qubits, an additional number of ancilla qubits are required for error detection. On the IBM Heron processor family, a typical 1-D chain alternates between degree-2 and degree-3 nodes in the qubit connectivity graph. Every degree-3 node allows us to attach an ancilla qubit and perform error detection on the logical circuit.

\begin{figure}[h]
\center

\begin{tikzpicture}
    \usetikzlibrary{decorations.pathreplacing}

    \begin{yquant*}
        qubit {$q_0$} q0;
        qubit {$q_1$} q1;
        qubit {$q_2$} q2;
        qubit {$q_3$} q3;
        nobit ellipsis;
        vdots ellipsis;
        qubit {$q_{n-3}$} qnm3;
        qubit {$q_{n-2}$} qnm2;
        qubit {$q_{n-1}$} qnm1;

        h q0;
        h q1;
        h q2;
        h q3;
        h qnm3;
        h qnm2;
        h qnm1;
        barrier (q0,q1,q2,q3);
        barrier (qnm3,qnm2,qnm1);

        zz (q1, q0);
        zz (q3, q2);
        zz (qnm3);
        zz (qnm1, qnm2);
        barrier (q0,q1,q2,q3);
        barrier (qnm3,qnm2,qnm1);

        box {$S$} q0;
        box {$\sqrt{X}$} q0;

        blankS q1;
        box {$\sqrt{X}$} q1;

        box {$S$} q2;
        box {$\sqrt{X}$} q2;

        box {$S$} q3;
        box {$\sqrt{X}$} q3;

        blankS qnm3;
        box {$\sqrt{X}$} qnm3;

        blankS qnm2;
        box {$\sqrt{X}$} qnm2;

        box {$S$} qnm1;
        box {$\sqrt{X}$} qnm1;
        barrier (q0,q1,q2,q3);
        barrier (qnm3,qnm2,qnm1);

        zz (q2, q1);
        zz (q3);        
        zz (qnm3,qnm2);
        zz (qnm1,q0);
        barrier (q0,q1,q2,q3);
        barrier (qnm3,qnm2,qnm1);

        blankS q0;
        box {$\sqrt{X}$} q0;

        box {$S$} q1;
        box {$\sqrt{X}$} q1;

        blankS q2;
        box {$\sqrt{X}$} q2;

        box {$S$} q3;
        box {$\sqrt{X}$} q3;

        box {$S$} qnm3;
        box {$\sqrt{X}$} qnm3;
        
        box {$S$} qnm2;
        box {$\sqrt{X}$} qnm2;

        blankS qnm1;
        box {$\sqrt{X}$} qnm1;

        text {\dots} (q0,q1,q2,q3);
        text {\dots} (qnm3,qnm2,qnm1);

        zz (q1, q0);
        zz (q3, q2);
        zz (qnm3);
        zz (qnm1, qnm2);
        barrier (q0,q1,q2,q3);
        barrier (qnm3,qnm2,qnm1);

        box {$S$} q0;
        box {$\sqrt{X}$} q0;

        blankS q1;
        box {$\sqrt{X}$} q1;

        box {$S$} q2;
        box {$\sqrt{X}$} q2;

        box {$S$} q3;
        box {$\sqrt{X}$} q3;

        blankS qnm3;
        box {$\sqrt{X}$} qnm3;

        blankS qnm2;
        box {$\sqrt{X}$} qnm2;

        box {$S$} qnm1;
        box {$\sqrt{X}$} qnm1;
        barrier (q0,q1,q2,q3);
        barrier (qnm3,qnm2,qnm1);

        zz (q2, q1);
        zz (q3);        
        zz (qnm3,qnm2);
        zz (qnm1,q0);
        barrier (q0,q1,q2,q3);
        barrier (qnm3,qnm2,qnm1);

        box {$S$} q0;
        box {$\sqrt{X}$} q0;

        blankS q1;
        box {$\sqrt{X}$} q1;

        box {$S$} q2;
        box {$\sqrt{X}$} q2;

        box {$S$} q3;
        box {$\sqrt{X}$} q3;

        box {$S$} qnm3;
        box {$\sqrt{X}$} qnm3;

        box {$S$} qnm2;
        box {$\sqrt{X}$} qnm2;

        box {$S$} qnm1;
        box {$\sqrt{X}$} qnm1;
        barrier (q0,q1,q2,q3);
        barrier (qnm3,qnm2,qnm1);

        box {$S$} q0;
        h q1;
        h q2;
        box {$S$} q3;
        h qnm3;
        h qnm2;
        box {$S$} qnm1;
    \end{yquant*}

    \draw[decorate,decoration={brace,mirror,amplitude=4pt}]
      (1.1,-6) -- (2.1,-6)
      node[midway,below=6pt,align=center] {CZ layer};

    \draw[decorate,decoration={brace,mirror,amplitude=4pt}]
      (2.60,-6) -- (4,-6)
      node[midway,below=6pt,align=center] {Random 1Q\\layer};
    \draw[decorate,decoration={brace,mirror,amplitude=4pt}]
      (13.8,- 
      6) -- (15.3,-6)
      node[midway,below=6pt,align=center] {Rotate into\\graph state};
    \draw[decorate,decoration={brace,amplitude=6pt}]
      (1,0.3) -- (15.20,0.3)
      node[midway,above=8pt] {$n$ brickwork layers};
\end{tikzpicture}
\caption{{\bf Circuit preparing high-entanglement random graph states on $n$ qubits connected on a ring}. After an initial layer of Hadamard gates, alternating layers of brickwork CZ gates and single-qubit Clifford gates are applied until reaching a CZ depth of $n$. A final layer of phase or Hadamard gates rotates the stabilizer state into a graph state.}
\label{figure:base_circ}
\end{figure}

\subsection{Spacetime codes} \label{sec:check_and_doping}


\textit{Pauli checks} are syndrome detection gadgets wrapping a target payload circuit. They can be seen as artificially engineered stabilizers implemented via ancilla qubits and gates. As such, they can be used to detect errors during circuit execution. Unlike the fully error-corrected regime, the number of Pauli checks can be kept small---they are often used to reject shots where a non-zero syndrome is measured.
The performance of Pauli checks is directly tied to the fraction of errors detected and the number of extra gates added to the initial circuit: a good check will detect a large fraction of errors while introducing few additional gates.

Based on the formalism of spacetime codes, it is possible to find stabilizers distributed in both space and time of a circuit that satisfy the criteria for good checks~\cite{error_detection}.
More precisely, one can identify stabilizers located on the internal wires of a circuit, restricted to a single data qubit. This property ensures that the corresponding syndrome detection gadget can be efficiently implemented using a number of entangling gates exactly matching the weight of the stabilizer, even when the data qubit is only connected to a single ancilla. Figure~\ref{fig:pauli_check} depicts such a space-local stabilizer and the corresponding check implementation. We will now introduce the tools necessary to characterize and assess the quality of spacetime Pauli checks.

\begin{figure}[h!]
    \centering
    \hspace{-2.5cm}\scalebox{0.9}{
        \begin{tikzpicture}
    \newsavebox{\czgate}
    \savebox{\czgate}{
        \draw (0, 0) node[ibmcyan, fill=ibmcyan, circle, inner sep=2pt] {};
        \draw (0, -1) node[ibmcyan, fill=ibmcyan, circle, inner sep=2pt] {};

        \draw[very thick, ibmcyan] (0, 0) -- (0, -1);
    }
    \newsavebox{\hgate}
    \savebox{\hgate}{
        \draw[ibmcyan, fill=ibmcyan, rounded corners] (-0.3, 0.3) rectangle node[white]{$H$} (0.3, -0.3);
    }

    \newsavebox{\sqrtzgate}
    \savebox{\sqrtzgate}{
        \draw[ibmcyan, fill=ibmcyan, rounded corners] (-0.3, 0.3) rectangle node[white]{$S$} (0.3, -0.3);
    }

    \newsavebox{\wirewithdot}
    \savebox{\wirewithdot}{
        \draw (0.5, 0) node[ibmpurple, fill=ibmpurple, rectangle, inner sep=2pt] {};
        \draw[ibmpurple, very thick] (0, 0) -- (1, 0);
    }
    \newsavebox{\wirewithdotacz}
    \savebox{\wirewithdotacz}{
        \draw (0.4, 0) node[ibmpurple, fill=ibmpurple, rectangle, inner sep=2pt] {};
        \draw[ibmpurple, very thick] (0, 0) -- (1, 0);
    }
    \newsavebox{\wirewithdotbcz}
    \savebox{\wirewithdotbcz}{
        \draw (0.6, 0) node[ibmpurple, fill=ibmpurple, rectangle, inner sep=2pt] {};
        \draw[ibmpurple, very thick] (0, 0) -- (1, 0);
    }
    \newsavebox{\longwire}
    \savebox{\longwire}{
        \draw (1, 0) node[ibmpurple, fill=ibmpurple, rectangle, inner sep=2pt] {};
        \draw[ibmpurple, very thick] (0, 0) -- (2, 0);
    }
    \newsavebox{\inputgate}
    \savebox{\inputgate}{
        \draw (0, 0) node[ibmred, fill=ibmred, circle, inner sep=3.5pt] {\textcolor{white}{\tiny in}};
    }
    \newsavebox{\outputgate}
    \savebox{\outputgate}{
        \draw (0, 0) node[ibmred, fill=ibmred, circle, inner sep=2pt] {\textcolor{white}{\tiny out}};
    }
    
    \newcommand{\wirelabel}[3]{
        \draw(#1, #2) node[yshift=0.5cm]{\textcolor{ibmpurple}{$w_#3$}};
    }
    
    \newsavebox{\fullcircuita}
    \savebox{\fullcircuita}{
        \draw (-1, 0) node{\usebox{\wirewithdot}};
        \wirelabel{-0.5}{0}{1};
        \draw (0, 0) node{\usebox{\wirewithdotbcz}};
        \wirelabel{0.6}{0}{2};
        \draw (1, 0) node{\usebox{\longwire}};
        \wirelabel{2}{0}{3};
        \draw (3, 0) node{\usebox{\wirewithdotbcz}};
        \wirelabel{3.6}{0}{4};
        \draw (4, 0) node{\usebox{\wirewithdotacz}};
        \wirelabel{4.5}{0}{5};
        \draw (-1, -1) node{\usebox{\longwire}};
        \wirelabel{0}{-1.1}{6};
        \draw (1, -1) node{\usebox{\wirewithdotacz}};
        \wirelabel{1.5}{-1.1}{7};
        \draw (2, -1) node{\usebox{\longwire}};
        \wirelabel{3}{-1.1}{8};
        \draw (4, -1) node{\usebox{\wirewithdotacz}};
        \wirelabel{4.5}{-1.1}{9};
        \draw (-1, 0) node{\usebox{\inputgate}};
        \draw (-1, -1) node{\usebox{\inputgate}};
        \draw (0, 0) node{\usebox{\hgate}};
        \draw (1, 0) node{\usebox{\czgate}};
        \draw (2, -1) node{\usebox{\hgate}};
        \draw (4, 0) node{\usebox{\czgate}};
        \draw (3, 0) node{\usebox{\sqrtzgate}};
    
        \draw (5, 0) node{\usebox{\outputgate}};
        \draw (5, -1) node{\usebox{\outputgate}};
    }

    \draw (0,0) node{a)};
    \draw (2.7, 1-1.5)  node {$ \Qcircuit @C=1.0em @R=1.2em @!R { 
        &\gate{H} & \ctrl{1} & \qw       & \gate{S}  & \ctrl{1} &\qw\\
        &\qw      & \ctrl{0} & \gate{H}  &  \qw & \ctrl{0} &\qw\\
     }$
    };
    
    \draw (7-0.8,0) node{b)};
    \draw(8, 0) node{\usebox{\fullcircuita}};
    \draw (0,0-2) node{c)};
    \draw (2.7, 1-1.5-2)  node {$ \Qcircuit @C=1.0em @R=1.2em @!R { 
       &\gate{X} &\gate{H} & \ctrl{1} & \qw       & \gate{S}  & \ctrl{1} &\qw\\
        &\gate{X} &\qw      & \ctrl{0} & \gate{H}  &  \gate{Z} & \ctrl{0} &\qw\\
     }$
    };
    \draw (2.7, 1-1.5-2-1)  node {$X_{w_1}X_{w_6}Z_{w_8}$};
    \draw (7,0-2) node{d)};
    \draw (7+2.7, 1-1.5-2)  node {$ \Qcircuit @C=1.0em @R=1.2em @!R { 
       &\gate{H} & \ctrl{1} & \qw       & \gate{S}  & \ctrl{1} &\qw&\qw\\
        &\gate{Y}      & \ctrl{0} & \gate{H}  &  \qw & \ctrl{0} &\gate{Y}&\qw\\
     }$
    };
    \draw (7+2.7, 1-1.5-2-1)  node {$Y_{w_6}Y_{w_9}$};
    \draw (2,0-2-2.5) node{e)};
    \draw (2.7+3.5, 1-1.5-2-2.5)  node {$ \Qcircuit @C=1.0em @R=1.2em @!R { 
       &\qw&\gate{H} & \ctrl{1} & \qw       & \gate{S}  & \ctrl{1} &\qw&\qw&\qw&\qw\\
        &\qw&\gate{Y}      & \ctrl{0} & \gate{H}  &  \qw & \ctrl{0} &\gate{Y}&\qw&\qw&\qw\\
        \lstick{\ket{0}}&\gate{H}& \ctrl{-1}     &\qw & \qw  &  \qw & \qw &\ctrl{-1}&\qw&\gate{H}&\qw&\rstick{\ket{0}}
     }$
    };

\end{tikzpicture}
    }
    \caption{{\bf Spacetime check construction.} {\bf a.} A $2$-qubit Clifford quantum circuit. {\bf b.} Its corresponding set of wires. {\bf c.} An example of a spacetime stabilizer. Notice how this stabilizer is spread in both space and time: it involves both qubits (space spread) and is nontrivial at different circuit depths, possibly requiring swap operations to implement. {\bf d.} Another spacetime stabilizer, this time localized in space: it only involves the second qubit at different circuit depths. {\bf e.} Implementation of a Pauli check using the spacetime stabilizer of (d).}
    \label{fig:pauli_check}
\end{figure}

\paragraph{Circuit wires as spacetime coordinates.} Consider a Clifford circuit $C$. This circuit can be understood as a directed acyclic graph $\mathcal{G}(C)$, or \emph{DAG}, where vertices are gates and edges are input/output dependencies between gates. Since we are interested in adding extra Pauli gates between the existing gates of the circuit, we can consider the \emph{line graph} $\mathcal{G}(C)^\star$ whose vertices represent the input/output relations between the gates. Two connections will be related in this graph if and only if they correspond to an incoming/outgoing pair of connections connected to the same gate. Figure \ref{fig:dag} depicts this construction.
We denote by $\mathcal{W}(C)$ the set of vertices of $\mathcal{G}(C)^\star$. 
Notice that $\mathcal{G}(C)^\star$ is itself acyclic and thus induces a partial order $\leq_C$ on $\mathcal{W}(C)$. This partial order exactly matches the notion of time-dependence between wires in a circuit. For instance, given a single wire $w\in \mathcal{W}(C)$, one can isolate the ``first-half'' of the circuit surrounding all the wires $w' \leq_C w$. This effectively gives access to the reverse light-cone of $w$, denoted $A_w$. Similarly, one can also define the forward light-cone of $w$, $B_w$ as the subcircuit composed of all the gates surrounding wires $w'$ such that $w\leq_C w'$.

\begin{figure}[t]
    \centering
    \hspace{-4.5cm}\scalebox{0.6}{
        \input{figures/spacetime}
    }
    \caption{\textbf{Wires as spacetime coordinates of a Clifford circuit.} {\bf a.} A quantum circuit $C$. {\bf b.} This same circuit with $8$ additional dummy gates representing inputs and outputs (in red). Wires are highlighted in purple. {\bf c.} The underlying directed acyclic graph (DAG) $\mathcal{G}(C)$. Its set of vertices is composed of all the gates, including the dummy gates. Its edges are exactly the input/output relations between the circuit's gates. {\bf d.} The line graph $\mathcal{G}(C)^\star$ of the underlying DAG. This graph is another DAG. Its vertices are the edges of $\mathcal{G}(C)$. Two vertices are connected if and only if the corresponding edges form a directed path of length 2 in $\mathcal{G}(C)$.}
    \label{fig:dag}
\end{figure}

Each wire in a given circuit carries the information of a single qubit's state. As such, one can forget that those qubits are logically related through the circuit's gates and consider a global spacetime Hilbert space $\mathcal{H}(C)$ composed of qubit spaces attached to each wire in the circuit:
\begin{align}
    \mathcal{H}(C) = \bigotimes_{w\in\mathcal{W}(C)} \mathbb{C}^2
\end{align}

We can now define a Pauli operator over $\mathcal{H}(C)$. This operator will have one Pauli component per wire in the circuit.  Let $P$ be some single-qubit Pauli operator and $w\in \mathcal{W}(C)$. We denote by $P_w$ the Pauli operator $P\otimes \left(\bigotimes_{x\in\mathcal{W}, x\neq w} I\right)$, or, in other words, the Pauli operator acting as $P$ on qubit $w$ and as the identity on every other qubit. To simplify notation in some settings, we might interchangeably use $(P, w)$ or $P_w$ to refer to this same operator.
We are now interested in characterizing the subgroup of Pauli operators such that adding one of those operators to the circuit does not change the implemented Clifford operator.

\paragraph{Valid check condition.} Given a single wire $w\in \mathcal{W}(C)$ and a Pauli $P_w$ located on that wire, one can back-propagate $P_w$ to the input wires of the circuit by conjugating it via the reverse light-cone $A_w$. We denote this operator by $B(P_w)$. This operator is supported only on the input wires of the circuit and is such that $C\cdot B(P_w)$ is equivalent to $C$ with an extra single Pauli operator $P$ on wire $w$.

Given a collection ${P_1}_{w_1}, ..., {P_k}_{w_k}$, we want to express that adding those extra Paulis in $C$ doesn't modify $C$. As per the statement above, this is equivalent to: $$C\cdot \prod_{1\leq i\leq k} B({P_i}_{w_i}) = C \Leftrightarrow \prod_{1\leq i\leq k} B({P_i}_{w_i}) =I$$

In other words, one can ``sprinkle'' Pauli operators throughout the wires of a Clifford circuit $C$ as long as their back-propagators amount to the identity. Those arrangements of Paulis are exactly the \emph{spacetime stabilizers} of $C$. 
To ease notation, we will denote by $B(E) = \prod_{P_w \in E} B(P_w)$ the back-propagator of a collection of Pauli operators spread throughout $C$. Notice that one could have obtained the same condition by propagating each Pauli through its forward light-cone $B_w$ instead. We denote the resulting forward-propagator $F(E)= \prod_{P_w \in E} F(P_w)$.

Notice that the spacetime stabilizers of $C$ form a subgroup of the Pauli group over $\mathcal{H}(C)$. As such, one can use the Symplectic encoding of Pauli operators \cite{calderbank1998quantum,gottesman1997stabilizer} to map spacetime Pauli operators to vectors $v\in \mathbb{F}_2^{2|\mathcal{W}(C)|}$. The condition of being a spacetime stabilizer then naturally boils down to satisfying a set of linear equations of the form:
\begin{align}\label{eq:kernel}
    B\cdot v &= 0
\end{align}
where the columns of $B$ are exactly the symplectic encodings of the back propagators of all possible single Pauli $X/Y/Z$ on all possible wires.

Equation \ref{eq:kernel} characterizes the spacetime stabilizers of the circuit, without any assumption with respect to the input state of the circuit. In our setting, we are interested in running the target circuit on the trivial stabilizer state $\ket{0^n}$. Under this additional relaxation we change Equation  \ref{eq:kernel} into:
\begin{align}\label{eq:kernel_stab}
    N\cdot B\cdot v &= 0
\end{align}
where $N$ is the nullspace of the stabilizer group of the input state. When this state is $\ket{0^n}$, this boils down to forgetting about the last $n$ rows of $B$ (i.e. the $Z$ components of the back-propagators).

\paragraph{Implementing checks.} Once a valid spacetime stabilizer $({P_i}_{w_i})_{1\leq i \leq k}$ is picked, it can be turned into a Pauli check by allocating an extra ancilla $a$ in state $\ket{+}$ and adding $k$ extra controlled Pauli gates of the form $\Lambda_a P_i$ from $a$ to the appropriate wire (see Figure \ref{fig:pauli_check}). 
This construction assumes that the ancilla $a$ can be coupled with the qubits corresponding to those wires. Notice that the number of extra (entangling) gates added to the circuit corresponds exactly to the weight of the check.

\paragraph{Spatially constrained checks.} In practice, we want to find spacetime Pauli checks
that can be natively implemented given a target architecture. In order to negate any unexpected overhead due to transpilation, we proceed as follows.
We start by mapping the target Clifford circuit to the architecture layout. In our experiment, we generated the target circuit using a brickwork construction which can be easily mapped to any connected path of qubits. Then, to each qubit $a$ sitting next to the chosen path of qubits, we associate its neighboring data qubit $v_a$. Once $v_a$ is isolated, we can use the framework described above to characterize the spacetime stabilizer subgroup supported only on the circuit wires corresponding to qubit $v_a$. This ensures that, independently from the chosen stabilizer, the corresponding check can be implemented trivially.

\begin{figure}[h]
    \centering
    \scalebox{0.75}{
        \begin{tikzpicture}
    \foreach \n in {0,...,11}{
        \draw (\n, 0) node[circle, draw, ibmblue, fill=ibmblue](n\n){};
    }
    \draw[ibmblue, very thick](0,0) -- (11, 0);
    \draw (5, -1) node[circle, draw, ibmpurple, fill=ibmpurple](a1){};
    \draw[ibmpurple, very thick](5, -1) --(5, 0);
    \draw (5, 0) node[circle, draw, ibmgreen, fill=ibmgreen](a1){};

    \draw (7, -1) node[circle, draw, ibmpurple, fill=ibmpurple](a1){};
    \draw[ibmpurple, very thick](7, -1) --(7, 0);
    \draw (7, 0) node[circle, draw, ibmgreen, fill=ibmgreen](a1){};

    \draw (9, -1) node[circle, draw, ibmpurple, fill=ibmpurple](a1){};
    \draw[ibmpurple, very thick](9, -1) --(9, 0);
    \draw (9, 0) node[circle, draw, ibmgreen, fill=ibmgreen](a1){};
    \draw[ibmblue, very thick, dashed] (0, 0) .. controls (1, 1) and (8, 2) .. (11, 0);
\end{tikzpicture}
    }
    \caption{{\bf Check picking layout.} The data qubits (in blue and green) are arranged in a loop. Any adjacent ancilla qubit (purple) can be used to implement a spatially local check supported on a single data qubit (green). This method is easily generalizable to any connectivity of data and ancilla qubits.}
    \label{fig:check_layout}
\end{figure}

\paragraph{Finding low-weight checks.} Since the implementation cost of checks is directly tied to the weight of the underlying spacetime stabilizer, we are interested in finding low-weight spacetime stabilizers. This problem, however, is notably hard in the general case~\cite{vardy1997intractability}. In practice, we rely on greedy heuristics. We start from a random low-weight Pauli and greedily add extra Paulis by trying to reduce the weight of the corresponding back-propagator. When successful, this approach provides us with a valid stabilizer that is then used to build a Pauli check (see below).

\paragraph{Scoring checks.} Once a set of low-weight checks is computed, we need a tool to differentiate their performance. This performance assessment relies on computing another Pauli operator, called the \textit{back-cumulant}, acting on $\mathcal{H}(C_i)$ where $C_i$ is the Clifford circuit obtained by implementing the $i$th check candidate. The \textit{back-cumulant} of a check qubit $q$, denoted $\overleftarrow{B}(C, q)$,  is obtained by back-propagating the final $Z_q$ measurement of the check qubit while tracking the support of that Pauli on each intermediate wire. Figure \ref{fig:full_example_cumulant} depicts such a back-cumulant. Once this operator is computed, and given a Pauli noise model, one can sample many faults and see if they anticommute with the check's back-cumulant, giving us a direct detection criterion~\cite{delfosse2023simulation}. 

\begin{figure}[h]
    \centering
    \includegraphics[scale=0.85]{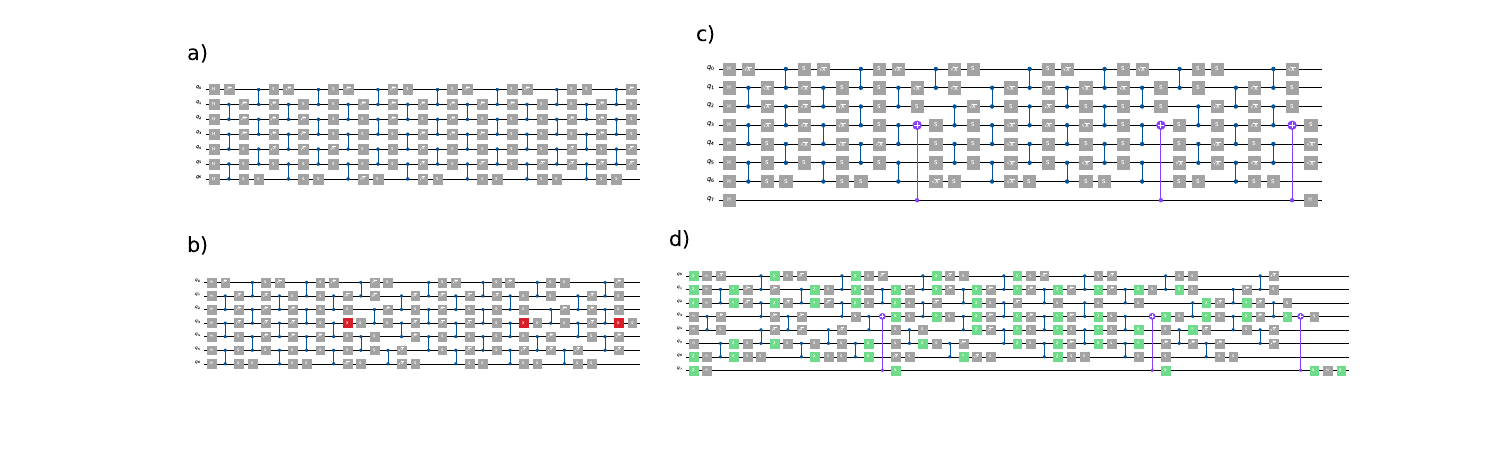}
    \caption{\textbf{Large circuit coverage with a low-overhead check.} {\bf a.} An example circuit with entangling gates highlighted. {\bf b.} Circuit with a spacetime Pauli operator inside it highlighted in red, forming a valid check of the circuit. {\bf c.} Check circuitry added: an ancilla in \( \ket{+} \) state is entangled to the circuit according to the valid check, and is measured in the \( X \) basis. {\bf d.} Check's back-cumulant highlighted in green. The check will detect any fault that anticommutes with its back-cumulant.}
    \label{fig:full_example_cumulant}
\end{figure}

In practice, we model hardware noise using a Lindblad local Pauli noise where each $2-$qubit gate is followed by $15$ individual channels of shape $\mathcal{E}(\rho) = w_k \rho + (1-w_k) P_k \rho P_k^\dagger$, for  each non-trivial Pauli $P_k$ and where $w_k = \frac{1 + e^{-2\lambda_k}}{2}$. Given such a noise model, we use the quantity $\Gamma = e^{2\sum_k\lambda_k}$ as proxy to the infidelity of the resulting circuit.

When scoring a new check, we first prune the model by setting $\lambda_k=0$ for any $P_k$ that anticommutes with at least one of the check back-propagators (including the previously picked checks). Once the noise model is pruned, we compute the resulting $\Gamma$ and use this as the cost for the new check. This amounts to considering a first order approximation of the residual noise.

\paragraph{Check picking heuristic and implementation.} We use the following heuristic to enumerate a set of low-weight spatially local stabilizers. We first isolate the set of wires $\mathcal{W}_q$ corresponding to the target data qubit $q$. We then pick a fraction $f$ of those wires, obtaining $\mathcal{W}'\subset \mathcal{W}_q$, and pick two random wires $w_1, w_2\in \mathcal{W}'$. We then enumerate the $9$ possible placements of Paulis on those two wires and run a greedy decoding algorithm to modify the Pauli operator into a valid spacetime stabilizer. In practice, we use $f=0.5$ and we iterate the procedure 1000 times, generating up to $9000$ stabilizers. A Rust implementation of this check picking algorithm, together with the noise modeling tools used in this work, is freely available~\cite{qiskit-paulice}.

\subsection{Increasing magic by $T$-doping within the code}
\label{sec:t-doping}

It is possible to extend the spacetime stabilizer formalism to describe stabilizers in the presence of non-Clifford rotations, such as $T$ gates, in the main Clifford skeleton. In such cases, Equations \ref{eq:kernel} and \ref{eq:kernel_stab} become even more constrained, and stabilizer encodings will satisfy:

\begin{align}
     \begin{bmatrix}N\cdot B\\F\end{bmatrix} \cdot x &= 0 
\end{align}

where $F$ is a matrix obtained by computing the commutation relations between single Pauli propagators and $T$ gates. That is $F_{i,j} =1$ if and only if the $i$th $T$ gate commutes with the Propagator of the $j$th single Pauli considered when building $B$.
As such, the dimension of the stabilizer group will decrease linearly with the number of such new constraints (i.e. the number of $T$ gates).

To avoid this caveat, we use a $T$-doping framework: we first choose a random circuit instance $C$, pick a collection of spacetime Pauli checks according to the available ancilla, producing another Clifford circuit $C'$. We then compute the set $\mathcal{T}$ of internal wires $w$ such that $Z_w$ commutes with all the checks' back-cumulants. By construction, any $T$ gate added on $w\in \mathcal{T}$ will commute with the added checks and preserve the code.

In practice, we compute all the valid $T$ locations that directly follow a $CZ$ gate and prune this set of locations using three criteria. We first remove any rotation that commutes with all previous rotations and propagates into a diagonal rotation on the circuit input. Similarly, we remove any rotation that commutes with all following rotations and propagates into a diagonal rotation in the circuit output. Indeed, the first set of rotations would act trivially since they belong to the spacetime stabilizer group of the doped circuit while the latter set would commute with final $Z-$measurements. Lastly, if two rotations are equivalent (i.e. the first forward propagates to the second) and commute with all rotations lying in between them, we remove one of the two at random.
This process is iterated until no such pattern can be found.

\section{Bounding state fidelity after $T$-doping} \label{sec:sfe}

In this section we prove the claimed lower bound on the fidelity of a non-stabilizer state prepared by a doped Clifford circuit, given (efficient) access to the fidelity of the Clifford state, and the syndromes of its spacetime code.

\begin{theorem}\label{thm:fidelity}

Let \(C_1\) be an \(n\)-qubit random Clifford circuit of depth \(n\), preparing
\(\ket{\psi_1}=C_1\ket{0^n}\), equipped with spacetime Pauli checks implemented on qubits $(q_i)_{1\leq i \leq m}$.  Let \(C_2\) be obtained by doping \(C_1\) with
check-commuting \(T\) gates at spacetime wires \(\{w_a\}_{a=1}^t\), satisfying
\[
[Z_{w_a},\overleftarrow{B}(C_1, q_i)]=0
\qquad
\forall a=1,\dots,t,\quad \forall i=1,\dots,m .
\]

Define accepted and harmless spacetime faults in the Clifford circuit 

\[
\mathcal{A}
:=\left\{
E  \,\middle|\,
[E,\overleftarrow{B}(C_1,q_i)]=0
\quad \forall i\in[m]
\right\},
\qquad
\mathcal{H}
:=\left\{
E\in\mathcal{A}\setminus\{I\}
\,\middle|\,
F(E)\in\operatorname{Stab}(\lvert\psi_1\rangle)
\right\}.
\]

Then the post-selected fidelity of the doped state
\(\ket{\psi_2}=C_2\ket{0^n}\) 
is lower bounded by
\begin{align}
F_2
&\ge F_1-\Pr(E\in\mathcal{H}\mid E\in\mathcal{A}),
\label{eq:fidelity-bound}
\end{align}

\noindent where, under a circuit-independent Pauli noise model

\begin{equation}
\Pr(E\in\mathcal H\mid E\in\mathcal A)
=
O(n^{-c})
\label{eq:harmless-probability}
\end{equation}

\noindent for some constant $c > 0$.

\end{theorem}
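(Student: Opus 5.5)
The bound has two parts: the fidelity inequality~\eqref{eq:fidelity-bound} and the asymptotic estimate~\eqref{eq:harmless-probability}. For the first, I model the noisy execution under Pauli twirling as a stochastic mixture of spacetime Pauli faults $E$, drawn from a distribution $p(E)$. Since $Z$ rotations are virtual, $p(E)$ is the same for $C_1$ and $C_2$.

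\emph{Step 1: acceptance is unchanged by doping.} Each $T$ at $w_a$ equals $\cos(\pi/8)I - i\sin(\pi/8)Z_{w_a}$, up to phase. By hypothesis $Z_{w_a}$ commutes with every back-cumulant $\overleftarrow{B}(C_1,q_i)$. So the check measurement outcomes, and hence the event $E\in\mathcal{A}$, are identical for $C_1$ and $C_2$, and both circuits have the same post-selected distribution $p(E\mid\mathcal{A})$. The check gadgets project both output states onto the same code subspace, and doping only rotates within it.

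\emph{Step 2: split the post-selected state by fault type.} Write the accepted post-selected state of $C_k$ as $\sum_{E\in\mathcal{A}} p(E\mid\mathcal{A})\,\rho_k(E)$, where $\rho_k(E)$ is the data output when fault $E$ is inserted. Then
\[
F_k=\sum_{E\in\mathcal{A}} p(E\mid\mathcal{A})\,\langle\psi_k|\rho_k(E)|\psi_k\rangle .
\]
Partition $\mathcal{A}$ into three sets: $\{I\}$, $\mathcal{H}$, and the harmful accepted faults $\mathcal{A}\setminus(\mathcal{H}\cup\{I\})$.
\begin{itemize}
\item For $E=I$, both overlaps equal $1$.
\item For $E\in\mathcal{H}$, the Clifford overlap is $1$, since $F(E)$ stabilizes $\ket{\psi_1}$. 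The doped overlap is at least $0$.
\item For harmful $E$, $F(E)$ is a Pauli outside $\pm\mathrm{Stab}(\ket{\psi_1})$. It therefore either anticommutes with some stabilizer or equals $\pm i$ times one, so $\langle\psi_1|F(E)|\psi_1\rangle=0$ and the Clifford contribution vanishes.
\end{itemize}
This gives $F_1=p(I\mid\mathcal{A})+\Pr(\mathcal{H}\mid\mathcal{A})$ and $F_2\ge p(I\mid\mathcal{A})$, which is~\eqref{eq:fidelity-bound}. The one point needing care is that $\rho_k(E)$ is pure for a deterministic Pauli fault. A coherent fault acting through the $T$ gates still gives some unitary $U_E$ with a nonnegative overlap, which is all the argument uses.

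\emph{Step 3: the asymptotic estimate, which is the main obstacle.} A harmless fault needs $F(E)\in\mathrm{Stab}(\ket{\psi_1})$, and $\mathrm{Stab}(\ket{\psi_1})$ contains $2^n$ of the $4^n$ Paulis. Conditioning on acceptance only removes faults that anticommute with the $m$ checks. I would split by fault weight.

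For a single fault at wire $w$, the forward propagator through a depth-$n$ random Clifford brickwork is, for $w$ not in the last $O(\log n)$ layers, close to uniform on the non-identity Paulis of its growing light cone. Its probability of stabilizing $\ket{\psi_1}$ is then about $2^{-|\mathrm{supp}|}$. Averaging over locations, only wires within depth $O(\log n)$ of the output contribute non-negligibly. There are $O(n\log n)$ such wires out of $\Theta(n^2)$, which gives a first-order contribution $O(n^{-c})$ relative to $\Pr(\mathcal{A})$. The same holds for accepted faults that are $Z$-type on final wires, since these are absorbed by $Z$-measurement.

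For faults of weight $k\ge2$, sum over $k$ with $p(E)\le\prod_j\epsilon_j$ and use the same uniformity of the products of propagators, combined with $\Pr(I\mid\mathcal{A})$ being bounded below at fixed noise strength. The hard part is making the near-uniformity of propagation rigorous for the structured brickwork, as opposed to a Haar-random Clifford. I would first try a second-moment anticoncentration argument for random local Cliffords on a ring, and fall back on depth-$n$ scrambling results for Pauli weight growth.
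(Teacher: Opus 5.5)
Your Steps 1 and 2 follow the paper's argument. $\mathcal{A}$ is invariant because the doping sites commute with the back-cumulants. Splitting accepted faults into $\{I\}$, $\mathcal{H}$ and harmful ones, with $f_1\in\{0,1\}$ and $0\le f_2\le 1$, then gives~\eqref{eq:fidelity-bound}. One small slip: a Pauli equal to $\pm i$ times a stabilizer has overlap of modulus $1$, not $0$. That case is vacuous for Hermitian faults, and $\mathcal{H}$ has to be read up to sign.

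The genuine gap is in Step 3: the mechanism you propose is wrong, and it places the harmless faults at the wrong boundary. For a fault at a single location, write $C_1=C_{\mathrm{after}}C_{\mathrm{before}}$ around it. Then $F(E)=C_{\mathrm{after}}EC_{\mathrm{after}}^\dagger$ stabilizes $\ket{\psi_1}$ iff $E$ stabilizes the intermediate state $C_{\mathrm{before}}\ket{0^n}$, equivalently iff $B(E)$ is diagonal. This condition does not depend on how much $C_{\mathrm{after}}$ scrambles $E$. So treating $F(P_w)$ as uniform and independent of $\ket{\psi_1}$ is not legitimate, and the estimate fails in both directions:
\begin{enumerate}
\item A fault such as $X_iZ_j$ right after a first-layer $CZ_{ij}$ stabilizes $CZ_{ij}\ket{+}\ket{+}$. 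It therefore lies in $\mathcal{H}$ whenever accepted, for every circuit in the family. Yet its forward image has support $\Theta(n)$, so your estimate gives it probability about $2^{-n}$.
\item Faults in the last $O(\log n)$ layers have low-weight forward images. But the output graph state is near-maximally entangled and has no low-weight stabilizers, so these faults are essentially never harmless.
\end{enumerate}
Likewise, $Z$ faults on final wires are not in $\mathcal{H}$, since the theorem concerns state fidelity, not $Z$-basis sampling.

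The paper works on the input side, where the backward light cone is small. Writing $B(P_j)=X^{x_j}Z^{z_j}$, harmlessness becomes $x(E)=\sum_{j\in\operatorname{supp}(E)}x_j=0$. The paper then bounds three regimes:
\begin{enumerate}
\item Single faults satisfy this systematically only at $O(n)$ locations near the input, out of $N=\Theta(n^2)$, giving $a_1(n)=O(n^{-1})$.
\item For fixed $w\ge2$, an assumed bounded multiplicity of collisions of pulled-back $X$-patterns gives $a_w(n)=O(N^{-1})$.
\item In the high-weight mixing regime, $a_w(n)=O(2^{-n})$.
\end{enumerate}
Your location-counting would go through once moved to the input side.

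The paper combines these as $\Pr(\mathcal{H}\mid\mathcal{A})=\sum_w\Pr(W=w\mid\mathcal{A})\,a_w(n)$, i.e.\ it conditions on the accepted fault weight. Your alternative is to bound a weight-by-weight sum and divide by $\Pr(I\mid\mathcal{A})$. That needs $\Pr(I\mid\mathcal{A})$ bounded below, which holds only when the expected number of accepted faults is $O(1)$. At a fixed per-location error rate, with $\Theta(n^2)$ locations and a constant undetected fraction, $\Pr(I\mid\mathcal{A})$ instead decays exponentially in $n^2$.
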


\begin{proof}

The proof follows 3 steps:

\begin{enumerate}
    \item We show that the equality of post-selection rates implies the noiseless implementation of T gates.
    
    \item Assuming identical noise affecting $C_1$ and $C_2$, we show that the fidelity drop due to doping is capped by the probability of harmless faults in $C_1$, i.e. those accepted non-trivial faults that stabilize $\ket{\psi_1}$.

    \item We show that the probability of harmless-non-identity errors in random linear-depth Clifford circuits is inverse polynomial in the number of qubits.
\end{enumerate}

\subsection{Bounding the noise on $T$ gates} As a first step, assume that each $T$ gate is affected by a small amount of noise taking the form of a single-qubit uniform depolarizing channel $\mathcal{E}$ right after the $T$ gate:
$$ \mathcal{E}(\rho) = (1 - 3\varepsilon)\rho + \sum_{P\in\{X, Y, Z\} }\varepsilon P \rho P . $$
Let us further assume that among the $t$ $T$ gates used for doping, $k$ of them are covered by some Pauli check. This entails that, for those gates, $\frac{2}{3}$ of the errors of the corresponding channel will be detected by at least one check.
Denote the acceptance probability of outcomes in the doped circuit by $p_2(\varepsilon)$. At first order, we have that:
\begin{align}\label{eq:low_tnoise_influence}
    p_2(\varepsilon) &= p_2(0) - 2\varepsilon k + O(\varepsilon^2)
\end{align}

We can show that the acceptance probability (i.e. post-selection rate) of the undoped experiment, $p_1$, and of the doped-with-perfect-T-gates experiment, $p_2(0)$ are identical.
This comes directly from the fact that $T$ gates commute with the Pauli checks by construction. Indeed, for any doped Clifford circuit containing Pauli noise, we can propagate faults to the end of the circuit by:
\begin{itemize}
    \item commuting the fault through Clifford gates by conjugating it,
    \item commuting the fault through $T$ gates by spawning a $S^\dagger$ gate after the $T$ gate
\end{itemize}
Overall, the end faulty circuit will have shape:

$$ P \prod R_Q(\pi/2) C_2 $$

\noindent where $C_2$ is the ideal doped circuit, each $Q$ is a multi-qubit operator obtained by propagating a $Z$ from each $T$ gate that anticommutes with the fault and $P$ is the final propagated fault. Since the $T$ gates commute with the checks, we have that the check measurements will trivially commute with the $R_Q(\pi/2)$ rotations. Hence, the post-selection rate of the doped circuit with perfect $T$ gates matches exactly that of the noisy Clifford circuit.

Using this result together with Equation \ref{eq:low_tnoise_influence} gives us:
\begin{align}
    p_2(\varepsilon) &= p_1 - 2\varepsilon k + O(\varepsilon^2)
\end{align}

Since we do not observe any post-selection rate drop between the doped and undoped experiments and since $k$ is large, we conclude that $\varepsilon$ is close to $0$ and that $T$ gates can be considered noiseless.

\subsection{Lower bounding the doped state's fidelity}

In addition to the accepted ($\mathcal{A}$) and harmless ($\mathcal{H}$) fault paths defined above, let us define the harmful fault paths:

\[
\mathcal{L}:=\{E\in\mathcal{A}:E\neq I,\;
F(E)\notin\operatorname{Stab}(\ket{\psi_1})\}.
\]

Note that harmless and harmful sets are defined with respect to the Clifford circuit. While they can change after doping, the accepted fault set remains invariant. This is due to the fact that T gates commute with checks, therefore preserving their back-cumulants:

\[
[E,\overleftarrow{B}(C_2, q_i)]=[E,\overleftarrow{B}(C_1, q_i)]=0 .
\]

For an accepted Pauli fault path \(E\), define its impact on the output fidelity as
\begin{equation*}
\begin{aligned}
f_1(E)&=|\langle\psi_1|F(E)|\psi_1\rangle|^2,
\qquad
f_2(E)&=|\langle\psi_2|\widetilde{F}(E)|\psi_2\rangle|^2,
\end{aligned}
\end{equation*}

\noindent where the second equation contains forward-propagation $\widetilde{F}$ through a non-Clifford circuit. Let $p(E)$ denote the probability of fault path $E$. Then, the post-selected fidelities satisfy,
\[F_2 - F_1 = \frac{\sum_{E \in \mathcal{A}}{p(E) (f_2(E) - f_1(E))}}{p_{\mathrm{acc}}}
\]

Given that the T gates are noiseless, the spacetime error distribution $p(E)$ is the same in both circuits. Furthermore, trivial (identity) errors contribute equally to both fidelities. Therefore we can write the fidelity difference between the doped and Clifford circuits as

\begin{equation*}
F_2 - F_1 = \frac{\sum_{E \in \mathcal{H}}{p(E) (f_2(E) - f_1(E))} + \sum_{E \in \mathcal{L}}{p(E) (f_2(E) - f_1(E))} }{p_{\mathrm{acc}}}
\end{equation*}

Since $\ket{\psi_1}$ is a stabilizer state, 
\[
f_1(E)=
\begin{cases}
1 & E\in \mathcal{H}\\
0 & E\in \mathcal{L}
\end{cases}
.\]

Therefore,

\begin{equation*}
 F_2 - F_1   = - \frac{\sum_{E \in \mathcal{H}}{p(E) (1 - f_2(E))} }{p_{\mathrm{acc}}} + \frac{\sum_{E \in \mathcal{L}}{p(E) f_2(E)} }{p_{\mathrm{acc}}} .
\end{equation*}

Note that $0 \leq f_2(E) \leq 1$. Therefore the first term lower bounds the fidelity difference and the second term upper bounds it. We are only interested in the lower bound --- any fidelity improvement due to doping only helps our verification process. We therefore get the following lower bound on the fidelity of the doped state with respect to the Clifford state:

\begin{equation}
F_2 \geq F_1 - \frac{\Pr(E \in \mathcal{H})}{\Pr(E \in \mathcal{A})} = F_1 - \Pr(E \in \mathcal{H} \mid E \in \mathcal{A})
\end{equation}

\subsection{Bounding the probability of harmless errors}

It remains to prove that $\Pr(E \in \mathcal{H} \mid E \in \mathcal{A})$ is small under reasonable assumptions.
Our two assumptions are the following (informal):
\begin{enumerate}
    \item \textbf{Circuit independence.} The noise model only depends on the size of the circuit and not on the gates within it. In our case, this is a very loose assumption: it entails that for fixed $n$, two different circuits drawn from our random circuit model will be subject to the same noise models.
    \item \textbf{Space-time Pauli noise.} We further assume that the noise acting on the wires of the target circuit is Pauli.
\end{enumerate}

More formally, fixing $n$, consider a circuit $C$ drawn from our family of Clifford circuits. Let $\mathcal{W}(C)$ be the entire set of wires and subset $\tilde{\mathcal{W}}(C)\subseteq \mathcal{W}(C)$ be the set of wires that immediately follow $CZ$ or $\sqrt{X}$ gates. Denote the corresponding Hilbert space as $\tilde{\mathcal{H}}(C)$. Notice that for any circuits in our family, the set of wires $\tilde{\mathcal{W}}(C)$ always has the same size. In particular, given two circuits $C_1$ and $C_2$ in our family, we can always build a bijection $\sigma:  \tilde{\mathcal{W}}(C_1) \rightarrow \tilde{\mathcal{W}}(C_2)$ that is monotonic w.r.t $\leq_{C_1}$ and $\leq_{C_2}$, that is, $\forall w, w'\in\tilde{\mathcal{W}}(C_1)$, $w\leq_{C_1} w' \implies \sigma(w)\leq_{C_2} \sigma(w')$.

Thus, there is a natural morphism between Pauli operators acting on $\tilde{\mathcal{H}}(C_1)$ and $\tilde{\mathcal{H}}(C_2)$.
This entails that one can consider just a single reference circuit $C_{\textrm{ref}}$, and any Pauli operators on $\tilde{\mathcal{H}}(C_{\textrm{ref}})$  can be mapped to a Pauli operator acting on wires within any instance $C_i$ of the same family. For simplicity, we take the reference circuit to be the Clifford circuit obtained when no $S$ gates are inserted.

Define a Pauli noise model supported on $\tilde{\mathcal{H}}(C_{\textrm{ref}})$ as stochastic circuit independent if it is invariant under our circuit switching morphisms, i.e. if the noise does not depend on the presence of $S$ gates in the circuit.
Notice that this class of noise models can contain heavily correlated errors, both spatially and temporally.

Consider Pauli error $E \in \tilde{\mathcal{H}}(C_{\textrm{ref}})$ induced by such a noise model. $E$ is deemed harmless if and only if it back-propagates to some diagonal Pauli in $C_{\textrm{ref}}$. We claim that any harmless error occurring in $C_{ref}$ is unlikely to stay harmless for a random circuit from our family. This is because any random circuit in our family can be constructed by inserting $S$ gates with probability $\frac{1}{2}$ after each $CZ$ gate in $C_{\textrm{ref}}$.  So, on average, a back-propagator will cross paths with $O(n^2)$ such $S$ gates and pick up an additional $Z$ term at that spacetime location - which, when fully back-propagated, will turn into a random global Pauli.
As claimed, then, a typical harmless error for $C_{\textrm{ref}}$ is unlikely to be harmless for any other circuit. 

This argument generalizes to any pair of circuits in our family, suppressing the possibility that an adversarial model contains arbitrary heavy-weight spacetime stabilizers. Overall, the only contribution will come from errors that do not cross many $S$ locations in their back-propagator. These errors, by definition, are low weight and positioned at the boundary of the circuit.


Now we provide a counting argument to bound the probability of harmless faults. For each elementary (i.e., weight-1) spacetime Pauli fault \(P_j\), write
\[
B(P_j)=X^{x_j}Z^{z_j},
\qquad
x_j,z_j\in\mathbb{F}_2^n.
\]

For a fault path \(E\), define its total pulled-back \(X\)-component as
\[
x(E):=\sum_{j\in\operatorname{supp}(E)}x_j.
\]
Hence, we can redefine the set of harmless faults as
\[
\mathcal{H}
=
\left\{
E\in\mathcal{A}\setminus\{I\}
\,\middle|\,
x(E)=0
\right\}.
\]

Let \(W\) denote the fault weight, and define
\[
a_w(n)
:=
\Pr\!\left(
x(E)=0
\,\middle|\,
E\in\mathcal{A},\ W=w
\right).
\]
The harmless-fault probability can then be decomposed as
\[
\Pr(E\in\mathcal{H}\mid E\in\mathcal{A})
=
\sum_{w\geq 1}
\Pr(W=w\mid E\in\mathcal{A})\,a_w(n).
\]

Let \(N=\Theta(n^2)\) denote the number of noisy spacetime locations in the
\(n\)-qubit, depth-\(n\) brickwork circuit. For \(w=1\), harmlessness requires
\[
x_j=0.
\]
The systematic solutions are boundary faults near the input, of which there are
\(O(n)\), compared with \(\Theta(N)=\Theta(n^2)\) possible single faults.
Under the assumptions of the theorem, this gives
\[
a_1(n)=O(n^{-1}).
\]

For any fixed \(w\geq 2\), harmlessness requires
\[
x_{j_1}+x_{j_2}+\cdots+x_{j_w}=0.
\]
After fixing \(w-1\) faults, the final fault must realize a prescribed
pulled-back \(X\)-pattern. For a random Clifford circuit, we assume that such
collisions occur with bounded multiplicity~\cite{webb2015clifford}. Therefore,
under the assumptions of the theorem, the probability of a harmless
weight-\(w\) path is suppressed by a factor \(O(1/N)\), yielding
\[
a_w(n)
=
O(N^{-1})
=
O(n^{-2}),
\qquad
w\geq 2,
\]
for fixed \(w\).

At high fault weights, the accumulated pulled-back \(X\)-component is expected
to mix approximately uniformly over \(\mathbb{F}_2^n\)
~\cite{brown2012scrambling}. Since harmlessness requires
this vector to equal the unique zero vector,
\[
a_w(n)=O(2^{-n})
\]
in the high-weight mixing regime.

Therefore, when the accepted fault distribution is dominated by low weights,
the harmless-fault probability is inverse polynomial in \(n\), whereas when it
is dominated by the high-weight mixing regime, it is exponentially small.

\end{proof}

Theorem~\ref{thm:fidelity} states that the fidelity loss due to doping is upper-bounded by an asymptotically small quantity. This quantity depends on both the number of qubits $n$ and the underlying physical noise distribution. For a local depolarizing noise model with strength $p$, the fault weights follow a binomial distribution $W \sim Binomial(N, p)$, where the scaling can fall in different regimes depending on whether the dominant fault weights $Np$ are small or large. We numerically confirm the scaling predicted by Theorem~\ref{thm:fidelity} in Figure~\ref{fig:harmless}. 

\begin{figure*}[h]
\centering
\includegraphics[width=\textwidth]{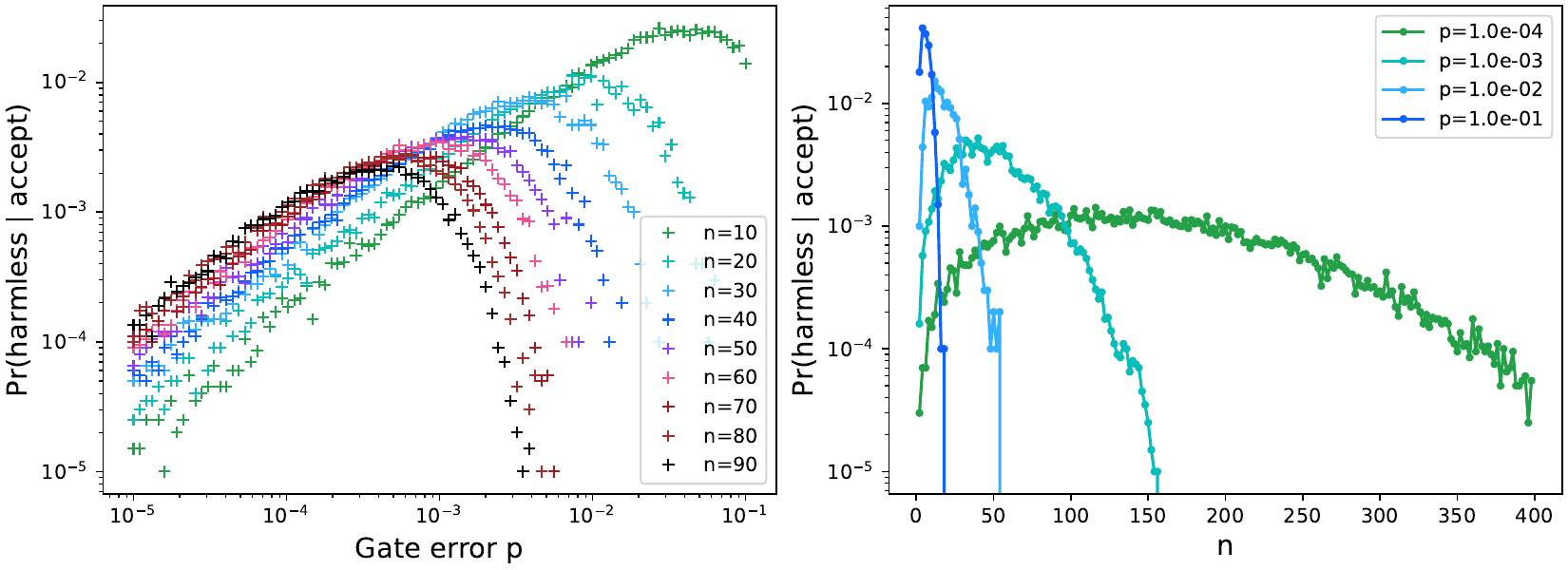} 
\caption{
{\bf  Bounding the probability of harmless faults in $n \times n$ Clifford circuits.} We sweep gate error rates and number of qubits. The probability of harmless and accepted faults is very small. Importantly, it gets exponentially small with $n$ and $p$ beyond a critical point.} \label{fig:harmless}
\end{figure*}

While Theorem~\ref{thm:fidelity} asymptotically bounds the drop in fidelity, we can also estimate this bound for the particular size of experiment that we run.  Since the bound only depends on the Clifford circuit (its fidelity and probabilities of harmless, accepted faults), we can do so efficiently. We use Monte Carlo simulation of faults drawn according to an underlying Pauli noise model (sweeping different noise strengths and polarizations). For the particular $64 \times 73$ circuit used in experiments, we estimate a maximum possible fidelity loss of around $1\%$.

\subsection{Generalization to coherent noise}
\label{subsec:coherent-noise}

We now generalize the fidelity relation to coherent
unitary noise.  The essential difference from stochastic Pauli noise
is that distinct Pauli fault paths occur with complex amplitudes and
may therefore interfere.

As Figure~\ref{fig:coherent-sim} numerically shows, the post-doping fidelity loss due to coherent errors stays small, which we attribute to the random circuit itself twirling the noise. Further Pauli twirling of gates completely removes the effect of coherent errors and recovers the stochastic Pauli noise model. Next, we formalize these observations.

\begin{figure*}[h]
\centering
\includegraphics[width=\textwidth]{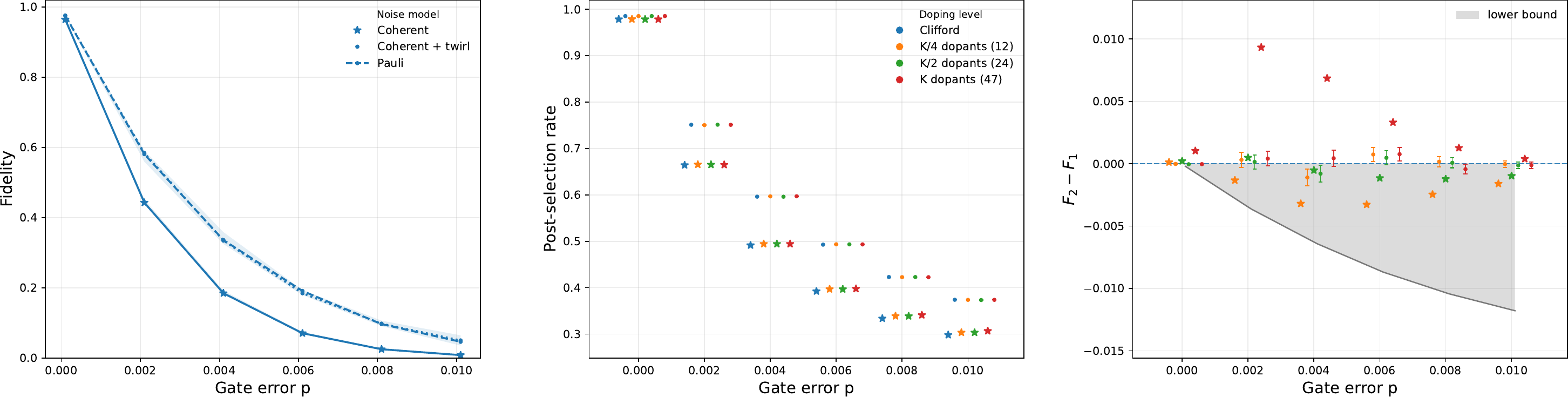} 
\caption{
{\bf Numerical estimation of fidelity change due to doping under Pauli and coherent errors.} The circuit is a 16-qubit Clifford with CZ-depth 16 and encoded with 2 spacetime checks. Left: post-selected fidelity of the Clifford, comparing stochastic Pauli noise, coherent noise, and locally twirled coherent noise, confirming that twirling converts coherent to Pauli noise. Middle: post-selection acceptance rate stays constant with doping, regardless of noise model. Right: Fidelity change from doping. The shaded gray region shows the estimated lower bound $-\Pr(harmless \mid accept)$, confirming that both noise models are above this bound.}\label{fig:coherent-sim}
\end{figure*}

Expanding the coherent errors at all noisy locations in the Pauli basis gives the spacetime fault-path expansion
\begin{equation}
    U_{\mathrm{err}}
       =
       \sum_E c_E E,
    \label{eq:coherent-fault-expansion}
\end{equation}
\noindent where $c_E\in\mathbb C$. We assume the same errors at locations common to $C_1$ and $C_2$, so the coefficients $c_E$ agree. Because the inserted $T$ gates are ideal and commute with the measured checks, each fault path produces the same syndrome in both circuits, and the accepted set $A$ is common.

For $j\in\{1,2\}$, let $V_j(E)$ denote the effective output operator obtained by propagating $E$ through $C_j$. The unnormalized post-selected state is
\begin{equation}
    \ket{\widetilde{\phi}_j}
       :=
       \sum_{E\in A}
          c_E V_j(E)\ket{\psi_j}.
    \label{eq:coherent-postselected-state}
\end{equation}
Its post-selection probability is
\begin{equation}
    p_j
       :=
       \braket{\widetilde{\phi}_j|
               \widetilde{\phi}_j},
    \label{eq:coherent-acceptance-probability}
\end{equation}
and, for $p_j>0$, its fidelity with the ideal target is
\begin{equation}
    \rho_j^{\mathrm{postselected}}
       :=
       \frac{
         \ket{\widetilde{\phi}_j}
         \bra{\widetilde{\phi}_j}
       }{p_j},
    \qquad
    F_j
       :=
       \bra{\psi_j}\rho_j^{\mathrm{postselected}}\ket{\psi_j}
       =
       \frac{
         \left|
           \braket{\psi_j|\widetilde{\phi}_j}
         \right|^2
       }{p_j}.
    \label{eq:coherent-postselected-fidelity}
\end{equation}

Define the single-path target overlap
\begin{equation}
    \alpha_j(E)
       :=
       \bra{\psi_j}V_j(E)\ket{\psi_j}.
    \label{eq:coherent-path-overlap}
\end{equation}
The unnormalized accepted target overlap is
\begin{align}
    G_j
       &:=
       p_jF_j                                                   \\
       &=
       \left|
          \sum_{E\in A}c_E\alpha_j(E)
       \right|^2
       =
       D_j+\Xi_j,
    \label{eq:coherent-target-overlap}
\end{align}
where
\begin{equation}
    D_j
       :=
       \sum_{E\in A}
          |c_E|^2|\alpha_j(E)|^2
    \label{eq:coherent-diagonal-part}
\end{equation}
and
\begin{equation}
    \Xi_j
       :=
       2\operatorname{Re}
       \sum_{\substack{E,E'\in A\\E<E'}}
       c_E^*c_{E'}
       \alpha_j(E)^*\alpha_j(E').
    \label{eq:coherent-interference-part}
\end{equation}

\begin{theorem}[Fidelity relation under coherent unitary noise]
\label{thm:coherent-fidelity}
Suppose that $p_1,p_2>0$, and define
\begin{equation}
    q_H^{\mathrm{diag}}
       :=
       \sum_{E\in H}|c_E|^2.
    \label{eq:diagonal-harmless-weight}
\end{equation}
Then
\begin{equation}
    F_2
       \geq
       \frac{p_1}{p_2}F_1
       -
       \frac{
          q_H^{\mathrm{diag}}
          +|\Xi_2-\Xi_1|
       }{p_2}.
    \label{eq:general-coherent-fidelity-bound}
\end{equation}
If $p_1=p_2=p_{\mathrm{acc}}>0$, defining
\begin{equation}
    \Gamma_{\mathrm{coh}}
       :=
       \frac{|\Xi_2-\Xi_1|}{p_{\mathrm{acc}}},
    \label{eq:coherent-interference-penalty}
\end{equation}
gives
\begin{equation}
    F_2
       \geq
       F_1
       -
       \frac{q_H^{\mathrm{diag}}}{p_{\mathrm{acc}}}
       -
       \Gamma_{\mathrm{coh}}.
    \label{eq:equal-acceptance-coherent-bound}
\end{equation}
\end{theorem}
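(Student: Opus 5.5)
The proof is essentially algebraic. It relies on the decomposition $G_j=p_jF_j=D_j+\Xi_j$ from \eqref{eq:coherent-target-overlap}, together with the classification of accepted fault paths into the identity, harmless ($H$) and harmful ($L$) sets, exactly as in the stochastic proof of Theorem~\ref{thm:fidelity}.

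The plan is to compare the diagonal parts $D_1$ and $D_2$ term by term, and then to absorb the interference difference as a separate error term.

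\textbf{Step 1: compare the single-path overlaps.} First I will record the values of $|\alpha_j(E)|^2$ for each type of path.
\begin{itemize}
\item For $E=I$ we have $V_j(I)=I$, so $|\alpha_1(I)|^2=|\alpha_2(I)|^2=1$.
\item For $E\in L$, $V_1(E)$ is a Pauli that does not stabilize the stabilizer state $\ket{\psi_1}$, so $\alpha_1(E)=0$.
\item For $E\in H$, $V_1(E)$ stabilizes $\ket{\psi_1}$ up to phase, so $|\alpha_1(E)|^2=1$.
\item For the doped circuit, $V_2(E)$ is unitary (a Pauli dressed by $R_Q(\pi/2)$ factors, as in the fault-propagation argument of the previous proof), so $0\le |\alpha_2(E)|^2\le 1$ in every case.
\end{itemize}
This uses that the coefficients $c_E$ are common to both circuits and that the accepted set is common. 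Both follow from the ideal, check-commuting $T$ gates, as assumed in the statement.

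\textbf{Step 2: bound the diagonal difference.} Combining the cases of Step 1,
\[
D_2-D_1=\sum_{E\in H}|c_E|^2\bigl(|\alpha_2(E)|^2-1\bigr)+\sum_{E\in L}|c_E|^2|\alpha_2(E)|^2\;\ge\;-q_H^{\mathrm{diag}} .
\]

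\textbf{Step 3: bound the full overlap difference.} Writing
\[
G_2-G_1=(D_2-D_1)+(\Xi_2-\Xi_1)\;\ge\;-q_H^{\mathrm{diag}}-|\Xi_2-\Xi_1|,
\]
it follows that $p_2F_2=G_2\ge p_1F_1-q_H^{\mathrm{diag}}-|\Xi_2-\Xi_1|$.

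\textbf{Step 4: conclude.} Dividing by $p_2>0$ gives \eqref{eq:general-coherent-fidelity-bound}. Specializing to $p_1=p_2=p_{\mathrm{acc}}$ and inserting the definition \eqref{eq:coherent-interference-penalty} of $\Gamma_{\mathrm{coh}}$ gives \eqref{eq:equal-acceptance-coherent-bound}.

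The main obstacle is not the inequality chain but the bookkeeping behind Step 1. One must check carefully that $V_2(E)$ acting on $\ket{\psi_2}$ is a genuine unitary, so that $|\alpha_2(E)|\le 1$. One must also check that the expansion $\sum_E c_E E$ uses the same $c_E$ in both circuits, which requires that the $T$ gates add no noisy locations. A further subtlety is that $V_1(E)$ for $E\in H$ may carry a phase. This phase does not affect $D_1$ but does enter $\Xi_1$, and that is exactly why the cross terms are kept in the bound as $|\Xi_2-\Xi_1|$ rather than estimated.
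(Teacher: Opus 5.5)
Your proposal is correct and follows the paper's proof essentially line for line. Like the paper, you write $p_2F_2-p_1F_1=(D_2-D_1)+(\Xi_2-\Xi_1)$, use $|\alpha_1(E)|^2\in\{0,1\}$ (equal to $1$ on $H\cup\{I\}$ and $0$ on $L$) together with $0\le|\alpha_2(E)|^2\le 1$ to get $D_2-D_1\ge -q_H^{\mathrm{diag}}$, bound the interference difference below by $-|\Xi_2-\Xi_1|$, and divide by $p_2$; your explicit cancellation of the identity path, which the paper leaves implicit when writing $D_2-D_1$, is a small but welcome addition.
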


\begin{proof}
From Eq.~\eqref{eq:coherent-target-overlap},
\begin{equation}
    p_2F_2-p_1F_1
       =
       (D_2-D_1)+(\Xi_2-\Xi_1).
    \label{eq:coherent-fidelity-difference}
\end{equation}
Because $C_1$ is Clifford and $\ket{\psi_1}$ is a stabilizer state, an accepted fault has $|\alpha_1(E)|^2=1$ for $E\in H\cup\{I\}$ and zero otherwise. Writing
\[
    L=A\setminus\bigl(H\cup\{I\}\bigr)
\]
and using $0\leq|\alpha_2(E)|^2\leq1$ gives
\begin{align}
    D_2-D_1
       &=
       \sum_{E\in H}
          |c_E|^2\bigl(|\alpha_2(E)|^2-1\bigr)
       +
       \sum_{E\in L}
          |c_E|^2|\alpha_2(E)|^2                              \\
       &\geq
       -\sum_{E\in H}|c_E|^2                                  \\
       &=
       -q_H^{\mathrm{diag}}.
    \label{eq:coherent-diagonal-bound}
\end{align}
Since $\Xi_2-\Xi_1\geq-|\Xi_2-\Xi_1|$,
\[
    p_2F_2-p_1F_1
       \geq
       -q_H^{\mathrm{diag}}
       -|\Xi_2-\Xi_1|.
\]
Dividing by $p_2$ proves Eq.~\eqref{eq:general-coherent-fidelity-bound}; the equal-acceptance case follows immediately.
\end{proof}

\subsubsection{Twirling coherent noise}

We now show that exact Pauli twirling removes the interference term and recovers the stochastic Pauli relation. Let
\[
    \overline{\mathcal P}
    :=
    \mathcal P/\{\pm 1,\pm i\}
\]
and, for $R,E\in\overline{\mathcal P}$, define the Pauli character
\begin{equation}
    \eta_R(E)
    =
    \begin{cases}
        +1, & [R,E]=0,\\
        -1, & \{R,E\}=0.
    \end{cases}
\end{equation}
For independently sampled twirling operators $R$, each fault-path amplitude transforms as
\begin{equation}
    c_E
    \longmapsto
    c_E\eta_R(E),
    \qquad
    |\eta_R(E)|=1.
\end{equation}
Character orthogonality gives
\begin{equation}
    \mathbb{E}_R
    \left[
        \eta_R(E)^*\eta_R(E')
    \right]
    =
    \delta_{E,E'}.
    \label{eq:pauli-character-orthogonality}
\end{equation}
Averaging therefore removes all terms with $E\neq E'$, so $\Xi_j^{\mathrm{tw}}=0$ and the twirled channel is
\begin{equation}
    \mathcal{N}_{\mathrm{tw}}(\rho)
    =
    \sum_E q_E E\rho E^\dagger,
    \qquad
    q_E:=|c_E|^2.
    \label{eq:twirled-pauli-channel}
\end{equation}
Because every Pauli fault path is either accepted or rejected deterministically, both circuits have the twirl-averaged acceptance probability
\begin{equation}
    p_{\mathrm{acc}}^{\mathrm{tw}}
    =
    \sum_{E\in A}q_E.
    \label{eq:twirled-acceptance}
\end{equation}
Applying the diagonal argument above yields
\begin{align}
    F_2^{\mathrm{tw}}
    &\geq
    F_1^{\mathrm{tw}}
    -
    \frac{
        \sum_{E\in H}q_E
    }{
        \sum_{E\in A}q_E
    }
    \nonumber\\
    &=
    F_1^{\mathrm{tw}}
    -
    \Pr_{\mathrm{tw}}
    \left(
        E\in H
        \,\middle|\,
        E\in A
    \right).
    \label{eq:twirled-pauli-bound}
\end{align}
Thus exact Pauli twirling eliminates the coherent-interference penalty and recovers the stochastic Pauli fidelity relation. The quantity $p_{\mathrm{acc}}^{\mathrm{tw}}$ is the acceptance probability of the twirl-averaged channel and need not equal that of the original coherent channel.

\subsection{Additional observations}

For completeness, we also consider several adversarial scenarios for noise and how they could invalidate our verification protocol. These scenarios do not seem to be physically realistic, and could not have passed our validation experiments where we measured fidelity at low amounts of doping and cross-entropy at medium amounts of doping.

\paragraph{Z-only error on $T$ gates.} If there is a pure $Z$ error on $T$ gates, those will be invisible to checks, by construction. Therefore, the syndromes will remain unchanged, but the fidelity will drop. 

\paragraph{Noise that stabilizes the circuit.} If noise is concentrated onto the spacetime stabilizer group of the Clifford circuit, we could observe a large fidelity in the Clifford circuit and a low fidelity in the doped circuit. Moreover, by definition, harmless errors cannot be detected via post-selection. However, this requires the hardware noise to be correlated with the global properties of the circuit that runs on it, and not just the gates or their relative arrangements in space and time.

\subsection{Independent validation of original open-boundary experiment}
\label{sec:independent_validation} 

\begin{figure}[t] \centering \includegraphics[width=.85\linewidth]{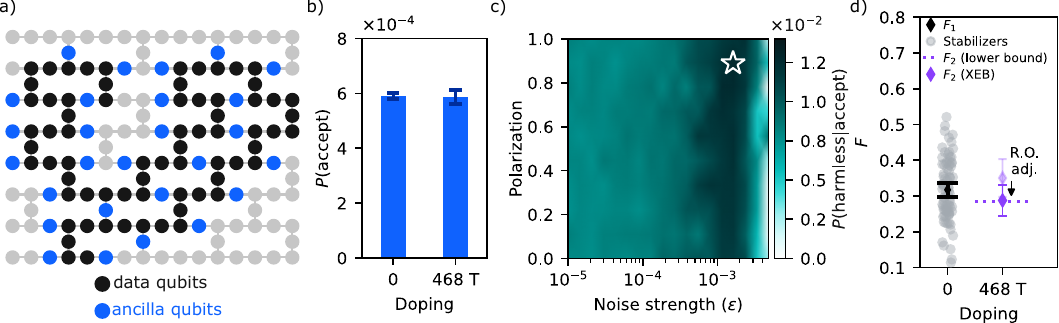} \vspace{0.5em} \includegraphics[width=.85\linewidth]{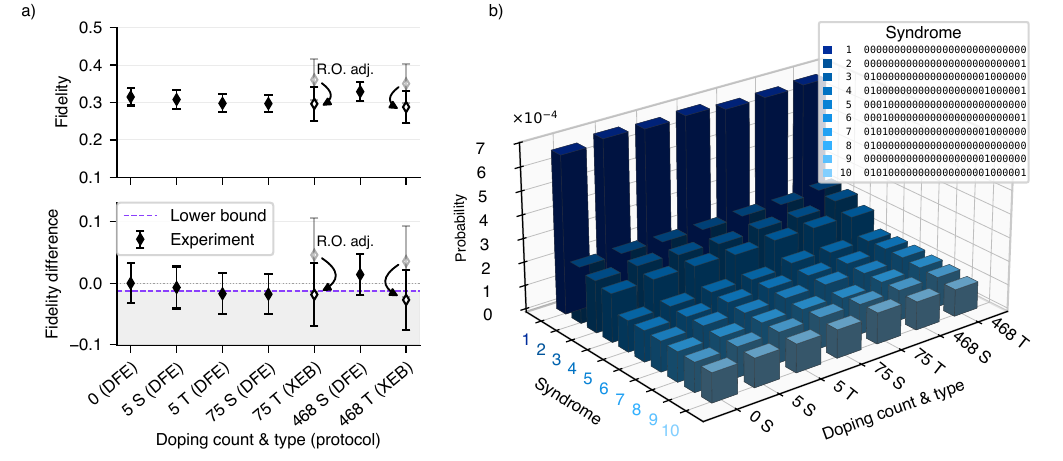} \caption{\textbf{Independent validation of the original open-boundary experiment.} Top: experimental setup and fidelity certification for the original 70-qubit, 468-$T$ experiment. Bottom: validation of the fidelity bound, augmented with the independent log-XEB estimate obtained from exact classical simulation of the 2,051 experimental outputs~\cite{manabe2026classical}. The independent calculation is statistically consistent with the experimentally estimated fidelity and supports the certified lower bound. Panels correspond to Figures~\ref{fig:fig3} and ~\ref{fig:fig4} of the main text but on a different linear chain that appeared in an earlier version of this work, respectively.} \label{fig:independent_validation} \end{figure}

The first version of this work reported a DCS instance with an open-boundary, 70-qubit, depth-70 circuit and 468 $T$ gates. Subsequent work exploited this geometry and computed the exact ideal probabilities of all 2,051 reported output samples using a tensor-network simulation requiring 256 H100 GPUs for 37.3 minutes, or approximately 159 H100-GPU-hours~\cite{manabe2026classical}. The resulting log-XEB fidelity proxy was \[ F_{\log\mathrm{-XEB}}=0.35034, \qquad 95\%~\mathrm{CI}=[0.29763,0.40305]. \] Under the Porter--Thomas and scrambled-noise assumptions required to interpret XEB as a fidelity proxy, this result is consistent with our DFE estimate $F_1=0.32(1)$ and independently corroborates the certified lower bound $F_2\geq0.284$,  see Figure~\ref{fig:independent_validation}.

\section{Fidelity estimation for stabilizer and low-magic states}
\label{sec:dfe}

We estimate target-state fidelities using direct fidelity estimation
(DFE)~\cite{direct_fid_est,dfedasilve}. We first review the formalism following Ref.~\cite{direct_fid_est}. For stabilizer target states we use the
standard DFE procedure, i.e., we sample random stabilizer observables from the target
state and measure their expectation values experimentally. For target states
containing a small number of non-Clifford gates, we extend this procedure by
sampling Paulis exactly from the DFE relevance distribution using a classical
marginal sampler described below.

\subsection{Direct Fidelity Estimation}
\label{app:dfe}

Let \(\rho=\ket{\psi}\!\bra{\psi}\) be the pure target state and let
\(\sigma\) denote the experimentally prepared state. We use the \(n\)-qubit
Pauli basis \(\mathcal{P}_n\), normalized such that
\(\operatorname{Tr}(P_kP_\ell)=d\delta_{k\ell}\), with \(d=2^n\). For any state
\(\tau\), define
\[
    \chi_\tau(P)=\frac{1}{\sqrt{d}}\operatorname{Tr}(P\tau).
\]
Since \(\rho\) is pure, \(\sum_{P\in\mathcal{P}_n}\chi_\rho(P)^2=1\). The DFE
relevance distribution is therefore \(p(P)=\chi_\rho(P)^2\), supported on Paulis
with \(\chi_\rho(P)\neq0\), and the fidelity can be written as
\begin{equation}
    F(\rho,\sigma)
    =
    \mathbb{E}_{P\sim p}
    \left[
        \frac{\chi_\sigma(P)}{\chi_\rho(P)}
    \right].
    \label{eq:dfe_expectation}
\end{equation}

Experimentally, for each sampled Pauli \(P_j\), we estimate
\(\langle P_j\rangle_\sigma=\operatorname{Tr}(P_j\sigma)\) from \(m_j\)
repeated measurements. If \(a_{j,r}\in\{\pm1\}\) is the outcome of shot \(r\),
then
\[
    \widehat{\langle P_j\rangle}
    =
    \frac{1}{m_j}\sum_{r=1}^{m_j}a_{j,r},
    \qquad
    \widehat{\chi}_\sigma(P_j)
    =
    \frac{1}{\sqrt{d}}\widehat{\langle P_j\rangle}.
\]
After sampling \(L\) Paulis \(P_j\sim p(P)\), the DFE estimator is
\begin{equation}
    \widehat{F}
    =
    \frac{1}{L}
    \sum_{j=1}^{L}
    \frac{\widehat{\chi}_\sigma(P_j)}
         {\chi_\rho(P_j)} .
    \label{eq:dfe_estimator}
\end{equation}
In the absence of experimental or calibration bias,
\(\mathbb{E}[\widehat{F}]=F(\rho,\sigma)\).

\subsection{Readout-error mitigation}
\label{app:dfe_readout_mitigation}

For each measured Pauli \(P\), we mitigate readout error by rescaling the
measured expectation value using an independently calibrated response factor.
After the basis-change circuit used to measure \(P\), the Pauli measurement is
reduced to a computational-basis parity measurement. If \(S\) is the set of
qubits on which \(P\) acts nontrivially, we define
\(Z_S=\prod_{i\in S} Z_i\). Using twirled computational-basis calibration data,
we estimate the corresponding readout fidelity
\(\lambda_P\equiv\lambda_{Z_S}\).

Under the twirled Pauli readout-noise model, the measured expectation value
satisfies \(\langle P\rangle_{\mathrm{meas}}=\lambda_P
\langle P\rangle_\sigma\). We therefore use
\(\widehat{\langle P\rangle}_{\mathrm{mit}}=
\widehat{\langle P\rangle}_{\mathrm{meas}}/\widehat{\lambda}_P\) and substitute
this quantity into Eq.~\eqref{eq:dfe_estimator}. Equivalently, the
readout-mitigated DFE estimator is
\begin{equation}
    \widehat{F}_{\mathrm{mit}}
    =
    \frac{1}{L}
    \sum_{j=1}^{L}
    \frac{
        \widehat{\langle P_j\rangle}_{\mathrm{meas}}
    }{
        \sqrt{d}\,
        \widehat{\lambda}_{P_j}\,
        \chi_\rho(P_j)
    } .
    \label{eq:dfe_mitigated_estimator}
\end{equation}
This procedure removes the multiplicative readout bias under the calibrated
Pauli readout-noise model. The variance of a mitigated Pauli estimate is
increased by a factor \(1/\lambda_P^2\), so Paulis with small readout fidelity
contribute larger statistical uncertainty.

\subsection{Sampling complexity}
\label{app:dfe_sample_complexity}

The DFE estimator has two statistical contributions: Monte Carlo error from
sampling Pauli settings and finite-shot error from estimating each sampled Pauli
expectation value.

For the idealized estimator with exact Pauli expectation values, let
\(X_P=\chi_\sigma(P)/\chi_\rho(P)\), where
\(P\sim\chi_\rho(P)^2\). Since \(|\operatorname{Tr}(P\sigma)|\leq1\), this
random variable is bounded by \(B_\rho\), where
\[
    B_\rho
    :=
    \max_{P:\chi_\rho(P)\neq 0}
    \frac{1}{\sqrt{d}\,|\chi_\rho(P)|}.
\]
Hoeffding's inequality gives
\[
    L =
    O\!\left(
        \frac{B_\rho^2}{\epsilon^2}
        \log\frac{1}{\delta}
    \right)
\]
sampled Pauli settings to estimate the fidelity to additive error \(\epsilon\)
with failure probability \(\delta\), ignoring finite-shot error.

For finite-shot measurements, the single-setting estimator for a sampled Pauli
\(P_j\) is
\[
    \widehat{X}_j
    =
    \frac{1}{m_j}
    \sum_{r=1}^{m_j}
    \frac{a_{j,r}}
         {\sqrt{d}\,\chi_\rho(P_j)} ,
    \qquad
    a_{j,r}\in\{\pm1\}.
\]
Thus the number of repetitions required for a fixed sampled Pauli scales as
\[
    m_j
    =
    O\!\left(
        \frac{1}
             {d\,\chi_\rho(P_j)^2\epsilon^2}
        \log\frac{L}{\delta}
    \right),
\]
up to constants from the chosen error allocation.

\subsection{Sampling Paulis for stabilizer target states}
\label{app:dfe_stabilizer_sampling}
For stabilizer target states, the DFE distribution is uniform over the
stabilizer group. Let
\[
    \mathcal{S}=\langle g_1,\ldots,g_n\rangle
\]
be the stabilizer group of the target state. We sample a uniformly random bit
string \(u\in\{0,1\}^n\) and measure
\[
    P(u)=\prod_{i=1}^n g_i^{u_i}.
\]
The stabilizer sign is tracked classically and included in \(\chi_\rho(P)\).
Because \(|\chi_\rho(P)|=1/\sqrt{d}\) for every sampled stabilizer Pauli, the
DFE distribution is uniform over the \(d=2^n\) stabilizer Paulis and zero
elsewhere. Thus \(B_\rho=1\), giving the standard dimension-independent scaling
\(L=O(\epsilon^{-2}\log(1/\delta))\) for DFE of stabilizer target states.
\subsection{Exact Pauli sampling for states with few \(T\) gates}
\label{app:dfe_few_t_sampling}

For non-stabilizer target states, the DFE relevance distribution
\(p(P)=\chi_\rho(P)^2\) is generally not uniform over a stabilizer group. For the low-magic target states used in this work, which are prepared by
Clifford+\(T\) circuits with a small number of \(T\) gates, we sample exactly
from this distribution without enumerating all \(4^n\) Paulis. Instead, we use
the chain rule together with exact evaluation of the required
marginals~\cite{hinscheEfficientDistributedInnerProduct2025}. Let \(\rho=U\ket{0^n}\!\bra{0^n}U^\dagger\) be the target state. We generate
the Pauli string sequentially. Suppose the single-qubit Paulis on the first
\(\ell\) qubits have been fixed, forming the partial Pauli string
\(Q_\ell=P_1\otimes\cdots\otimes P_\ell\). We define its marginal DFE weight as
\begin{equation}
    M(Q_\ell)
    :=
    \sum_{P_{\ell+1},\ldots,P_n}
    \chi_\rho
    \left(
        Q_\ell\otimes
        P_{\ell+1}\otimes\cdots\otimes P_n
    \right)^2 .
    \label{eq:dfe_partial_marginal}
\end{equation}
The conditional probability of extending \(Q_\ell\) by
\(A\in\{I,X,Y,Z\}\) is
\[
    \Pr(P_{\ell+1}=A\mid Q_\ell)
    =
    \frac{M(Q_\ell\otimes A)}
         {\sum_{B\in\{I,X,Y,Z\}}M(Q_\ell\otimes B)}.
\]

We compute the marginal weight of each partial Pauli string as a two-copy overlap. Let
\(R_\ell=\{\ell+1,\ldots,n\}\) denote the set of unassigned qubits, and let
\(\operatorname{SWAP}_{R_\ell}\) swap the two copies on those qubits~\cite{hinscheEfficientDistributedInnerProduct2025}. Define
\[
    O_{Q_\ell}
    :=
    \left(Q_\ell\otimes Q_\ell\right)
    \otimes
    \operatorname{SWAP}_{R_\ell}.
\]
Using Pauli orthogonality and the identity
\(\operatorname{SWAP}=2^{-m}\sum_{R\in\mathcal{P}_m}R\otimes R\) on an
\(m\)-qubit subsystem, Eq.~\eqref{eq:dfe_partial_marginal} becomes
\begin{equation}
    M(Q_\ell)
    =
    \frac{1}{2^\ell}
    \operatorname{Tr}
    \left[
        (\rho\otimes\rho)O_{Q_\ell}
    \right]
    =
    \frac{1}{2^\ell}
    \bra{0^{2n}}
    (U^\dagger\otimes U^\dagger)
    O_{Q_\ell}
    (U\otimes U)
    \ket{0^{2n}} .
    \label{eq:dfe_marginal_overlap}
\end{equation}

Thus the marginal weight of each partial Pauli string can be obtained from the
all-zero amplitude of the doubled circuit
\[
    C_{Q_\ell}
    :=
    (U^\dagger\otimes U^\dagger)
    O_{Q_\ell}
    (U\otimes U).
\]
In particular,
\(\bra{0^{2n}}C_{Q_\ell}\ket{0^{2n}}=2^\ell M(Q_\ell)\), which we calculate using a ZX-calculus-based circuit simulator~\cite{kissinger2022simulating}. Because
\(M(Q_\ell)\geq0\), the corresponding all-zero probability satisfies
\(p_0(Q_\ell)=2^{2\ell}M(Q_\ell)^2\), and hence
\(M(Q_\ell)=2^{-\ell}\sqrt{p_0(Q_\ell)}\).

At each step, the sampler requires only ratios of marginal weights for the four
possible extensions of the same partial Pauli string. The common factor \(2^{-(\ell+1)}\) therefore cancels, giving the
conditional sampling rule
\begin{equation}
    \Pr(P_{\ell+1}=A\mid Q_\ell)
    =
    \frac{\sqrt{p_0(Q_\ell\otimes A)}}
         {\sum_{B\in\{I,X,Y,Z\}}
          \sqrt{p_0(Q_\ell\otimes B)}} .
    \label{eq:dfe_probability_from_zero_prob}
\end{equation}

The sampler used in the experiment is therefore:

\begin{enumerate}
    \item Initialize the empty partial Pauli string \(Q_0=\emptyset\).

    \item For each \(\ell=0,\ldots,n-1\), evaluate the four possible
    extensions \(Q_\ell\otimes A\), where \(A\in\{I,X,Y,Z\}\).

    \item For each extension, construct the doubled circuit
    \[
        C_{Q_\ell\otimes A}
        =
        (U^\dagger\otimes U^\dagger)
        O_{Q_\ell\otimes A}
        (U\otimes U).
    \]

    \item Evaluate
    \(p_0(Q_\ell\otimes A)
    =\Pr_{C_{Q_\ell\otimes A}}(0^{2n})\)
    using the exact bit-string probability algorithm.

    \item Sample \(P_{\ell+1}=A\) according to
    Eq.~\eqref{eq:dfe_probability_from_zero_prob}, and update the partial
    Pauli string as \(Q_{\ell+1}=Q_\ell\otimes P_{\ell+1}\).
\end{enumerate}
After \(n\) steps, the output
\(P=P_1\otimes\cdots\otimes P_n\) is distributed exactly according to
\(\Pr(P)=\chi_\rho(P)^2\). This procedure avoids constructing the full
\(4^n\)-element distribution. Its cost is instead set by repeated exact
probability evaluations of doubled Clifford+\(T\) circuits, with the
non-Clifford cost controlled by the number of \(T\) gates.

\section{Medium-magic validation with cross entropy benchmarking} \label{sec:xeb}

Direct fidelity estimation of quantum states becomes difficult as the state moves far away from stabilizer, since the Pauli expectation values become broadly distributed and typically exponentially small. As a result, two distinct difficulties arise. First, identifying the relevant Pauli observables for importance sampling becomes computationally challenging, since the target state no longer admits a sparse stabilizer description. Second, the number of shots required to resolve small Pauli expectation values grows substantially.  

Even though a direct measurement of state fidelity becomes infeasible beyond a small amount of doping, we perform additional validations with a proxy for fidelity in the regime of medium amounts of doping. For this, we turn to cross-entropy benchmarking (XEB). This is feasible due to two key properties: doping with $O(n)$ $T$ gates is sufficient to make the state anticoncentrate, and strong simulation remains feasible in this regime for our problem size.

\subsection{Linear cross-entropy}

Linear cross-entropy benchmarking (XEB) compares samples from an experimental distribution \(q(x)\) to the ideal output probabilities \(p(x)=|\langle x|C|0\rangle|^2\) of the target circuit. For $N = 2^n$ and $M$ samples, the linear cross-entropy estimator is defined as

\begin{equation}
F_{\mathrm{XEB}}
=
\frac{N}{M}\sum_{i=1}^M p(x_i)-1,
\qquad x_i\sim q ,
\end{equation}
where the \(x_i\) are measured bitstrings from the experiment. Equivalently,
\begin{equation}
\mathbb{E}_{x\sim q}[F_{\mathrm{XEB}}]
=
N\sum_x q(x)p(x)-1 .
\end{equation}

Intuitively, XEB measures whether the experimental device outputs bitstrings that have large probability in the ideal distribution. A uniform sampler (i.e., completely noisy) gives \(F_{\mathrm{XEB}}\approx 0\), while a perfect sampler \(q=p\) gives
\begin{equation}
\mathbb{E}_{x\sim p}[F_{\mathrm{XEB}}]
=
N\sum_x p(x)^2-1
=
N M_2-1 .
\end{equation}

Thus the ideal XEB value is directly tied to the second moment, or collision probability, of the ideal output distribution. In a Haar-random state, the distribution follows a Porter-Thomas signature, where \(N M_2\approx 2\), so a noiseless experiment gives \(F_{\mathrm{XEB}}\approx 1\).  Thus, in order for XEB to be a meaningful benchmark, we require an anticoncentration condition such that at least the second moment follows Haar-like statistics~\cite{Boixo2018}.

\subsection{Anticoncentration of Clifford circuits with $T$-doping}

A useful way to understand the role of \(T\)-doping is to contrast the output distribution of Clifford circuits with those of Haar-random circuits. The Clifford group is highly random at low moments and forms a unitary 3-design~\cite{webb2015clifford}. However, this does not imply that Clifford output distributions are Porter-Thomas distributed. Indeed, the output distributions are flat over an affine subset of bitstrings, where the support is determined by constraints imposed by the stabilizer group.

Injecting non-Clifford resources changes this behavior. In particular, adding magic resources to Clifford circuits drives the overlap distribution towards the Porter-Thomas statistics of Haar-random states, with poly-logarithmically many magic resources already sufficient; see for example~\cite{magni2025anticoncentration, ghosh2023complexity, ghosh2025random}. Further work has shown~\cite{leone2021quantum,leone2026non} that relative-error convergence to Haar-random $k$-designs happens with an amount of doping linear in $n$ and $k$.

We numerically confirm this for our circuit family by computing the second moment of the distribution using a sampler (both classical and quantum) and computing the probability of the observed bitstrings. Figure~\ref{fig:anticoncentration} shows convergence of the scaled second moment of the distribution $2^n M_2 = 2^n \sum_x p(x)^2$ moving from 1 in the case of graph states to 2 in the case of anticoncentrated states. It also shows that for our particular experiment, the output bitstrings after a linear amount of doping show close proximity to a Porter-Thomas distribution.

\begin{figure*}[h]
\centering
\includegraphics[width=\textwidth]{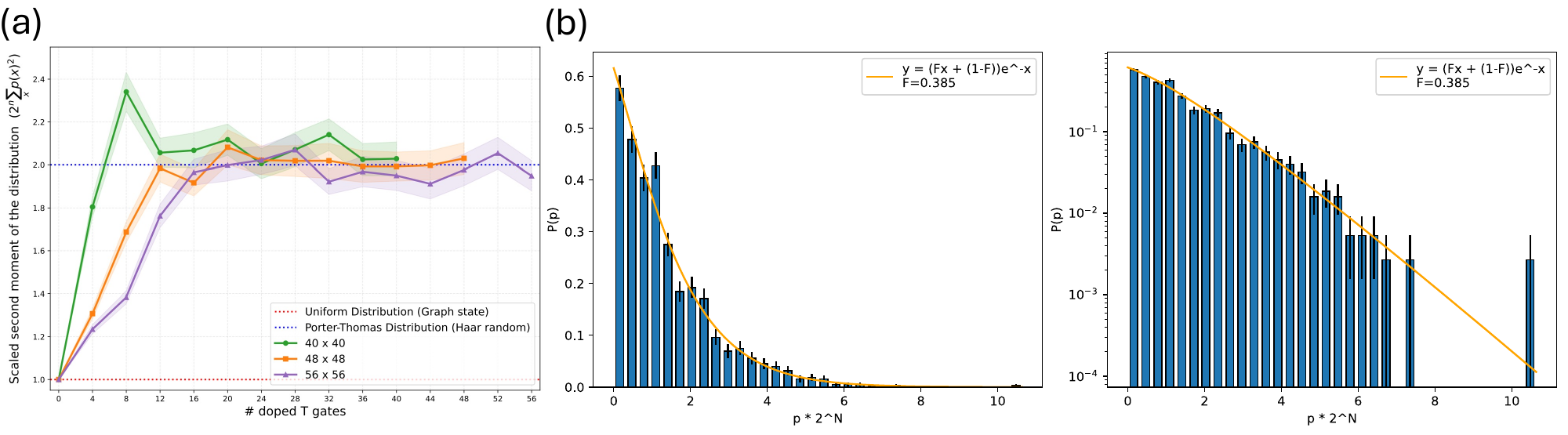} 
\caption{
{\bf Anticoncentration of Clifford circuits with $O(n)$ $T$ doping.} {\bf a.} Convergence of the second moment in simulations of $n \times n$ circuits. {\bf b.} Probability distribution of sample bitstrings experimentally obtained from a $64 \times 73$ circuit with $75$ $T$ gates shows good agreement with Porter-Thomas.
All simulations were performed using QuiZX~\cite{kissinger2022simulating}.} 
\label{fig:anticoncentration}
\end{figure*}

\subsection{Computing cross-entropy from experimental data}

Although exact fidelity estimation remains computationally expensive for generic non-Clifford circuits, the near-Clifford regime considered here still permits strong simulation of individual output amplitudes using state-of-the-art Clifford+\(T\) simulators. We compute the probabilities \(p(x_i)\) using \textbf{QuiZX}, whose ZX-calculus-based simulation methods can evaluate amplitudes for circuits with \(T\)-counts up to approximately \(80\), albeit with runtimes that can reach several days for the largest instances considered here. This makes XEB practically accessible in a regime where direct fidelity estimation would be substantially more challenging.

As noted in prior work~\cite{merkel_clifford_benchmarks,Arute2019}, even without any gate noise, SPAM errors introduce a discrepancy between fidelity and XEB; intuitively readout error can at worst contribute $-1$ to fidelity and $0$ to the XEB. Therefore, XEB can systematically overestimate fidelity in the presence of readout noise.

We mitigate this bias by characterizing readout errors on the qubits. Assuming that readout error is independent across qubits, the outcome probability distributions with and without readout error should be uncorrelated. As the independent correlation is vanishingly small at large system sizes, we can model the fidelity from readout error alone as:

$$F_{\text{readout}} = \prod_{i=1}^{n} (1 - p_i)$$

To compare the XEB score with the fidelities obtained from DFE, we adjust the XEB score for the readout error as shown in Fig.~\ref{fig:fig4}. Specifically, we estimate the effective DFE readout fidelity by averaging the readout fidelities of 10,000 randomly sampled Pauli observables (see Eq.~\eqref{eq:dfe_mitigated_estimator}). We then rescale the measured XEB score by the ratio of this average DFE readout fidelity to the corresponding XEB readout fidelity $F_{\text{readout}}$.

\subsection{Limitations of cross-entropy as a fidelity proxy}

Interpreting XEB as a proxy for state fidelity requires additional assumptions beyond anticoncentration. Under a depolarizing or sufficiently scrambled noise model, XEB tracks circuit fidelity~\cite{Boixo2018}. But this relationship can break down when the circuit does not anticoncentrate, the noise is strong, or spatially or temporally correlated~\cite{ware2023sharp, Morvan2024}. 

Unlike direct fidelity estimation, which probes the overlap of the full quantum state with the target state, XEB only depends on the computational-basis output probabilities. It is fundamentally a second-moment statistic and therefore does not fully characterize the output distribution or the underlying quantum state, and there exist known spoofing strategies capable of reproducing nontrivial XEB values without generating the target quantum state~\cite{GaoKalinowski2024,BarakChouGao2020}.

\section{Estimator consistency: combining post-selection with twirling} \label{sec:psrc}

Pauli twirling regions of the circuit (randomized compiling) convert coherent error processes into stochastic processes, making error mechanisms more well-behaved~\cite{wallman2016randomized,brillant2026mar,winick2022dec}.
In this section we analyze the estimation of observables from shots post-selected on ancilla outcomes under randomized compiling. We verify that pooling shots accepted by post-selection across randomizations before normalizing yields a consistent estimator of an expectation value of the twirl-averaged post-selected quantum state.

Let $\rho_0$ denote the initial quantum state on data and ancilla qubits, and $\mathcal{C}$ the ideal circuit of interest.
Suppose $\mathcal{C}$ contains $R$ locations into which we can compile single-qubit Paulis,
and let \(T \in \mathcal{P}_R\) denote a particular randomization frame drawn 
from the distribution $\Omega$ over $T$ prescribed by the randomization scheme (which will correlate Paulis at different locations, e.g., pre- and post-gate twirls of the same gate).
Of these $R$ locations, let $T_M$ represent $T$ restricted to the locations that directly precede a measurement instruction, i.e., those Paulis participating in measurement randomization.
Write $\mathcal{N}_T$ to represent the noisy implementation of $\mathcal{C}$ for some particular choice $T\sim \Omega$, composed with the ideal channel corresponding to $T_M^\dagger$.
This composition ensures $\mathcal{N}_T = \mathcal{C}$ in the noiseless limit for every $T$, rather than equal to $\mathcal{C}$ up to some Pauli.
Experimentally, it is realized by classically flipping measurement outcomes according to the $X$ / $Y$ component of $T_M$ (i.e., the bits that anticommute with $Z$-basis measurement).
The twirled channel is 
\begin{align}
    \bar{\mathcal{N}} = \int \mathcal{N}_T d\Omega(T).
\end{align}

Let $\Pi$ be the projection onto the syndrome of interest defined on the ancilla qubit subspace. Let $Q = \sum_x q(x) |x\rangle\langle x|$ be a diagonal observable of interest, specified by real coefficients $\{q(x)\}_{x\in\{0,1\}^n}$ and diagonal in the same computational basis in which the qubits are ultimately measured.
We have written $Q$ this way so that it can accommodate both Pauli observables, as used in DFE, and also the observable affinely related to the XEB statistic.
Note that any Pauli is brought into this diagonal form by compiling single-qubit change-of-basis Cliffords into the circuit just prior to $T_M$, after which $q(x) = \prod_i (-1)^{x_i}$ over the qubits $i$ where the rotated Pauli is $Z$. 

Lüders' rule gives the twirled, post-selected state along with the post-selected expectation value of Q as

\begin{align}
\frac{\Pi \bar{\mathcal{N}}(\rho_0) \Pi}{\mathrm{Tr}[\Pi \bar{\mathcal{N}}(\rho_0)]}
\qquad\text{and}\qquad
\langle Q \rangle_{\text{PS}} \;=\;
    \frac{\mathrm{Tr}\!\left[\,Q\, \Pi\, \bar{\mathcal{N}}(\rho_0)\, \Pi\,\right]}
        {\mathrm{Tr}\!\left[\,\Pi\, \bar{\mathcal{N}}(\rho_0)\,\right]},
\end{align}
respectively.

Supposing we draw $J$ circuit randomizations $T_1,\ldots,T_J\sim \Omega$ and perform $K$ shots of each of them, we define an estimator for $\langle Q \rangle_{\text{PS}}$ as 

\begin{align}
    \hat{Q}_{\text{PS}} \;=\; \frac{\sum_{j=1}^J \sum_{k=1}^K q_{j,k}\, n_{j,k}}
                   {\sum_{j=1}^J \sum_{k=1}^K n_{j,k}}
\end{align}
where \(x_{j,k}\in\{0,1\}^n\) is the bitstring observed on the data qubits, \(q_{j,k} := q(x_{j,k})\) is $Q$'s corresponding diagonal value, and
\(n_{j,k}\in\{0, 1\}\) records whether it passes post-selection for shot
\(1\leq k \leq K\) of randomization \(1\leq j \leq J\).

For any particular $j$ and $k$, we can split the indicator $n_{j,k}\in\{0, 1\}$ on its value and see that the post-selection rate $P(\text{pass}\mid T_j)=\mathrm{Tr}[\Pi \mathcal{N}_{T_j}(\rho_0)]$ cancels with the denominator of Lüders' rule, giving

\begin{align}
\mathbb{E}[q_{j,k}\cdot n_{j,k} \mid T_j]
\;=\; 1\cdot \mathbb{E}[q_{j,k}\mid\text{pass}, T_j]\cdot P(\text{pass}\mid T_j) \,+\, 0
\;=\; \mathrm{Tr}[Q\, \Pi\, \mathcal{N}_{T_j}(\rho_0)\, \Pi],
\end{align}
where the last equality is Born's rule for the diagonal observable $Q$ evaluated on the (unnormalized) post-selected, twirled state $\Pi \mathcal{N}_{T_j}(\rho_0)\Pi$. The result is independent of $k$.

Now we can use the law of total expectation to find both the expectation of $q_{j,k}\cdot n_{j,k}$ and the post-selection rate, unconditional on $T_j$,

\begin{align}
\mathbb{E}[q_{j,k}\cdot n_{j,k}] &= \int \mathbb{E}[q_{j,k}\cdot n_{j,k} \mid T] d \Omega(T)
= \mathrm{Tr}\left[Q\, \Pi\, \left(\int \mathcal{N}_T(\rho_0)d\Omega(T)\right) \Pi\right]
= \mathrm{Tr}\left[Q\, \Pi\, \bar{\mathcal{N}}(\rho_0)\, \Pi\right], \text{ and} \nonumber \\
\mathbb{E}[n_{j,k}] &= \mathrm{Tr}[\Pi\, \bar{\mathcal{N}}(\rho_0)].
\end{align}

Since $\hat{Q}_\text{PS}$ is the ratio of the two iid sample means $\sum_k q_{j,k} n_{j,k}$ and $\sum_k n_{j,k}$, we get

\begin{align}
      \hat Q_\text{PS} 
      \xrightarrow{J\to\infty} 
      \frac{\mathbb{E}[\sum_k q_{j,k} n_{j,k}]}{\mathbb{E}[\sum_k n_{j,k}]} 
      = \frac{\mathrm{Tr}[Q \Pi \bar{\mathcal N}(\rho_0)\,\Pi]}{\mathrm{Tr}[\Pi\,\bar{\mathcal N}(\rho_0)]}
      = \langle Q \rangle_{\text{PS}} .
\end{align}

Therefore, $\hat{Q}_\text{PS}$ is a consistent estimator of $\langle Q \rangle_{\text{PS}}$, the expectation of the diagonal observable $Q$ on the twirled, post-selected state, as the number of randomizations $J\rightarrow\infty$. Observe that consistency is independent of the shots per randomization, $K$.

\section{Experimental workflow and data acquisition} \label{sec:experiments}

In this section, we detail our experimental configurations and workflow; see Figure~\ref{fig:exp_workflow} for a visual overview.

\begin{figure*}[h]
\centering
\includegraphics[width=\textwidth]{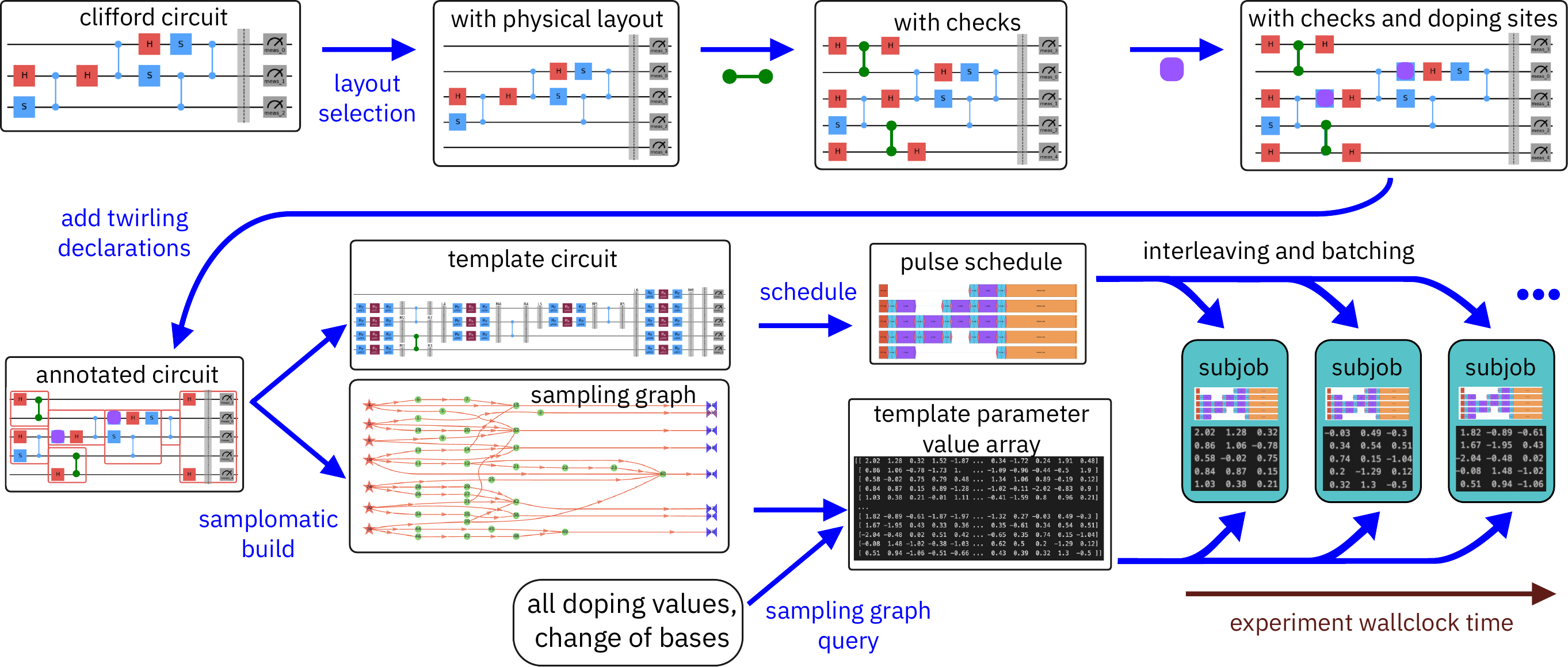} 
\caption{{\bf Experimental workflow.} 
The abstract Clifford circuit is compiled to a physical layout using heuristics based on $CZ$ layer fidelity experiments.
Ancillas and spacetime checks are added, followed by doping sites.
Twirling and change-of-basis intent is declared by enclosing twirling regions of the circuit of interest in Qiskit box instructions annotated with samplomatic directives. This circuit contains parametric $RZ$ gates at valid doping sites. The Samplomatic build mechanism consumes this circuit to produce a template circuit and sampling graph pair. The template circuit is compiled into a parametric low-level circuit schedule by the IBM Quantum Compute Service. The sampling graph is provided with all sets of doping and basis-change configurations, and generates a specified number of randomizations for each configuration in the form of an array of parameter value vectors that are valid input to the template circuit, and hence also to the circuit schedule. These randomizations are carefully interleaved so that all configurations undergo the same drift mechanisms, and batched into subjobs that are executed serially. Subjobs employ fast, late-stage binding of parameter values.}\label{fig:exp_workflow}
\end{figure*}

\subsection{Identifying layouts}

The experiments we present are conducted on IBM Boston, a Heron R3 processor with heavy-hex architecture. Our circuits map data qubits onto a 1D chain with periodic boundary conditions, with ancillas having support on exactly one qubit.

As the operations on data qubits comprise the majority of the $CZ$ gate count, and therefore the error budget, we prioritize identifying an optimal ring of qubits. Note that fixing the data qubit layout constrains the number of available checks and their topology --- in contrast, fixing a layout to a predetermined number of checks and supports can limit the number of available rings and the maximum fidelity. 

To speed up the search for layout candidates, we benchmark the device and exclude qubits with high error rates. Using 1Q and readout benchmarks, nodes with single qubit gate error above 0.5\% and readout error above 2.5\% are discarded. We characterize $CZ$ gate errors with layer fidelity along grids on the device \cite{layerfidelity}, discarding edges with errors above 1\% and durations above $98\,\mathrm{ns}$ (standard being $68\,\mathrm{ns}$).

On the reduced device map, we search for 9 chains with the lowest error metric

$$ \prod_{k=1}^{n} (1 - \epsilon_{readout})^{k} (1 - \epsilon_{1Q})^k \cdot \prod_{l=1}^{(n-1)/2} (1 - \epsilon_{2Q})^{nl/2}$$

that also admit the largest number of ancilla qubits. Ancillas are chosen from qubits neighboring the chain, provided that $CZ$ gate error is below 0.51\%. When potential ancillas have support on multiple data qubits, we choose the one closest to the center. This heuristically increases the likelihood of detecting errors in the circuit.

As layer fidelity errors are sensitive to the crosstalk specific to the layout, we retake these benchmarks on the 9 chains and again rank them using the metric above.

\subsection{Evaluating check performance}

After obtaining the set of ancilla supports from the layout, we can equip our circuits with spacetime Pauli checks. We use the check picking algorithm as described in Section \ref{sec:check_and_doping}. In particular, we use the windowed algorithm with parameter $f = 0.5$, cost function $\Gamma$ of the undetected noise channel, and input a $CZ$ error of $0.2\%$ and coherence $200\,\mathrm{\mu s}$ to the underlying Lindblad noise model.

After carefully selecting a loop layout of length 64, we isolated 19 ancilla candidates able to host Pauli checks (i.e. adjacent to the main loop).
Picking checks for those 19 locations shows that some locations do not provide any new coverage. Figure~\ref{fig:check_picking} shows such a check picking trace. We can easily notice that some checks do not provide any new coverage: the post-selection rate remains unchanged when adding those checks and the resulting fidelity ends up slightly worse (due to hook errors). Overall, for our experiment, we retained 12 locations, discarding 7 ancilla qubits.
\begin{figure*}[h] 
    \centering
\includegraphics[width=0.5\linewidth]{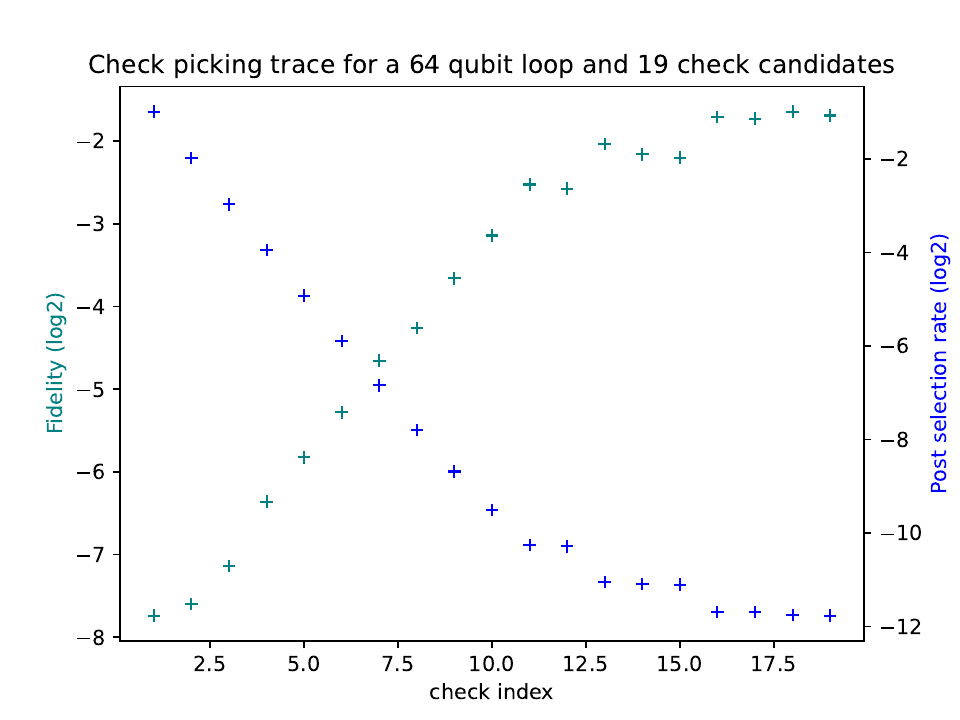}
\caption{  {\bf Check picking trace.} We picked 19 checks with locations guided by the chosen experiment layout. Each point is simulated via Monte Carlo with $10^8$ shots. We can clearly see that some check locations do not bring new coverage: the post-selection rate remains unchanged and the fidelity slightly degrades. Those locations are then removed by hand and new checks are picked until no such situation occurs anymore. For our experiment, we ended up discarding 7 of the 19 available ancilla.}
\label{fig:check_picking}
\end{figure*}

When the checks have been inserted, doping sites are selected as described in Section~\ref{sec:t-doping} and inserted into the circuit as parametric $RZ$ gates. This ensures that all doped circuits are the same as the base circuit, modulo a parametric change in virtual rotation angles. As an example, $T$ doping a particular site corresponds to setting its parameter value to $\pi/4$.

\subsection{Twirling implementation}

We use Qiskit's Samplomatic library to implement randomized compiling, parametric basis changes, and readout twirling~\cite{samplomatic,wallman2016randomized,PhysRevA.105.032620}. To achieve this, individual $CZ$ gates and the single-qubit gates that precede them are wrapped in twirl-annotated boxes, and similarly, measurement layers are wrapped in boxes annotated with change-of-basis and twirling annotations. The annotations are configured to the three-parameter $RZ(a)-RX(b)-RZ(c)$ decomposition.

We emphasize that this method of twirling $CZ$ gates individually departs from typical implementations of twirling which separate the circuit into layers or rounds of ``easy'' (single qubit) and ``hard" (two qubit) gates. This is more amenable for our circuit, which has many unique layers of $CZ$ gates. Twirling in layers significantly increases the number of nontrivial single qubit gates added to the circuit, particularly on the ancillas, and the overall depth. In practice, this risks creating hook errors and generally has lower fidelities and post-selection rates. However, twirling individual $CZ$ gates also introduces long idling times after each gate, which are therefore mitigated with $XY4$ dynamical decoupling sequences, and presents the possibility that long range crosstalk or other out-of-model errors are not suppressed adequately.

Samplomatic's build mechanism produces a tuple containing a parametric quantum circuit, called the template circuit, and a sampling graph, the latter of which can be used to query for arbitrary numbers of random parameter sets under any doping and change-of-basis configuration.
Each random parameter set is expressed as a vector of real numbers that is a valid input to the template circuit.
IBM's low-level software stack supports fast late-stage binding of parameter values to a fixed parametric circuit.
This ensures that every shot executed, independent of twirling, doping or change of basis values, has exactly the same timing schedule, and varies only in $RZ$ phase values and $RX$ gate amplitudes.

\subsection{Experiment settings and results}

\paragraph{Transpilation and Scheduling:} We transpile circuits in Qiskit using the ``light optimization'' setting and do not reroute the circuit.
Circuit instructions are scheduled with the as-late-as-possible (ALAP) option and the standard delay of $250\,\mathrm{\mu s}$ between shot executions (see Figure~\ref{fig:loop_circuit_schedule} for a diagram). Extra pulses are added to implement symmetric $XY4$ dynamical decoupling~\cite{symmetricdd} during idle periods, and post-selection gadgets are used for correlated non-Markovian errors~\cite{olepaper}, which adds measurements on spectator qubits adjacent to the data and ancilla qubits. This extra post-selection step typically improves fidelity by $10\%$ at the cost of a $15\%$ reduction in shot survival. We benchmark the fidelity of the gadget before the experiments and ignore edges with values below 0.9, which can be indicative of poor readout or 1Q gate calibrations. We also report that we use an update to Qiskit primitives that initializes every qubit in a random $Z$ frame before each execution, which has been shown to average out coherent crosstalk in single qubit gates~\cite{phaserandomization}.

\begin{figure*}[h] 
    \centering
\includegraphics[width=0.95\linewidth]{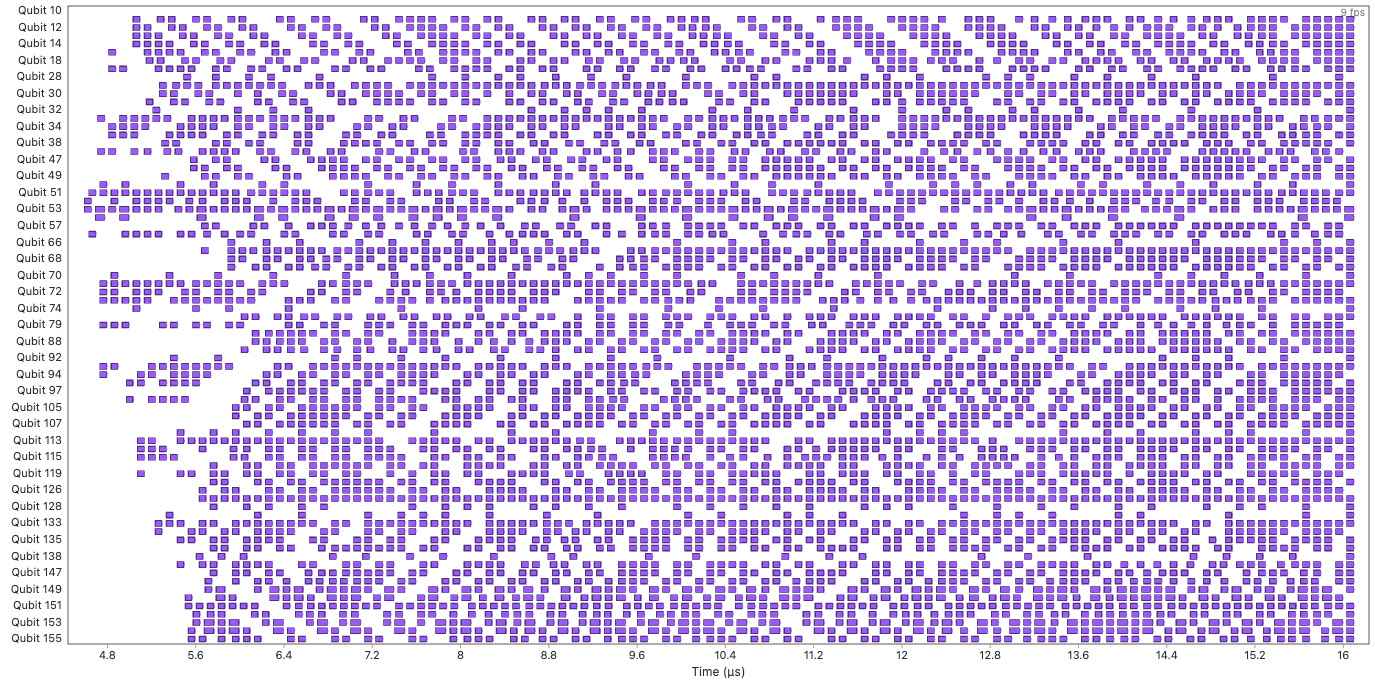}
\caption{  {\bf Circuit schedule.} Some qubit labels are omitted on the y-axis for legibility. With ALAP scheduling, the circuit is unstructured, having many unique layers of CZ gates. }
\label{fig:loop_circuit_schedule}
\end{figure*}

\paragraph{Parameterization:} Each circuit is parameterized with various levels of $I$, $S$, $Z$, or $T$ doping and an appropriate set of Pauli bases for measurements. When doping with Clifford gates ($I$, $S$, $Z$), we measure in random stabilizer bases to perform DFE (see Section \ref{sec:dfe}). Stabilizers here are drawn by multiplying over randomly chosen subsets of the generators. All $T$-doped circuits are sampled in the $Z$ basis to perform cross-entropy benchmarking (see Section \ref{sec:xeb}). For each doping and Pauli basis parameterization input, we fix the seed and draw the same 50 twirl instances.

\paragraph{Subjob construction:} To efficiently schedule the large number of parameterizations, we portion them into separate subjobs. We prioritize measuring all of the doping and basis change parameterizations together to ensure each experiences a similar noise environment. Therefore, each subjob chooses a random subset of 5 twirls and runs the checked circuit and a TREX readout calibration circuit (excluding spectator qubits) parameterized over this subset \cite{PhysRevA.105.032620}. We then randomly order this reduced set, which should reduce systematic biases from drift. Furthermore, the 5 twirls are drawn such that each contiguous block of 10 subjobs covers the entire set of 50 twirls exactly once.

\begin{table*}[!t]
\centering
\small
\setlength{\tabcolsep}{5pt}
\renewcommand{\arraystretch}{1.05}

\begin{tabular}{@{}rcccccrr@{}}
\toprule
& \multicolumn{7}{c}{Experiment 1} \\
\cmidrule(lr){2-8}
& Metric & Fid. & Fid. (RM) & PSR (\%) & Shots & Bases & Time (s) \\
\midrule
0       & DFE         & 0.38(1) & 0.55(2) & 0.0383(7) & 20 M     & 80   & 5,551 \\
314 (T) & Lower bound & 0.349   & ---     & 0.040(1)  & 3.5 M    & 1    & 972   \\
314 (S) & DFE         & 0.38(1) & 0.55(2) & 0.0379(6) & 20 M     & 80   & 5,551 \\
314 (Z) & DFE         & 0.38(1) & 0.54(2) & 0.0377(6) & 20 M     & 80   & 5,551 \\

\midrule
& \multicolumn{7}{c}{Experiment 2} \\
\cmidrule(lr){2-8}
0       & DFE         & 0.38(2) & 0.59(3) & 0.040(1)  & 5 M      & 40   & 1,426 \\
5 (T)   & DFE         & 0.42(2) & 0.64(3) & 0.0405(5) & 35.125 M & 40   & 7,489 \\
5 (S)   & DFE         & 0.39(2) & 0.61(3) & 0.041(1)  & 5 M      & 40   & 1,426 \\
75 (T)  & XEB         & 0.38(4) & 0.51(5) & 0.040(1)  & 3 M      & 1    & 856   \\
75 (S)  & DFE         & 0.42(2) & 0.65(3) & 0.040(1)  & 5 M      & 40   & 1,426 \\
314 (T) & Lower bound & 0.335   & ---     & 0.042(1)  & 3 M      & 1    & 856   \\
314 (S) & DFE         & 0.40(2) & 0.61(3) & 0.042(1)  & 5 M      & 40   & 1,426 \\

\midrule
& \multicolumn{7}{c}{Experiment 3} \\
\cmidrule(lr){2-8}
0       & DFE         & 0.0029(5) & 0.0045(8) & 40.69(2) & 9.4 M & 1000 & 2,905 \\
\bottomrule
\end{tabular}

\caption{\textbf{Summary of experimental results.}
All experiments use 50 twirls and TREX calibrations with 20,000 shots
per twirl, divided among 100, 100, and 50 subjobs, respectively.
Fidelities before and after readout mitigation (RM) are shown with
standard errors. At maximum doping, we report the one-sided normal
95\% lower confidence limit obtained from the zero-doping fidelity and
the maximum fidelity loss from harmless-to-harmful fault conversion.
Post-selection rates (PSRs) include syndrome and correlated
non-Markovian error checks. Shots are split equally among bases, except
in the 5-T experiment, where they are allocated according to the
measured observable's expectation value. Device usage times are obtained
from Qiskit and apportioned according to shot count.}
\label{tab:tab_exp_results}
\end{table*}

\paragraph{Fidelity validation experiments:} As described in the main text, we take two experiments to build confidence that fidelity is consistent across all doping values (up to a numerically-estimated drop of $\approx$1\%). At their core, these consist of collecting samples at maximum $T$ gate doping, which we claim will be difficult to simulate classically, and simultaneous benchmarks with DFE at zero doping, which validate the fidelity of the samples. 

As a note on reporting uncertainties and bounds for fidelity, recall that, with DFE, fidelity is estimated by sampling a finite number of stabilizers. To quantify the uncertainty in these estimates, we use a normal approximation. Error bars correspond to two-sided 95\% confidence intervals, computed as the estimate $\pm 1.96$ standard errors.  Numerical values in the text are reported using the standard uncertainty notation, in which the number in parentheses gives the estimated standard error in the last reported digit of the quoted value. For the lower bound on the fidelity of the maximally doped state (which is not actually measured unlike the values above), we again use the normal approximation and report the difference between the lower 97.5\% confidence limit of the fidelity and the upper 97.5\% confidence limit of the maximum drop from harmless error conversions.

In Experiment 1, we test whether fidelities will show discrepancies when utilizing many stabilizer measurements. This allows us to estimate the fidelity with high precision and to bound the true fidelity with high confidence (as described above). We therefore add only two benchmarks, maximum doping using $S$ and $Z$ gates, to check whether doping is error-free. The $Z$ doped fidelity, in fact, should be precisely the same; up to a phase, we measure the exact same set of stabilizers as in the undoped case. Our results show no evidence of significant differences in the fidelities, as all pairwise differences are consistent within one standard error (see Table~\ref{tab:tab_exp_results}). Furthermore, at 95\% confidence, the smallest distribution-free lower confidence limit from Hoeffding bound is 0.11. Seeing as this is the worst-case assumption, we have evidence that we can prepare states with above-zero fidelity.

In Experiment 2, we relax the number of stabilizer measurements, seeking to compare fidelities over the intermediate range of doping and across different fidelity metrics. We choose random sites to dope both with $T$ and $S$ gates: first at low doping ($\ll n$), where it is still tractable to perform DFE for both, medium doping ($O(n)$), where we can compare XEB for $T$ and DFE for $S$ gates, and maximum doping ($O(n^2)$), where only DFE for $S$ gates can be measured. Although there are larger discrepancies in the fidelities than in Experiment 1, we find that the lowest doped fidelities are consistent with the maximum 1\% drop (see Figure~\ref{fig:fig4}), and the normal 95\% confidence intervals on the pairwise differences all overlap with zero.

As an additional comment on the classical simulation time in Experiment 2, we use the QuiZX package to compute the amplitudes of outcomes and therefore XEB. We ran one job per bitstring resulting in 1197 independent runs. These are scheduled in parallel on a local cluster, with each job using 56 threads. We use nodes either with the AMD EPYC 7763 64-Core Processor (128 CPUs, 1.9 TB memory) or the AMD EPYC 9634 84-Core Processor (168 CPUs, 2.9 TB memory). We report that each sample takes an average of 54,041 seconds to complete - a nontrivial amount that exceeds total sample collection time on quantum hardware.

In Experiment 3, we measure the fidelity of the unencoded circuit without spacetime Pauli checks or doping. Due to the absence of checks, this circuit is shorter and exactly follows the brickwork $CZ$ layers, which requires modifications to the set of calibrations (see Section \ref{sec:hardware}). As reported, the fidelity in Experiment 1 (using syndrome post-selection) is $131 \times$ larger than the bare circuit value.

\paragraph{Additional comments on post-selection:}
Post-selection presents two challenges for our experiments. To properly implement twirling, we require additional overhead for shots. Otherwise, twirls can be left ``empty", i.e., with no surviving shots. As noted in Section \ref{sec:psrc}, it is the number of unique twirls, not the shots per twirl, that reduces the bias in our estimator for expectation values. This motivates our large shot budgets for Experiment 1 and 2, where we respectively observe a maximum of 14 and 27 empty twirls across all measured stabilizers.

Increasing the shots, however, results in a longer experiment. This risks degradations in overall fidelity and post-selection rate from calibration drift and other sources of time-dependent noise. In Figure~\ref{fig:fig_2608_cumul_avg}, which shows post-selection rates per subjob, there is a negative trend in post-selection rates that exceeds the scale expected from shot noise. As described in Section \ref{sec:hardware}, this is partially addressed for Experiment 2 by recalibrating pulse parameters midway through. It remains an open question to determine the set and frequency of calibrations that can best stabilize the fidelity and post-selection rate.

\begin{figure*}[h!] 
    \centering

\includegraphics[width=1.0\linewidth]{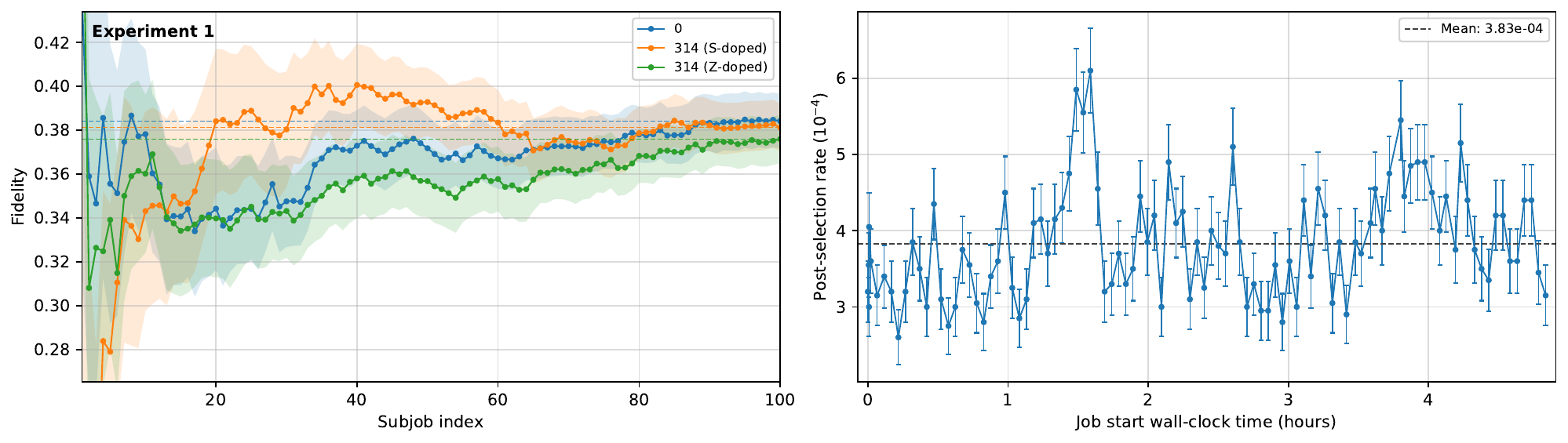}

\includegraphics[width=1.0\linewidth]{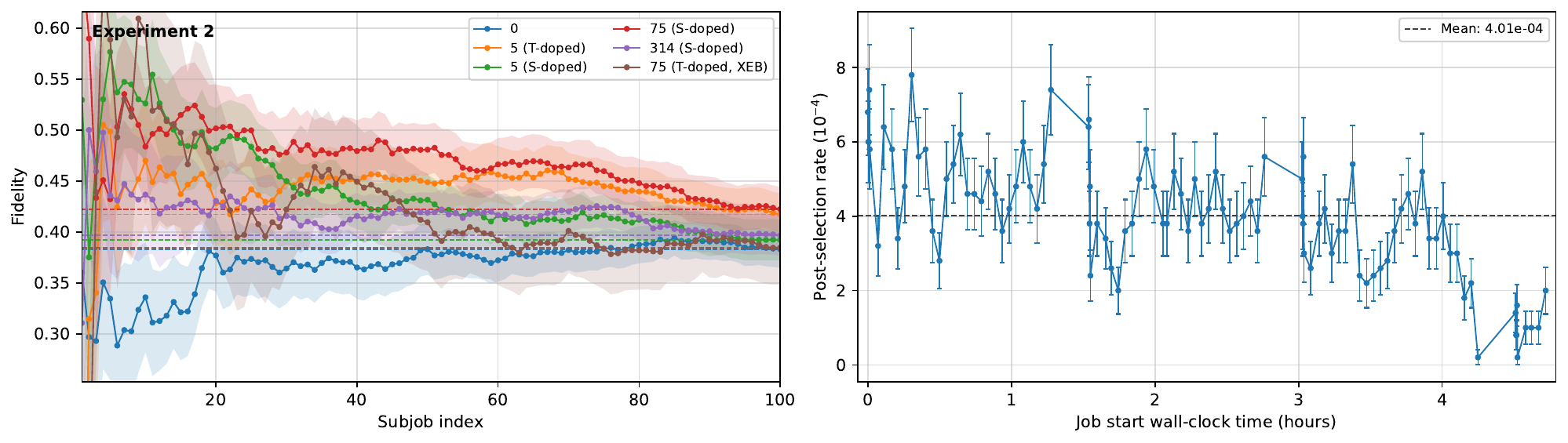}

\caption{{\bf Fidelities and post-selection rates over subjobs}. (Left) Fidelities for Experiment 1 (top) and 2 (bottom), cumulatively averaged over subjobs. Discrepancies appear to converge over time. (Right) Post-selection rate for each subjob with wall-clock time for Experiment 1 (top) and Experiment 2 (bottom). Post-selection rate degrades in both over time, possibly from calibration drift or TLS. Gaps in time for Experiment 2 are due to periodic recalibrations of single-qubit gates and readouts.}
\label{fig:fig_2608_cumul_avg}
\end{figure*}

\section{Hardware properties and calibration} \label{sec:hardware}

\subsection{Calibration procedure}

To achieve the high fidelities in our experiments, we recalibrate pulse parameters for qubits in the layout before each experimental run. Aside from exceptions mentioned later, we follow the typical automatic recalibration schemes for 1Q gates, 2Q gates, and readout (spectator qubits only recalibrate readout). These include rebiasing electrodes on each qubit to minimize coupling to neighboring two-level systems (TLS) and maximize $T1$ \cite{tlsbias}. While this was done to maximize performance, calibrating with the averaged noise strategy may help stabilize $T1$ and therefore the post-selection rate over long periods.

 To note changes we make to the typical calibration scheme:
 \begin{itemize}
  \item To mitigate the effects of crosstalk in $CZ$ gates, we calibrate them simultaneously in batches that match the circuit's layer structure. Recall that the checked circuits are largely unstructured and contain many unique layers of CZ gates. We therefore use a sparse batching scheme with minimum distance 5 (Experiment 1 and 2), while the unchecked circuit batches according to the two underlying brickwork layers (Experiment 3).
  
  \item $CZ$ gates can also have long-lasting transients from signal distortion in the flux control lines \cite{olepaper}. So, when a $CZ$ gate is repeated in quick succession, the latter gates can experience $IZ$ and $ZI$ over-rotations. To correct for this, we characterize the over-rotation between adjacent layers of $CZ$ gates for a few points representative of the delays in our circuit. The circuit is then recompiled with $RZ$ gates applied directly before each $CZ$ gate and interpolated angles (matching the delay) to cancel this over-rotation.
  
  \item Some qubits exhibit large fluctuations in frequency, suspected to be from TLS effects. For these, we calibrate frequencies over the average of ten back-to-back experiments. 
\end{itemize}
The resultant gate errors after these calibrations are reported in Table~\ref{tab:tab_exp_benchmarks}.

For the second experiment described in Section~\ref{sec:experiments}, we recalibrate single-qubit gate parameters, readout, and qubit electrode bias periodically, which leaves a gap in shot execution.  We refrain from recalibrating CZ gates, which are the most time-consuming step.

\begin{table*}
\centering
\begin{tabular}{@{}rcccccrccccc@{}}\toprule
& \multicolumn{5}{c}{Data Qubit Metrics} & \phantom{abc}& \multicolumn{5}{c}{Ancilla Qubit Metrics} \\
\cmidrule{2-6} \cmidrule{8-12}
& Mean & Med. & Min. & Max. & Std. && Mean & Med. & Min. & Max. & Std. \\ \midrule
$RX$ Error (\%) & 0.021 & 0.017 & 0.0079 & 0.096 & 0.015 && 0.021 & 0.018 & 0.012 & 0.047 & 0.0098 \\
$CZ$ Error (\%) & 0.18 & 0.148 & 0.070 & 0.612 & 0.098 && 0.147 & 0.127 & 0.080 & 0.249 & 0.054 \\
RO Error (\%) & 0.432 & 0.342 & 0.073 & 1.95 & 0.321 && 0.458 & 0.305 & 0.171 & 1.27 & 0.345 \\
$T1$ ($\mu$s) & 247.4 & 245.2 & 45.3 & 365.7 & 59.8 && 238.9 & 262.2 & 44.4 & 312.2 & 80.2 \\
$T2$ ($\mu$s) & 293.3 & 298.0 & 68.7 & 487.0 & 100.8 && 273.7 & 305.0 & 88.5 & 379.5 & 82.4 \\
\bottomrule
\end{tabular}
\caption{{\bf Gate and qubit metrics for the 64 data-qubit and 12 ancilla-qubit layout on IBM Boston at the beginning of Experiment 1.} Metrics are taken from the backend-properties snapshot associated with the first main job. $RX$ and $CZ$ errors are backend-reported gate-error estimates. Data-qubit $CZ$ statistics use the 64 specified data--data couplers, while ancilla $CZ$ statistics use the 15 specified couplers incident to an ancilla. RO denotes the backend-reported readout error; $T1$ and $T2$ are the reported coherence times.}
\label{tab:tab_exp_benchmarks}
\end{table*}

\subsection{Noiseless implementation of $T$ gates} \label{sec:noiseless-t}

A key part of our verification scheme relies on $T$ gates being implemented noiselessly. This is natural to superconducting qubit platforms, such as IBM's, in which single-qubit gates are driven by microwave pulses and rotational $Z$ gates can be implemented virtually by frame-tracking in software~\cite{virtualz}. To illustrate this, consider a microwave pulse with amplitude $\Omega$ and phase offset $\gamma$, which implements single-qubit unitary

$$ U = e^{-i \frac{\Omega T}{2} [ cos(\gamma)\hat{X} + sin(\gamma)\hat{Y} ] } $$

The axis of rotation in the $XY$ plane, then, can be modulated through $\gamma$, with amplitude $\Omega$ and time $T$ collectively defining the angle of rotation around said axis.

Any rotational $Z$ gate can therefore be applied in software by modifying this phase offset and adding it to the next rotational $X$/$Y$ pulse. So, $Z$ gates are implemented with no pulse, which means no calibration error, relaxation error, or leakage. The only source of imperfection comes from the stability of the reference oscillator defining the rotating frame, which is estimated to be less than $10^{-8}$~\cite{ball2016role}.\footnote{For open quantum systems, it has also been demonstrated that rotational $Z$ gates can change the trajectory of the gate through the Bloch sphere and therefore change the noise on the gate \cite{symmetricdd}. While this can adversely affect dynamic decoupling sequences, which repeat a sequence of pulses to cancel out idling or other coherent errors, we ignore these effects for our random circuits.}

\newcommand{\CZ}{\mathrm{CZ}}
\newcommand{\CZe}{\CZ_{\mathrm e}}
\newcommand{\CZo}{\CZ_{\mathrm o}}
\newcommand{\HH}{H^{\otimes n}}
\newcommand{\Fe}{F_{\mathrm e}}
\newcommand{\Fo}{F_{\mathrm o}}
\newcommand{\R}{R}
\newcommand{\Xlab}{\mathbf X}
\newcommand{\Zlab}{\mathbf Z}
\newcommand{\poly}{\mathrm{poly}}
\newcommand{\Cauchy}{\operatorname{Cauchy}}
\newtheorem{assumption}{Assumption}
\newtheorem{fact}{Fact}
\newcommand{\C}{\mathbb C}
\newcommand{\Prb}{\mathbb P}
\newcommand{\Unif}{\operatorname{Unif}}
\newcommand{\TV}{d_{\mathrm{TV}}}
\newcommand{\diag}{\operatorname{diag}}
\newcommand{\abs}[1]{\left\lvert #1\right\rvert}
\newcommand{\norm}[1]{\left\lVert #1\right\rVert}

\section{Asymptotic complexity of sampling T-doped Clifford circuits}\label{sec:asymptotic_hardness}

\noindent In this section, assuming appropriate complexity-theoretic conjectures, we will prove worst-case and average-case hardness of classical simulation for our circuit ensemble.

For simplicity and for better presentation, we will first show asymptotic worst-case hardness for open boundary conditions. Then, we will have a concluding part (in \Cref{loops}) laying out how the concepts generalize for closed boundary conditions, which is our actual experimental ensemble. Then, we will have a worst-to-average case reduction. This reduction is agnostic to architecture. 

To this end, define odd and even layers of $CZ$ gates as follows:

\[
    \CZe
    =
    \CZ_{1,2}\CZ_{3,4}\cdots \CZ_{n-1,n},
\]
and
\[
    \CZo
    =
    \CZ_{2,3}\CZ_{4,5}\cdots \CZ_{n-2,n-1}.
\]

The circuit has the gate sequence:
\begin{equation}
\label{sequence1}
    H^{\otimes n},\quad \CZe,\quad \mathcal{R}_Z,\quad H^{\otimes n},\quad \CZo,\quad \mathcal{R}_Z,\quad H^{\otimes n},\quad \CZe,\quad \mathcal{R}_Z,\quad \cdots
\end{equation}
\noindent where $\mathcal{R}_Z$ denotes a layer of single qubit rotation gates.  We will prove the following theorem.

\begin{theorem}
\label{thm: hardness_theorem}
For programmable rotation angles and polynomial depth, the described circuit family is universal.
\end{theorem}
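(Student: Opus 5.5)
The plan is to reduce \Cref{thm: hardness_theorem} to the standard universality of $\{J(\theta),\CZ\}$, where $J(\theta):=H R_Z(\theta)$. This set underlies the brickwork state of Broadbent--Fitzsimons--Kashefi and the one-way model of Childs--Leung--Nielsen. In the gate sequence \eqref{sequence1}, the single-qubit operation acting on a wire between two consecutive $\CZ$ layers is exactly $H R_Z(\theta)=J(\theta)$ with a programmable angle. So a circuit in our family is a fixed brickwork of $\CZ$ gates, $\CZe,\CZo,\CZe,\dots$, interleaved with programmable $J$ gates on every qubit. I would split universality into two tasks: (i) arbitrary single-qubit unitaries on every wire, and (ii) a way to apply a $\CZ$ (equivalently a $\mathrm{CNOT}$ or $\mathrm{SWAP}$) on a chosen neighbouring pair while leaving every other pair effectively untouched. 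Together with polynomial depth, the Solovay--Kitaev theorem then approximates any polynomial-size circuit.

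For (i), I would use the Euler-type identity $J(\alpha)J(\beta)J(\gamma)\propto R_X(\alpha')R_Z(\beta')R_X(\gamma')$ up to a fixed $H$. This means three consecutive $J$'s reach all of $SU(2)$, with $J(0)=H$ and $J(\pi/2)J(\pi/2)J(\pi/2)\propto H$ available as building blocks. The difficulty is that between these $J$'s the wire is always hit by a $\CZ$ from the fixed brickwork. So (i) cannot be done in isolation and must be folded into the gadget of (ii).

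For (ii), I would work with constant-depth blocks on the $1$D chain, reasoning in the Heisenberg/stabilizer picture with Clifford angles $\theta\in\{0,\pi/2,\pi,3\pi/2\}$. The key identities are $\mathrm{CNOT}=(I\otimes H)\CZ(I\otimes H)$ and $\mathrm{SWAP}=\mathrm{CNOT}_{12}\mathrm{CNOT}_{21}\mathrm{CNOT}_{12}$. From these, three $\CZ$'s on the same pair, separated by appropriate $J$'s, give a $\mathrm{SWAP}$ up to local Cliffords. The aim is to exhibit a constant-depth block of \eqref{sequence1} with Clifford angles whose net action on the chain is an ``identity-like'' local Clifford permutation, for instance a global shift or a product of local Cliffords that can be undone. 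This block plays the role of a wire in which the forced $\CZ$s are absorbed. I would find it by computing the stabilizer tableau of a few periods of $H^{\otimes n}\CZe H^{\otimes n}\CZo$, which is a translation-invariant Clifford cellular automaton and hence periodic up to local Cliffords. A small computer search over Clifford angle assignments on a periodic unit cell should produce such a block, since this automaton has finite order.

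Given this idle block, I would modify its angles locally on one brick to get two further blocks. The first inserts a single extra $\CZ$ (or omits one) on a chosen pair, yielding an entangling gate. The second inserts a non-Clifford $J(\theta)$ at a chosen site, which yields arbitrary single-qubit unitaries by (i). Conjugating back by the known local Cliffords of the idle block gives a universal gate set on the logical qubits with constant overhead per gate, and hence polynomial total depth. The main obstacle is the existence and verification of this idle block, i.e.\ neutralizing the always-on $\CZ$ layers. If no exact identity block exists, I would instead encode each logical qubit in a pair of physical sites, or use the MBQC picture: the circuit on $\ket{0^n}$ with a final measurement is equivalent to measuring a brickwork graph state, whose universality is known. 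I would then argue that post-selection-free universality follows because the measurement angles can be fed forward as programmable $R_Z$ angles.
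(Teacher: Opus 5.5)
\textbf{The gap is the idle block.} Your construction needs a constant-depth block with Clifford angles whose net action is a product of local Cliffords, or a shift. You justify it by saying the brickwork automaton has finite order. In the bulk that is false, and no such block exists.

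Every wire meets a $\mathrm{CZ}$ in every layer. So in symplectic form each unit ($\mathrm{CZ}$ layer, $\mathcal{R}_Z$ layer, Hadamard layer) acts as $(x,z)\mapsto(Kx+z,\,x)$, i.e.\ $x_{t+1}=K_t x_t+x_{t-1}$. Here $K_t$ is the adjacency matrix of the even or odd matching plus a diagonal marking the $S$-type angles. Start from a single-site Pauli. The component of $x_t$ farthest from the start can arise only by applying the matching at every step. Because the two matchings alternate, this component moves outward one site per layer with coefficient $1$ and never cancels.

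Hence the image of any single-site Pauli grows ballistically and is never again a single-site operator while its front is in the bulk. It is neither a local Clifford nor a shift, so the computer search you propose would come back empty. On a finite chain some period exists, but it is at least of order $n$, since the front must reach the boundary. Finiteness of the Clifford group alone gives no polynomial bound on it.

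Three smaller points:
\begin{itemize}
\item The $\mathrm{CZ}$s are not programmable, so you cannot ``insert or omit'' one by changing angles. The entangling gadget has to come from angles alone.
\item The MBQC fallback does not rescue the argument. The family has no intermediate measurements, so there is nothing to feed forward. The circuit corresponds to postselected MBQC on a graph with vertical edges in every column. That is not the BFK brickwork, whose universality relies on $J(0)J(0)=I$ stretches between vertical edges, and those stretches do not exist here.
\item With your convention $J(\pi/2)\propto HS$, so $J(\pi/2)^3\propto I$, not $H$.
\end{itemize}

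\textbf{How the paper avoids this.} It accepts a linear-depth idle block. With all rotations off, $G=\mathrm{CZ}_{\mathrm o}H^{\otimes n}\mathrm{CZ}_{\mathrm e}H^{\otimes n}$ satisfies $G^q=\mathbb{I}_n$ for some $q=O(n)$ on the open chain. The paper argues this informally by tracking Paulis ``falling off the chain.'' A rigorous version is as follows. $G$ acts as $M=\begin{pmatrix} I & E\\ O & I+OE\end{pmatrix}$, with $M+M^{-1}=\mathrm{diag}(EO,OE)$. This is nilpotent because the open boundary makes $EO$ and $OE$ partial shifts by two. Then $(M+I)^2=M(M+M^{-1})$ gives $M^{2^m}+I=(M+I)^{2^m}=0$ once $2^m\ge n$. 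So $G^{2^m}$ is a Pauli and $G^{2^{m+1}}\propto\mathbb{I}_n$.

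Turning on a single rotation at a chosen layer then yields that rotation conjugated by a short Clifford prefix:
\begin{itemize}
\item $R_{Z_j}$ when placed in the last layer;
\item $R_{X_j}$ when placed right after the first Hadamard layer;
\item $R_{Z_iX_j}$, for an even edge $(i,j)$, when placed right after the second Hadamard layer.
\end{itemize}
The last of these supplies the entangling gate from a single angle. Concatenating $O(n)$-depth blocks, each with one active rotation, gives polynomial depth. If you replace your constant-depth idle block with $G^q$, and your extra-$\mathrm{CZ}$ gadget with $R_{Z_iX_j}$, your outline goes through.
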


\subsection{Showing universality: Proof of Theorem \ref{thm: hardness_theorem}}
\noindent The proof will consist of two steps.
\begin{itemize}
\item First, we show that if we turn off all the rotation gates, then we get back the identity after sufficiently many repetitions of the Hadamard and odd/even $\CZ$ layers. This follows from elementary Pauli conjugation relations.
\item Then, we will show that if we turn on one or two rotations at specific places of the circuit, we can recover a universal set of gates.
\end{itemize}
\subsubsection{Getting back the identity}
We will use the following facts.
\[
    H X H = Z,
    \qquad
    H Z H = X,
\]
and, for a controlled-$Z$ on qubits $a$ and $b$,
\begin{equation}
\label{rel1}
    \CZ_{a,b}X_a\CZ_{a,b}=X_aZ_b,
    \qquad
    \CZ_{a,b}X_b\CZ_{a,b}=Z_aX_b,
\end{equation}
while
\begin{equation}
\label{rel2}
    \CZ_{a,b}Z_a\CZ_{a,b}=Z_a,
    \qquad
    \CZ_{a,b}Z_b\CZ_{a,b}=Z_b.
\end{equation}
So, a $Z$ passes through $\CZ$ unchanged, but an $X$ picks up a neighboring $Z$. Let:
\begin{equation}
\label{defG}
G = \CZ_o H^{\otimes n} \CZ_eH^{\otimes n}.
\end{equation}
\noindent Let $X_r$ or $Z_r$ be the Pauli $X$ or $Z$ operator on qubit $1 \leq r \leq n$ respectively. Using \eqref{rel1}, \eqref{rel2}, and $\eqref{defG}$, we will track the evolution of these two operators under successive applications of $G$. For odd \(r\),
\begin{equation}
\label{chain1}
X_r
\xrightarrow{\ G\ }
Z_{r-1}X_r
\xrightarrow{\ G\ }
Z_{r-3}X_{r-2}X_r
\qquad\qquad
Z_r
\xrightarrow{\ G\ }
Z_rX_{r+1}Z_{r+2}
\xrightarrow{\ G\ }
Z_rZ_{r+2}X_{r+3}Z_{r+4}.
\end{equation}

\noindent For even \(r\),
\begin{equation}
\label{chain2}
X_r
\xrightarrow{\ G\ }
X_rZ_{r+1}
\xrightarrow{\ G\ }
X_rX_{r+2}Z_{r+3}
\qquad\qquad
Z_r
\xrightarrow{\ G\ }
Z_{r-2}X_{r-1}Z_r
\xrightarrow{\ G\ }
Z_{r-4}X_{r-3}Z_{r-2}Z_r.
\end{equation}

\noindent Observe that once a Pauli ``falls off the chain,'' i.e., the index $r$ is either less than $1$ or greater than $n$, then it is equivalent to applying the identity on all qubits. 

From \eqref{chain1} and \eqref{chain2}, it follows that if we start with any $n$-qubit Pauli operator, then after some $q =\mathcal{O}(n)$ applications of $G$, it is mapped onto itself, as all the extra terms ``fall off the chain.'' Thus, there exists a $q = \mathcal{O}(n)$ such that
\begin{equation}
\label{G_identity}
G^q = \mathbb{I}_n,
\end{equation}
\noindent where $\mathbb{I}_n$ is the identity operator on $n$ qubits.
\subsubsection{Creating new gates}
\noindent Fix $q$ as in \eqref{G_identity}. Now, we will use \eqref{G_identity} to create new gates.  First define ${R_Z}_j(\alpha)$ as applying $e^{i \alpha Z}$ on qubit $j$. Similarly, $R_{Z_i X_j}$ is defined as the two qubit gate $e^{i \alpha Z_i X_j}$.
\paragraph{Getting ${R_Z}_j(\alpha)$ rotation.} If we turn on one ${R_Z}_j(\alpha)$ rotation gate in the very last layer, after the application of $G^q$, we get back the ${R_Z}_j(\alpha)$ rotation, as 
\[
{R_Z}_{j} (\alpha)G^q = {R_Z}_{j} (\alpha).
\]
\paragraph{Getting ${R_X}_j(\alpha)$ rotation.} If we turn on one ${R_Z}_j(\alpha)$ rotation after the first layer of Hadamards, we get an ${R_X}_j(\alpha)$ rotation. This is by
using
\[
R_{Z_j}(\alpha)H^{\otimes n}
=
H^{\otimes n}R_{X_j}(\alpha).
\]
Concretely, we obtain
\[
\begin{aligned}
G^{q-1}CZ_{\mathrm o}H^{\otimes n}CZ_{\mathrm e}
R_{Z_j}(\alpha)H^{\otimes n}
&=
G^{q-1}CZ_{\mathrm o}H^{\otimes n}CZ_{\mathrm e}
H^{\otimes n}R_{X_j}(\alpha)\\
&=
G^qR_{X_j}(\alpha)\\
&=
R_{X_j}(\alpha).
\end{aligned}
\]

\paragraph{Getting \(R_{Z_iX_j}(\alpha)\) and \(R_{X_iZ_j}(\alpha)\).}
Let \((i,j)\) be an even edge, so that \(CZ_{i,j}\) is contained in
\(CZ_{\mathrm e}\). Turn on one \(R_{Z_i}(\alpha)\) rotation after the
second layer of Hadamards, immediately before \(CZ_{\mathrm o}\). This gives an \(R_{Z_iX_j}(\alpha)\) rotation because
\[
\begin{aligned}
R_{Z_i}(\alpha)H^{\otimes n}CZ_{\mathrm e}H^{\otimes n}
&=
H^{\otimes n}R_{X_i}(\alpha)CZ_{\mathrm e}H^{\otimes n}\\
&=
H^{\otimes n}CZ_{\mathrm e}R_{X_iZ_j}(\alpha)H^{\otimes n}\\
&=
H^{\otimes n}CZ_{\mathrm e}H^{\otimes n}
R_{Z_iX_j}(\alpha).
\end{aligned}
\]
Here we used
\[
R_{Z_i}(\alpha)H^{\otimes n}
=
H^{\otimes n}R_{X_i}(\alpha),
\qquad
R_{X_i}(\alpha)CZ_{\mathrm e}
=
CZ_{\mathrm e}R_{X_iZ_j}(\alpha),
\]
and
\[
R_{X_iZ_j}(\alpha)H^{\otimes n}
=
H^{\otimes n}R_{Z_iX_j}(\alpha).
\]
\noindent Similarly, when $(i, j)$ is an odd edge, turning on one \(R_{Z_i}(\alpha)\) rotation immediately
before the final layer of Hadamards and \(CZ_{\mathrm o}\) gives an \(R_{X_iZ_j}(\alpha)\) rotation.

\noindent \paragraph{Showing universality.} Since the angles are freely programmable, the rotations
\(R_X(\alpha)\) and \(R_Z(\alpha)\) generate arbitrary single-qubit
unitaries, up to an irrelevant global phase, by the Euler-angle decomposition
\[
U=e^{i\delta}R_Z(\alpha)R_X(\beta)R_Z(\gamma).
\]
Additionally, we have two-qubit entangling gates on every pair of nearest neighbor qubits. Thus, by standard results like \cite{bremner2002practical}, this gate-set is universal, proving Theorem \ref{thm: hardness_theorem}.

\subsubsection{Generalization to closed loops} 
\label{loops}
Note that the proofs of this section also generalize to closed loops, where the qubit labels are taken cyclically. To close the loop, the extra gate is a $\operatorname{CZ}$ gate between qubits $n$ and $1$. 

Although the Paulis now ``wrap around'' instead of falling off the chain, repeated factors still cancel when they appear twice. Once we account for this periodicity using the same Pauli relations, we can still show that after sufficiently many layers, the Clifford skeleton gives back the identity gate. More concretely, if $n$ is a power of $2$, we can show that $G^{3n} \propto \mathbb{I}_n$. The rest of the steps are unchanged.

\subsection{Worst-case hardness of computing output probabilities and sampling}

\noindent The following lemma holds.

\begin{lemma}
Computing the output probabilities of the ensemble under consideration is $\#\text{P}$-hard. Moreover, an exact classical sampler from the output distribution collapses the polynomial hierarchy.
\end{lemma}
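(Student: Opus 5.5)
The plan is the standard postselection argument, built on the universality result of Theorem~\ref{thm: hardness_theorem}. Because the rotation angles are programmable and the depth is polynomial, any polynomial-size quantum circuit $U$ over a universal gate set can be embedded exactly in our ensemble with polynomial overhead. I would first fix a gate set in which the embedded circuits use only rotation angles from a finite set, for example $\{\pi/4,\pi/8,\dots\}$, so that Clifford$+T$ or Hadamard$+$Toffoli compilations are realized exactly. Using Solovay--Kitaev only up to inverse-exponential error would also suffice. The embedding uses the gadgets already constructed: each $R_Z$, $R_X$, $R_{Z_iX_j}$ and $R_{X_iZ_j}$ rotation is realized by switching on a single rotation between identity blocks $G^q$, where $q=\mathcal{O}(n)$. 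For closed loops I would use $G^{3n}\propto\mathbb{I}_n$ instead. Each logical gate therefore costs $\mathcal{O}(n)$ layers, and a circuit with $\poly(n)$ gates maps to a $\poly(n)$-depth member of the ensemble.

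For $\#$P-hardness, I would reduce from computing $\abs{\bra{0^n}U\ket{0^n}}^2$ for a universal circuit $U$, which is $\#$P-hard exactly (and to within multiplicative error). Concretely, I would take the Clifford$+T$ encoding of a Boolean function $f$, so that
\[
\bra{0^n}U\ket{0^n}=2^{-m}\sum_x(-1)^{f(x)}
\]
gives the gap of $f$. Since the embedding reproduces $U$ exactly, up to a global phase that does not affect probabilities, computing output probabilities of the ensemble is $\#$P-hard.

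For the sampling collapse, I would follow Bremner--Jozsa--Shepherd and Aaronson. The first step is to show that $\mathrm{PostBQP}=\mathrm{PP}$ transfers to the ensemble. Because the ensemble is universal and measures all qubits in the computational basis, postselected versions of our circuits decide all of PostBQP, hence $\mathrm{PostBQP}\subseteq\mathrm{Post}(\text{ensemble})$. The second step assumes an exact classical sampler, or more generally one that is multiplicatively accurate, which is what makes this part robust. Postselection on its outputs would then place the problem in PostBPP. Since $\mathrm{PostBPP}\subseteq \mathrm{BPP}^{\mathrm{NP}}\subseteq\Delta_3$, we would get $\mathrm{PP}\subseteq \Delta_3$. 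Toda's theorem gives $\mathrm{PH}\subseteq \mathrm{P}^{\mathrm{PP}}$, so the hierarchy would collapse to its third level.

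The main obstacle is ensuring the embedding is \emph{exact} and efficient, with the right gate set and the right handling of global phases. The difficulty shows up specifically in the closed-loop case, where the identity claim $G^{3n}\propto \mathbb{I}_n$ holds only up to phase and only for particular $n$. The phase is harmless for probabilities, and for $n$ not a power of two I would pad with idle qubits or restrict to a suitable subsequence of sizes. A secondary point is that postselected universality needs postselection on a single output qubit. This is immediate here because all qubits are measured.
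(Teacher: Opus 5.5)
Your proposal is correct and follows the paper's own argument. Universality lets the ensemble exactly realize circuits with $\#\mathrm{P}$-hard amplitudes (the paper cites Fourier sampling circuits, and your $H^{\otimes m}U_fH^{\otimes m}$ gap construction is the same idea), and $\mathrm{PostBQP}=\mathrm{PP}$ together with the Bremner--Jozsa--Shepherd reduction gives the collapse; your remarks on exact angles, global phases and closed loops fill in details the paper leaves implicit.

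One small correction: the middle step $\mathrm{BPP}^{\mathrm{NP}}\subseteq\Delta_3$ is not a known containment, so replace it with either $\mathrm{PostBPP}\subseteq\mathrm{P}^{\Sigma_2}=\Delta_3$ (via Stockmeyer's approximate counting) or with $\mathrm{PP}\subseteq\mathrm{BPP}^{\mathrm{NP}}\subseteq\Sigma_3\cap\Pi_3$ combined with Toda's theorem; either route gives the third-level collapse.
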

\begin{proof}
Using Theorem \ref{thm: hardness_theorem}, the ensemble is universal. Hence, it can implement circuits whose output probabilities are $\#\text{P}$-hard to compute, like Fourier sampling circuits \cite{fefferman2015power} . 

Moreover, the ensemble remains universal under post-selection and is equal to the complexity class $\text{PostBQP}$, which in turn is equal to $\text{PP}$ \cite{aaronson2004quantumcomputingpostselectionprobabilistic}. Then, if a sampler exists, the polynomial hierarchy collapses using standard reductions like the ones in \cite{Bremner_2010}.
\end{proof}

\noindent So, assuming the polynomial hierarchy does not collapse, a classical sampler from the worst-case output distribution does not exist.

\begin{remark}
\label{remark: additive error}
Using works like \cite{aaronson2010computational, fefferman2015power}, it holds that computing worst-case output probabilities to an additive error of $2^{-\text{poly}(n)}$ is also $\#\text{P}$-hard, for an appropriate choice of polynomial.  This observation will be used in our worst-to-average case reduction.
\end{remark}

\subsection{Worst-case to average-case reduction}
\label{sec:taylor-worst-to-average}

\noindent We will show a worst-to-average case reduction for computing output probabilities, using Remark \ref{remark: additive error}. The proof will involve a straightforward application of techniques in \cite{bouland2019complexity}. 

Let $(i, t)$ be the coordinates for the $i^{\text{th}}$ gate in the $t^{\text{th}}$ rotation layer. Define 
   the corresponding rotation gate as
   \[
   \mathcal{R}_{Z,i,t}(\alpha)
        := e^{i\alpha Z_i}.
\]
\noindent Let $M = \text{poly}(n)$ be the total number of rotation gates in the circuit. For an angle array
$\boldsymbol{\alpha}$, write
\[
    p_x(\boldsymbol{\alpha})
       :=
       \left|
       \langle x|U(\boldsymbol{\alpha})|0^n\rangle
       \right|^2.
\]
\noindent Let
$\boldsymbol{\alpha}^{\star}
       =\{\alpha_{i,t}^{\star}\}_{i,t}$ be the list of worst-case hard angles. For each programmable location $(i,t)$, independently sample $\beta_{i,t}\sim\operatorname{Unif}[0,2\pi)$ and let $h_{i,t}:=\beta_{i,t}Z_i$. For $\theta\in[0,1]$, define 
\begin{align}
\label{eq: truncated}
    \mathcal{R}_{Z,i,t}^{(\boldsymbol{\beta})}(\theta)
       &:=
       \mathcal{R}_{Z,i,t}(\alpha_{i,t}^{\star})
       \mathcal{R}_{Z,i,t}(\beta_{i,t})
       e^{-i\theta h_{i,t}}                                      \\
       &=
       \mathcal{R}_{Z,i,t}\!\left(
          \alpha_{i,t}^{\star}
          +(1-\theta)\beta_{i,t}
       \right).
    \label{eq:exact-taylor-path}
\end{align}
\noindent The angle additions are modulo $2 \pi$.
\paragraph{Recovering the original ensemble at $\theta = 0$ and the worst-case circuit at $\theta = 1$.} Observe that
\[
    \mathcal{R}_{Z,i,t}^{(\boldsymbol{\beta})}(0)
       =
       \mathcal{R}_{Z,i,t}\!\left(
          \alpha_{i,t}^{\star}+\beta_{i,t}
       \right),
\]
which is uniformly random modulo $2\pi$, whereas
\[
    \mathcal{R}_{Z,i,t}^{(\boldsymbol{\beta})}(1)
       =
       \mathcal{R}_{Z,i,t}(\alpha_{i,t}^{\star}),
\]
\noindent which recovers the worst-case hard instance.

\paragraph{Closeness between intermediate distributions and true distribution.} In the next sections, we will pick points close to $\theta = 0$ and define intermediate distributions to help us in interpolating the worst-case hard value. Let $\mathcal D_n$ denote the original average ensemble in which all $M$ 
 angles are chosen independently and uniformly modulo
$2\pi$, and let $\mathcal D_{\boldsymbol{\alpha}^{\star},\theta}$ denote the
distribution of the exact interpolating circuit
$U_{\boldsymbol{\beta}}(\theta)$. The angle
\[
    \alpha_{i,t}^{\star}
      +(1-\theta)\beta_{i,t}
      \pmod{2\pi}
\]
is uniform on an arc occupying a $1-\theta$ fraction of the circle.
Its total variation distance from the uniform distribution on the
entire circle is therefore $\theta$.  Since the $M$ angles are
independent, 
\begin{equation}
\label{theta}
    d_{\mathrm{TV}}\!\left(
       \mathcal D_{\boldsymbol{\alpha}^{\star},\theta},
       \mathcal D_n
    \right)
    \leq M\theta,
\end{equation}
where we have used a data processing inequality to show that constructing the circuit from information about its rotation angles cannot increase total variation distance. Since $M$ is polynomially large, we can choose $\theta$ to be inverse polynomially small such that Equation \ref{theta} is also inverse polynomially small.

\paragraph{Rounding the continuous distribution to finite values.} Without loss of too much generality, we will assume that the angles come from a continuous distribution. In reality, we will have to discretize the angles. However, one can show that the error from rounding scales roughly as $\sim 2^{-b}$, where b is the number of bits used to represent the angles. Thus, by picking $b$ appropriately, this error can be made too small to matter in computing probabilities.

\subsubsection{Truncating the Taylor series} 
\noindent We will now truncate the Taylor series to represent output probabilities as a low-degree polynomial. Define
\[
    T_K(A):=\sum_{k=0}^{K}\frac{A^k}{k!},
\]
and set
\begin{equation}
    \widetilde{\mathcal{R}}_{Z,i,t}^{
       (\boldsymbol{\beta},K)}(\theta)
       :=
       \mathcal{R}_{Z,i,t}(\alpha_{i,t}^{\star})
       \mathcal{R}_{Z,i,t}(\beta_{i,t})
       T_K(-i\theta h_{i,t}).
    \label{eq:truncated-taylor-rotation}
\end{equation}

\noindent This is a truncated expression for the rotation angle defined in Equation \ref{eq: truncated}. Observe that
\[
    \|h_{i,t}\|
       =
       |\beta_{i,t}|
       \leq 2\pi.
\]
Hence,
\begin{align}
\label{eq_phase}
    \left\|
       e^{-i\theta h_{i,t}}
       -
       T_K(-i\theta h_{i,t})
    \right\|
    &\leq
       \sum_{k=K+1}^{\infty}\frac{(2\pi)^k}{k!}                  \\
    &\leq
       e^{2\pi}\frac{(2\pi)^{K+1}}{(K+1)!}
       =:\rho_K.
       \label{rhok}
\end{align}

Furthermore, if $\widetilde U_{\boldsymbol{\beta},K}(\theta)$ is the corresponding circuit expression, with the original rotation gate replaced by the truncated ones, then \begin{equation}
\label{eq: degree}
    q_{\boldsymbol{\beta},K}(\theta)
       :=
       \left|
       \langle x|
       \widetilde U_{\boldsymbol{\beta},K}(\theta)
       |0^n\rangle
       \right|^2.
\end{equation}
\noindent Since there are $M$ rotation gates, and we have truncated the Taylor series at $K$, it follows that the output probability in Equation \ref{eq: degree} is a polynomial with degree at most $2MK.$

\paragraph{Relating exact and truncated output probabilities using Feynman paths.}
After absorbing the fixed Clifford subcircuits into the neighboring
programmable locations, let us write
\[
    U_{\boldsymbol{\beta}}(\theta)
       =G_M(\theta)\cdots G_1(\theta),
    \qquad
    \widetilde U_{\boldsymbol{\beta},K}(\theta)
       =
       \widetilde G_M(\theta)\cdots
       \widetilde G_1(\theta),
\]
where $G_j(\theta)$ and $\widetilde G_j(\theta)$ contain,
respectively, the exact and Taylor-truncated coherent $Z$-rotation
at location $j$.  Using Equation \eqref{eq_phase} and Equation \eqref{rhok},
\[
    \|G_j(\theta)-\widetilde G_j(\theta)\|
       \leq \rho_K,
    \qquad
    \|G_j(\theta)\|=1,
    \qquad
    \|\widetilde G_j(\theta)\|\leq1+\rho_K.
\]

\noindent Set $y_0=0^n$ and $y_M=x$.  The Feynman path expansion and the
telescoping identity for products give
\begin{align}
&\left|
 \langle x|\widetilde U_{\boldsymbol{\beta},K}(\theta)|0^n\rangle
 -
 \langle x|U_{\boldsymbol{\beta}}(\theta)|0^n\rangle
\right|                                                        \nonumber\\
&\quad\leq
\sum_{y_1,\ldots,y_{M-1}\in\{0,1\}^n}
\left|
 \prod_{j=1}^{M}
 \langle y_j|\widetilde G_j(\theta)|y_{j-1}\rangle
 -
 \prod_{j=1}^{M}
 \langle y_j|G_j(\theta)|y_{j-1}\rangle
\right|                                                        \nonumber\\
&\quad\leq
2^{n(M-1)}
M\rho_K(1+\rho_K)^{M-1}
=: \varepsilon_K.
\label{eq:feynman-truncation-bound}
\end{align}
Since $M=\operatorname{poly}(n)$, Stirling's formula allows
$K=\operatorname{poly}(n)$ to be chosen sufficiently large that
\[
    \varepsilon_K=2^{-\operatorname{poly}(n)}.
\]

\noindent Let
\[
    A(\theta)
       :=
       \langle x|U_{\boldsymbol{\beta}}(\theta)|0^n\rangle,
    \qquad
    \widetilde A_K(\theta)
       :=
       \langle x|
       \widetilde U_{\boldsymbol{\beta},K}(\theta)
       |0^n\rangle.
\]
Since $|A(\theta)|\leq1$ and
$|\widetilde A_K(\theta)|\leq1+\varepsilon_K$, we obtain
\begin{align}
\left|
 q_{\boldsymbol{\beta},K}(\theta)
 -
 \left|
   \langle x|U_{\boldsymbol{\beta}}(\theta)|0^n\rangle
 \right|^2
\right|
&=
\left|
 |\widetilde A_K(\theta)|^2-|A(\theta)|^2
\right|                                                        \\
&\leq
\bigl(
 |\widetilde A_K(\theta)|+|A(\theta)|
\bigr)
|\widetilde A_K(\theta)-A(\theta)|                             \\
&\leq
(2+\varepsilon_K)\varepsilon_K
=
2^{-\operatorname{poly}(n)}.
\label{eq:truncated-genuine-closeness}
\end{align}

\paragraph{Choosing interpolation points.} Now, choose $\Delta = 1/\text{poly}(n)$ and let 
\begin{equation}
\label{eq: points}
L:=6d+1,
\end{equation}
for $d := 2MK$ is an appropriately chosen polynomial in $n$. Let
\[
    \Theta=\{\theta_0,\ldots,\theta_{L-1}\}
       \subseteq[0,\Delta]
\]
be any $L$ distinct polynomial-bit rational points. These will be our points for polynomial interpolation. 

Define
$\widetilde{\mathcal D}_{
\boldsymbol{\alpha}^{\star},K,\Theta}$ by sampling
$\boldsymbol{\beta}$ as above, choosing
$\ell\in\{0,\ldots,L-1\}$ uniformly, and outputting the description
of
$\widetilde U_{\boldsymbol{\beta},K}(\theta_\ell)$.

\subsubsection{Average case hardness theorem}
\noindent We are now ready to state our main theorem.

\begin{theorem}
\label{thm:taylor-truncated-average-hardness}
There are choices
\[
    K=\operatorname{poly}(n),
    \qquad
    \Delta=\frac{1}{\operatorname{poly}(n)},
\]
such that exactly computing
$q_{\boldsymbol{\beta},K}(\theta_\ell)$ on at least an $8/9$
fraction of the instances drawn from
$\widetilde{\mathcal D}_{
\boldsymbol{\alpha}^{\star},K,\Theta}$
is $\#\mathrm{P}$-hard under randomized polynomial-time reductions.
\end{theorem}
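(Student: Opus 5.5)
The proof follows the Bouland et al.\ worst-to-average reduction~\cite{bouland2019complexity}, specialized to the low-degree polynomial structure set up above. Suppose an oracle $\mathcal{O}$ computes $q_{\boldsymbol{\beta},K}(\theta_\ell)$ exactly on at least an $8/9$ fraction of instances from $\widetilde{\mathcal D}_{\boldsymbol{\alpha}^{\star},K,\Theta}$. The goal is to use $\mathcal{O}$ to compute $|\langle x|U(\boldsymbol{\alpha}^\star)|0^n\rangle|^2$ to additive error $2^{-\operatorname{poly}(n)}$. By Remark~\ref{remark: additive error}, this is $\#\mathrm{P}$-hard.

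The reduction draws $\boldsymbol{\beta}$ uniformly and queries $\mathcal{O}$ on all $L=6d+1$ circuits $\widetilde U_{\boldsymbol{\beta},K}(\theta_\ell)$, $\ell=0,\dots,L-1$. For fixed $\boldsymbol{\beta}$, the map $\theta\mapsto q_{\boldsymbol{\beta},K}(\theta)$ is a univariate polynomial of degree at most $d=2MK$, since each Taylor-truncated factor is a degree-$K$ polynomial in $\theta$ and its complex conjugate contributes another such factor.

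The joint distribution of $(\boldsymbol{\beta},\ell)$ is exactly $\widetilde{\mathcal D}_{\boldsymbol{\alpha}^\star,K,\Theta}$. By Markov's inequality, with probability at least $2/3$ over $\boldsymbol{\beta}$ the oracle is correct on at least a $2/3$ fraction of the $L$ points. The failure fraction is at most $1/9$ on average, so Markov gives failure fraction above $1/3$ with probability at most $1/3$.

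On such a good $\boldsymbol{\beta}$, at most $L/3=2d+1/3$ of the values are wrong. Berlekamp--Welch (Reed--Solomon) decoding with $L\ge d+1+2e$ then recovers the true polynomial $q_{\boldsymbol{\beta},K}$ exactly, where $e\le 2d$ is the number of errors. The points are polynomial-bit rationals, so the arithmetic is polynomial time. Since $\boldsymbol{\beta}$ is not observed to be good, the procedure is repeated over several independent $\boldsymbol{\beta}$ and a majority or consistency check is used to amplify success to $1-2^{-\operatorname{poly}}$.

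Evaluating the recovered polynomial at $\theta=1$ gives $q_{\boldsymbol{\beta},K}(1)=|\langle x|\widetilde U_{\boldsymbol{\beta},K}(1)|0^n\rangle|^2$. By \eqref{eq:truncated-genuine-closeness} at $\theta=1$, this is within $2^{-\operatorname{poly}(n)}$ of $|\langle x|U(\boldsymbol{\alpha}^\star)|0^n\rangle|^2$. This holds because $\mathcal{R}^{(\boldsymbol\beta)}_{Z,i,t}(1)=\mathcal{R}_{Z,i,t}(\alpha^\star_{i,t})$ exactly, and the truncation bound $\varepsilon_K$ is uniform in $\theta\in[0,1]$.

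The main obstacle is that the exact decoding step is fine, but error amplification can creep in between where the truncated polynomial is evaluated and what the worst-case problem needs. Extrapolating exactly from $[0,\Delta]$ to $\theta=1$ is harmless, because $q_{\boldsymbol{\beta},K}$ is recovered exactly as a polynomial rather than approximately. The care goes into two things. First, $K$ must be large enough that $\varepsilon_K$ beats the additive precision required in Remark~\ref{remark: additive error}. Second, the coefficients must have polynomial bit-size: $\beta$ is rounded to $b$ bits, and the $2^{-b}$ rounding error is absorbed. Although $\Delta$ plays no role in the exact version, it is kept inverse-polynomial so that, via \eqref{theta}, the distribution $\widetilde{\mathcal D}$ is close in total variation to the physical ensemble $\mathcal D_n$. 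That closeness is what connects this theorem to the sampling conjecture afterwards.

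Finally, the $8/9$ threshold has to be checked numerically against $L=6d+1$. Tolerating a $1/3$ error fraction needs $L/3\le (L-d-1)/2$, i.e.\ $L\ge 3d+3$, which $6d+1$ satisfies. This gives the theorem with a randomized polynomial-time reduction.
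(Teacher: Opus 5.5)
Your proposal is correct and follows the paper's proof essentially step for step. Markov's inequality gives a good $(\boldsymbol{\beta},r)$ on which at least $2L/3$ of the $L=6d+1$ answers are right; Berlekamp--Welch then recovers the degree-$d$ polynomial $q_{\boldsymbol{\beta},K}$ exactly; and evaluating it at $\theta=1$, together with \eqref{eq:truncated-genuine-closeness} and Remark~\ref{remark: additive error}, gives the $\#\mathrm{P}$-hard quantity.

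Your extra amplification step is unnecessary, because success probability $2/3$ already suffices for a randomized reduction. If you do amplify, take a median rather than an exact majority, since $q_{\boldsymbol{\beta},K}(1)$ varies with $\boldsymbol{\beta}$ at the $2^{-\mathrm{poly}(n)}$ level through the truncation $T_K$.
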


\begin{proof}
Suppose that a polynomial-time algorithm $\mathcal O$ exactly
computes
$q_{\boldsymbol{\beta},K}(\theta_\ell)$ with probability at least
$8/9$ over
$\widetilde{\mathcal D}_{
\boldsymbol{\alpha}^{\star},K,\Theta}$. Let $r$ be the internal randomness of $\mathcal{O}$. For fixed
$(\boldsymbol{\beta},r)$, define
\[
    g(\boldsymbol{\beta},r)
       :=
       \frac{1}{L}
       \left|
       \left\{
          \ell:
          \mathcal O_r\!\left(
             \widetilde U_{\boldsymbol{\beta},K}(\theta_\ell)
          \right)
          =
          q_{\boldsymbol{\beta},K}(\theta_\ell)
       \right\}
       \right|.
\]
Since the oracle succeeds with probability at least $8/9$ over $(\boldsymbol{\beta},r,\ell)$,
\[
    \mathbb E_{\boldsymbol{\beta},r}
       \bigl[1-g(\boldsymbol{\beta},r)\bigr]
       \leq \frac19.
\]
Therefore, by Markov's inequality,
\[
    \Pr_{\boldsymbol{\beta},r}
       \left[
          g(\boldsymbol{\beta},r)<\frac23
       \right]
    =
    \Pr_{\boldsymbol{\beta},r}
       \left[
          1-g(\boldsymbol{\beta},r)>\frac13
       \right]
    \leq
    \frac{1/9}{1/3}
    =
    \frac13.
\]
Thus, with probability at least $2/3$ over
$(\boldsymbol{\beta},r)$, the oracle is correct on at least
$2L/3$ interpolation points. Fix such a good pair $(\boldsymbol{\beta},r)$.  Since
$q_{\boldsymbol{\beta},K}$ has degree at most $d$ and $L \geq 6d$ (according to Equation \ref{eq: degree}),
\[
    \frac{2L}{3}
       >
       \frac{L+d}{2}.
\]
Thus, this is a sufficient condition for the Berlekamp--Welch algorithm to reconstruct
$q_{\boldsymbol{\beta},K}(\theta)$ exactly from the $L$ interpolation points. 

\paragraph{Relating to worst-case output probability.} At the worst-case endpoint,
\[
    U_{\boldsymbol{\beta}}(1)
       =
       U(\boldsymbol{\alpha}^{\star}).
\]
Hence
\begin{equation}
    \left|
       q_{\boldsymbol{\beta},K}(1)
       -
       p_x(\boldsymbol{\alpha}^{\star})
    \right|
    \leq 2^{-\text{poly}(n)},
    \label{eq:worst-case-endpoint-bound}
\end{equation}
\noindent according to Equation \ref{eq:truncated-genuine-closeness}. According to Remark \ref{remark: additive error}, this would solve a $\#\mathrm{P}$-hard problem. This proves the theorem. 
\end{proof}

\paragraph{A note on the hardness of sampling.} Note that the same techniques as those used in the proof of Theorem \ref{thm:taylor-truncated-average-hardness}, can be used in proving $\#\mathrm{P}$-hardness of computing probabilities up to $2^{-\text{poly}(n)}$ additive error. If we want robustness up to a better additive error, techniques from subsequent follow-ups to \cite{bouland2019complexity}, like \cite{Movassagh2023, Bouland_2022, bouland2025exponential}, are potentially useful. However, if we want to relate the hardness of computing output probabilities to the hardness of classically sampling from the output distribution up to $1/\text{poly}(n)$ total variation distance error (modulo just the polynomial hierarchy not collapsing), the current techniques require us to prove hardness up to an additive error of $2^{-n}/\text{poly}(n)$. This remains an open question, both for random circuit sampling and for the task under our consideration. 

In other words, at present, we can say that assuming \emph{two} conjectures---that computing output probabilities is $\#\mathrm{P}$-hard up to $2^{-n}/\text{poly}(n)$ additive error and that the polynomial hierarchy does not collapse---sampling from our ensemble, up to $1/\text{poly}(n)$ total variation distance in the average case over the random rotations, is classically hard.


\section{Simulation complexity at the experimental scale} \label{sec:simulation_hardness}

In order to assess the simulation hardness of our circuits, we cover here different state-of-the-art simulation methods and their limitations. Typical simulation algorithms either rely on low entanglement or low magic (relative to some non-universal fragment), or even both, with  some recent approaches mixing the two. 

Throughout this section, the typical classical simulation task we consider is the computation of outcome probabilities. For sufficiently high-fidelity circuits, this provides a reasonable alternative to sampling. Furthermore, this is more favorable to classical methods because sampling often involves many calls to amplitude estimation (i.e. computing probabilities).

\subsection{Tensor-network based simulations}

Quantum circuits inducing only low entanglement (such as shallow circuits on limited connectivity) are known to be classically simulable. Therefore, it is important to ensure that our circuits present high entanglement, as this is a necessary condition for simulation hardness.

Standard simulation tools for low entanglement quantum circuits are:
\begin{itemize}
    \item MPS/Tensor trains \cite{Vidal_2003, SCHOLLWOCK201196}. In this approach, the running quantum state is stored into a linear tensor network, effectively compressing it when it admits a factoring for the tensor product. This technique can exploit low bipartite entanglement across cuts of the system along some linear arrangement of the qubits.
    \item General tensor network contraction \cite{Markov_2008, Gray_2021}. A more general approach where the circuit is converted into a general tensor network and some heuristic contraction algorithm tries to find a good contraction ordering. 
\end{itemize}
Overall, these approaches all rely on finding low entropy bipartitions of the system, allowing the simulator to recursively decompose the state into a sum of products of simpler states. Some recent works demonstrated how some particular graph parameter, namely the rank-width, directly ties to contraction complexity, thus providing tighter lower bounds on contraction complexity and practical heuristics for tensor network contraction \cite{cam2023speedingquantumcircuitssimulation, kuyanov2026efficientclassicalsimulationlowrankwidth}.
In general, one can define a parameter, called entanglement-width, that captures how much entanglement the state contains \cite{PhysRevLett.97.150504}.

In our setting, we rely on a Clifford circuit that then gets doped into a non-Clifford regime. For stabilizer states, the entanglement-width happens to exactly match the rank-width of any graph state underlying it. Ideally, one should be able to compute the rank-width of a graph state underlying the stabilizer state used in our experiment to guarantee some lower bound on the contraction time of the corresponding tensor network. This, however, is doomed to fail due to the high complexity cost of computing the rank width of graphs. As an alternative, we rely on the following procedure:
\begin{itemize}
    \item We first compute the adjacency matrix $A$ of some graph state underlying our stabilizer state
    \item We then pick a very large number of random bipartitions of the set of qubits and compute the rank of the bipartite adjacency matrix induced by that cut (corresponding to an off-diagonal block of $A$)
    \item For each bipartite adjacency matrix, we compute its rank over $\mathbb{F}_2$ and record the minimum rank observed.
\end{itemize}
Even though this approach can only give an upper bound on the entanglement-width of our stabilizer state, we claim that this is sufficient to assess the hardness of simulation via tensor network contraction, since any attacker would need to also find low-rank/entropy bipartitions of the system.

In practice, for our 64-qubit experiment, we explored $10^8$ random bipartitions and observed a minimal rank of $27$. If this is the minimal rank, it entails that a tensor network based simulator exploiting this particular minimal bipartition would still need to perform at least $2^{27}$ operations, if not more.

\begin{figure*}[!h]
    \centering
    \includegraphics[width=0.48\linewidth]{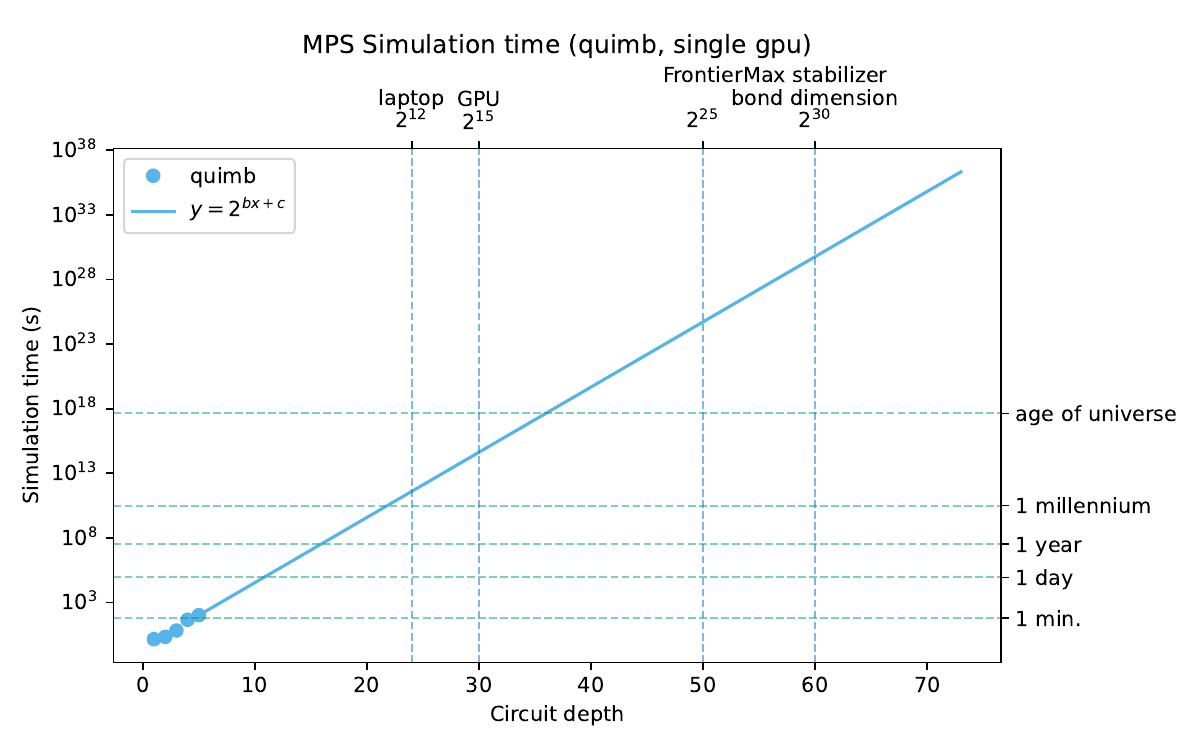}
    \hfill
    \includegraphics[width=0.48\linewidth]{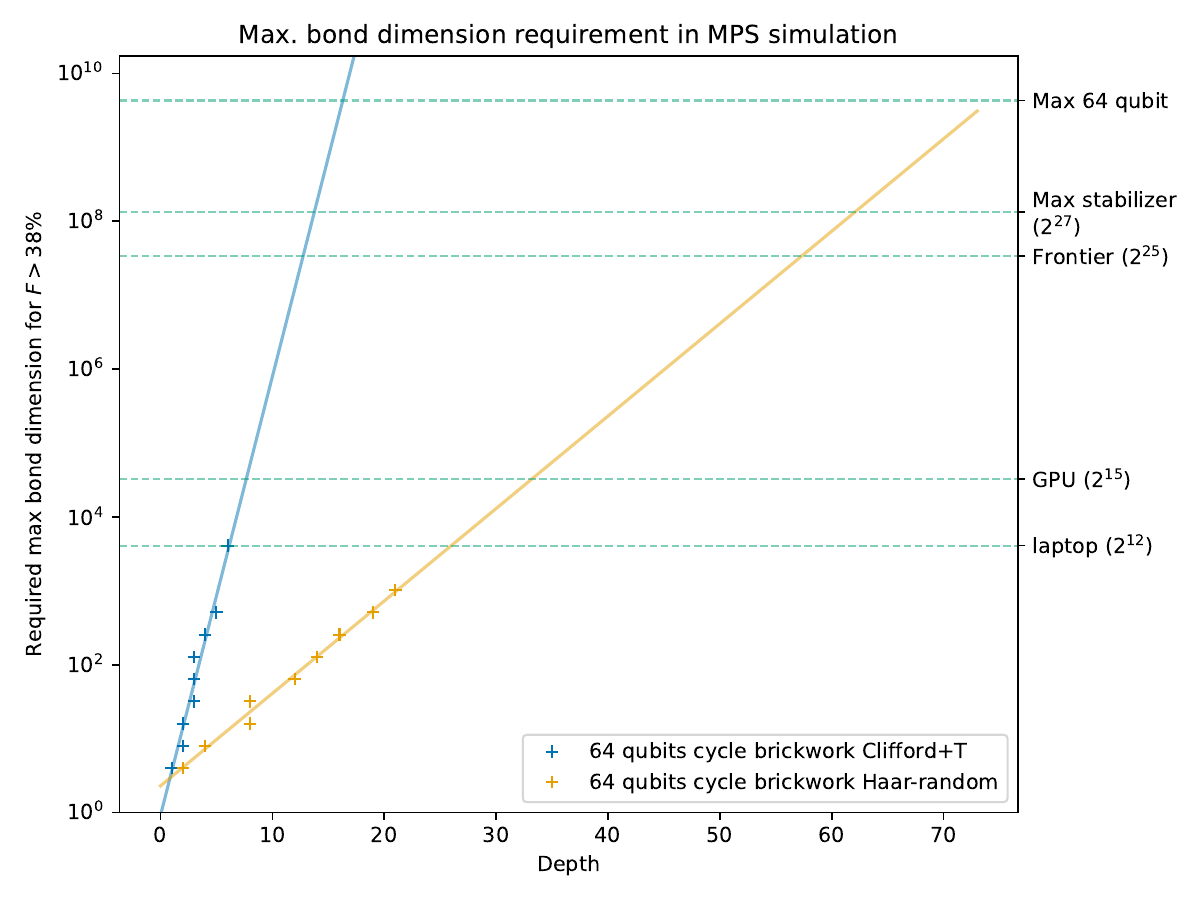}

    \caption{\textbf{MPS simulation scaling and bond-dimension requirements.}
    Left: We simulated our experiment via MPS, tracking the total simulation time at each end of a layer. As expected, due to the bond dimension requirements, the simulation time scales exponentially.
     Dots represent measured simulation times, while the solid curve is a fit to $y=2^{bx+c}$, with $b=1.36$ and $c=-0.94$.
    Right: Each point is obtained by running MPS with a fixed maximum bond dimension and recording the depth at which the truncation error exceeds $1-0.38$. We compare these depths for our $T$-doped circuit and a similar brickwork circuit in which CZ layers are interleaved with layers of Haar-random single-qubit gates. The scaling indicates that our circuits are significantly harder to simulate than their Haar-random counterparts (see Sec.~\ref{paragraph:haar_vs_clifford}).}
    \label{fig:mps}
\end{figure*}

To illustrate this high entanglement entropy scaling, we simulated our circuit with an MPS approach using {\bf quimb}. We ran the MPS algorithm on an H100 GPU (with 80 GB memory) for increasing maximum bond dimension on our circuit and on a Haar-random loop brickwork circuit of the same width. We track the truncation error induced by this maximum bond dimension and stop the simulation when the error goes beyond $1 - 0.38$ (i.e., that the MPS will fail to achieve the fidelity achieved in our experiment). Results are reported in Figure \ref{fig:mps}.

\begin{figure}[!h]
    \centering
    \includegraphics[width=0.95\linewidth]{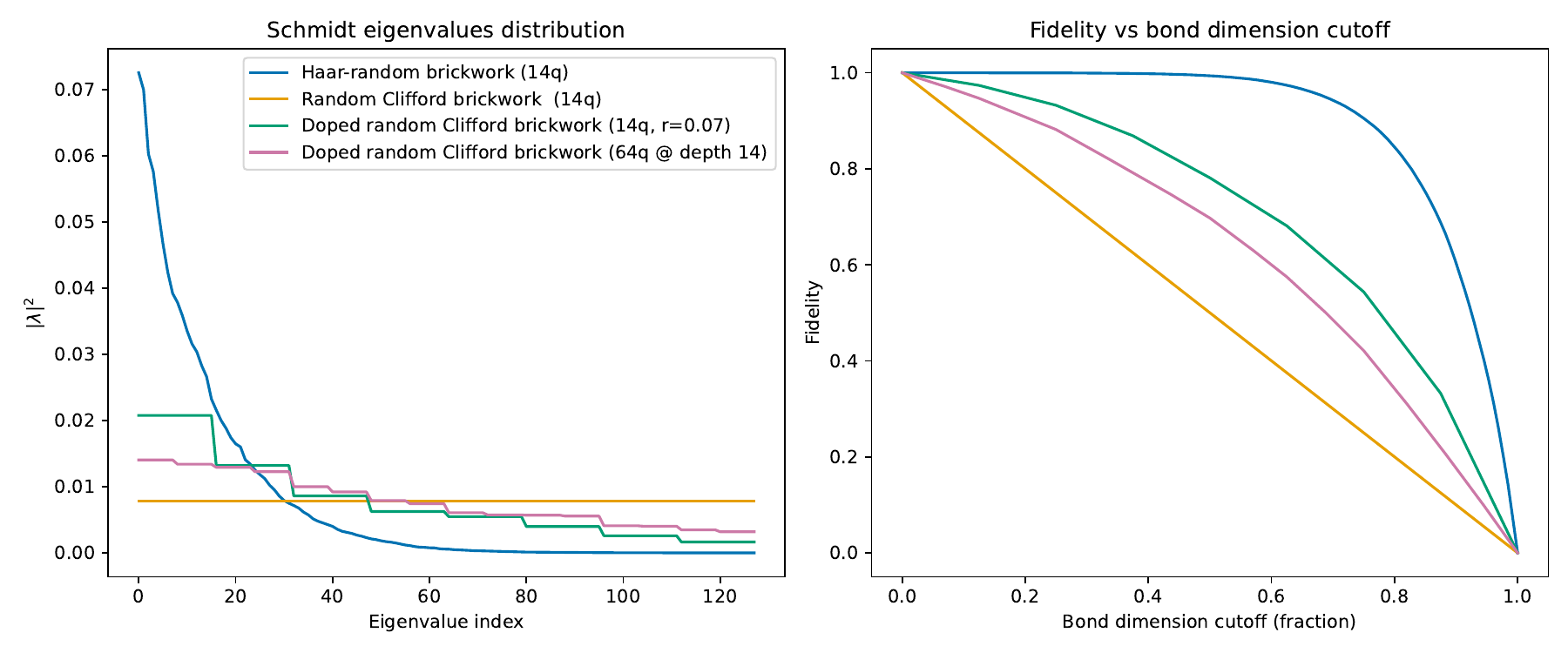}
    \caption{{\bf Singular-value distributions for various circuits.} 
    We considered four different circuits: One Haar-random circuit over 14 qubits (depth 14), a random Clifford circuit over 14 qubits (depth 14), the same random brickwork Clifford circuit, doped at a rate of $r=0.07$ which matches our experiment's doping rate, all of them without periodic boundary conditions, and finally a doped Clifford circuit very similar to our experiment circuit except with no boundary conditions and up until depth 14 at which a maximum bond dimension of $2^7$ fails to faithfully represent the current state.
    Left: Depicts the singular-value distributions for each of those circuits. We can observe that the distribution obtained in our circuit is actually flatter than that of a random doped Clifford circuit.  Right: Depicts the fidelity drop as a function of the truncation aggressiveness. We can see the linear fidelity drop in the Clifford setting. Even though that drop is milder in the doped cases, it is still far from negligible when compared to the Haar-random case.}
    \label{fig:mpsspectrum}
\end{figure}

\newpage

\paragraph{A note on Haar-random vs. Clifford vs. doped Clifford circuits and MPS.}\label{paragraph:haar_vs_clifford}
The ability for a tensor-train/MPS to faithfully represent a given quantum state heavily relies on the structure of the Schmidt rank and the corresponding singular-value distributions. Even though MPS will generally struggle to approximate states with high Schmidt rank to arbitrarily high precision, they can be particularly successful at approximating states with exponentially decaying singular-value distributions \cite{PhysRevX.10.041038}, which includes brickwork circuits with $CZ$ gates interleaved with Haar-random single qubit gates. For maximally entangled stabilizer states however, the distribution of Schmidt eigenvalues is flat, preventing any efficient truncations. In particular, given a state with Schmidt rank $R$ represented by an MPS with bond dimension $\chi<R$, one can show that: $$ F < \frac{\chi}{R}=\frac{\chi}{2^S}$$ where $S$ is the bipartite entanglement entropy.

For doped circuits, we do expect the distribution of singular values to deviate from a flat/Clifford distribution and slowly converge to some exponentially decaying distribution. Our circuit, however, due to the relatively low amount of doping, exhibits a slowly decaying distribution (see Figure \ref{fig:mpsspectrum}).

\begin{figure}[!t]
    \centering
    \includegraphics[width=\linewidth]{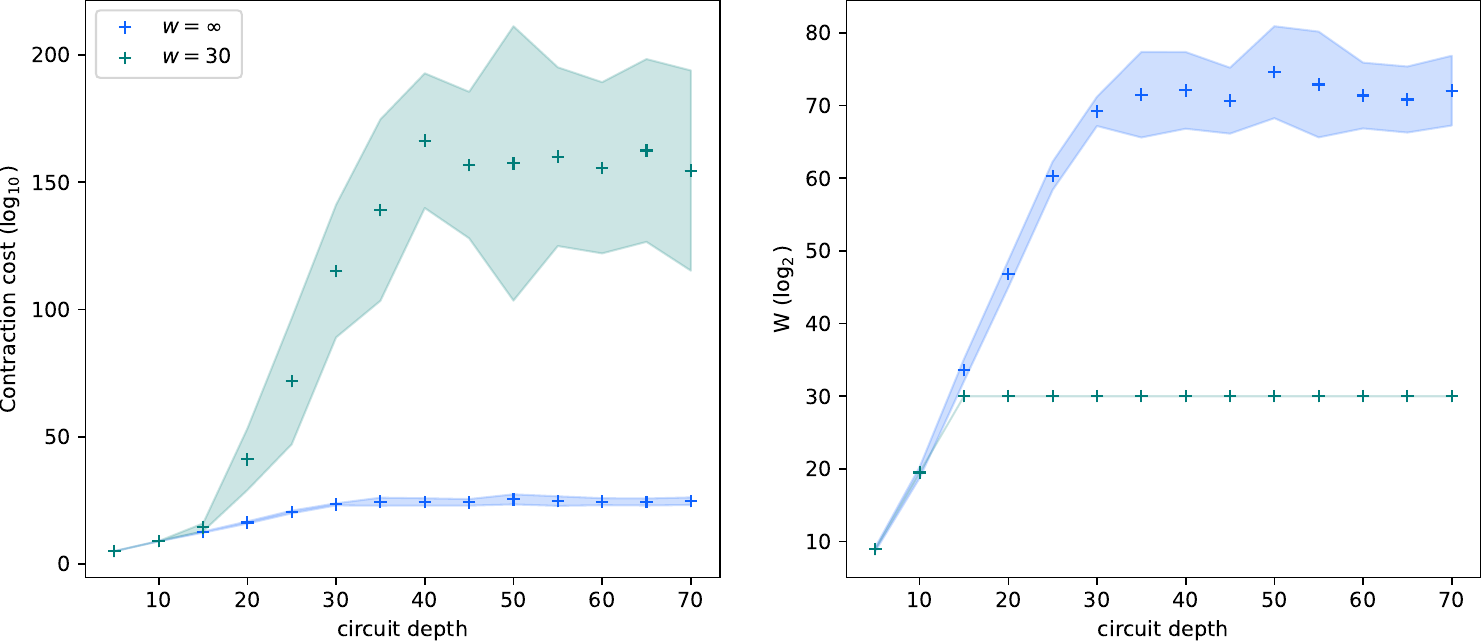}
    \caption{{\bf Tensor-network contraction cost and memory.} 
    We ran {\bf Cotengra}'s hyper optimizer on our circuit truncated at various depths and compared contraction cost (in FLOPs, first panel) and contraction memory (in number of scalars, second panel). In a first run, we let the optimizer find a good contraction order with no memory limitation, leading to a maximum memory of $2^{71}$ scalars ($w=\infty$). In a second run, we limited the maximum memory footprint to $2^{30}$ resulting in higher contraction costs ($w=30$).}
    \label{fig:tn_simulation}
\end{figure}

\newpage

\paragraph{Tensor-network contraction heuristics.} Even though we expect the rank-width and tree-width of our family of circuits to be high, it is important to assess the feasibility of raw tensor network contraction approaches for our 64-qubit realization. To assess this approach, we ran {\bf Cotengra's hyper} optimizer to find a good contraction ordering for our circuit and track the contraction cost and width when increasing depth. We also ran the same experiment, this time bounding contraction width to $2^{30}$, which roughly matches the available memory on a modern GPU. Results are presented in Figure \ref{fig:tn_simulation}.

\subsection{Low stabilizer-rank decomposition}

Another orthogonal approach to quantum circuit simulation is to try to decompose the quantum state of interest into a basis of classically tractable quantum states, usually stabilizer states. This technique, originally introduced in \cite{bravyi2016trading, bravyi2016improved}, presents the advantage of scaling polynomially with the number of qubits, regardless of the amount of entanglement present in the quantum state. The flip side is that the resulting simulation typically scales exponentially with the number of gates lying outside of the tractable gate set (Clifford gates in our case).

This technique has been iteratively improved by reducing the exponential scaling by finding more efficient asymptotic low-rank decompositions \cite{qassim2021improved} or utilizing a tensor-network rewriting strategy to heuristically improve stabilizer decompositions \cite{kissinger_vilmart}.
In practice, we compared different techniques, namely the phase-aware Tableau based implementation of \cite{Bravyi2019simulationofquantum} using the state-of-the-art cat-state-based decomposition of \cite{qassim2021improved}, and the ZX-based heuristic of \cite{kissinger_vilmart}, as well as more recent algorithms~\cite{tsim,clifft}. Overall, the ZX-based approach is by far the most performant heuristic for our circuits. Figure \ref{fig:tsimscaling} summarizes the scaling of this approach as a function of the number of $T$ gates in the simulated circuit.

\begin{figure}[!t]
    \centering
    \includegraphics[width=0.95\linewidth]{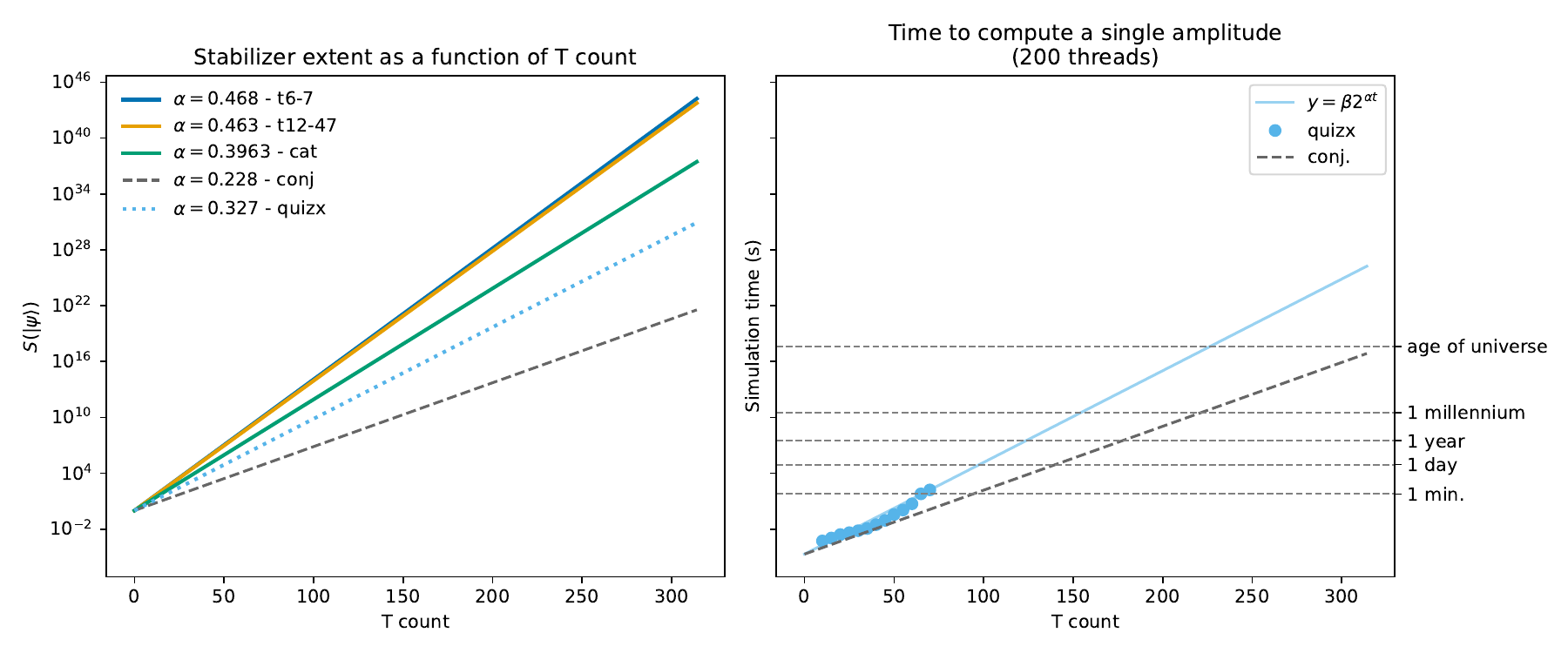}
    \caption{{\bf Near-stabilizer simulation scaling.} (left) Scaling of different known stabilizer decompositions of $\ket{\psi}=\ket{T}^{\otimes t}$. All scalings have form $2^{\alpha t}$ for different $\alpha$. The gray dashed line represents the conjectured lower bound for the stabilizer extent of $\ket{\psi}$. The dashed blue line corresponds to the full blue curve on the right panel. {\bf t6-7} corresponds to \cite{bravyi2016improved} 6 Ts to 7 stabilizers. {\bf t12-47} corresponds to \cite{kocia2022improvedstrongsimulationuniversal} 12 Ts to 47 stabilizers.  {\bf cat} corresponds to \cite{qassim2021improved} 12 Ts to 47 stabilizers. Finally, {\bf quizx} corresponds to the empirical analysis of the performance of the software of the same name that uses dynamical branching and circuit rewriting during the decomposition. (right) Simulation time of our 64-qubit circuit with various amounts of doping using {\bf QuiZX}, up to 70 T gates (dots). In full: fit of $y=\beta 2^{\alpha t}$ on those data points, with $\alpha=0.327$ and $\beta=2.07\cdot 10^{-5}$. The dashed curve corresponds to $y=\beta 2^{0.228x}$: that is using the observed prefactor in {\bf QuiZX} simulations and the conjectured best possible scaling for stabilizer extent.}
    \label{fig:tsimscaling}
\end{figure}

While spacetime checks admit a large number of non-Clifford resources, a potential weakness is that the valid doping sites are no longer uniformly distributed across the circuit. Indeed, when constraining to the locations allowed by spacetime checks, doping sites are sparse in the majority of the layers, but increase in density especially in the last few layers. To quantify how this affects the nonstabilizerness of the resultant state, we measure the 2-stabilizer Réyni entropy (2-SRE) \cite{leone2022stabilizer} exactly for various $20 \times 20$ instances of our ansatz. As seen in Figure~\ref{fig:sresims}, compared to leaving $T$ gate doping unconstrained, constraining by checks does typically result in a lower 2-SRE, also increasing the variance. Nonetheless, there is only a small difference between the largest 2-SRE of 18.00 in the former and the lowest 2-SRE of 17.66 in the latter cases. This indicates that limiting doping to spacetime locations based on checks does not significantly decrease the nonstabilizerness, and our circuits should still be challenging for stabilizer decomposition simulations.

\begin{figure}[!ht]
    \centering
    \includegraphics[width=0.5\linewidth]{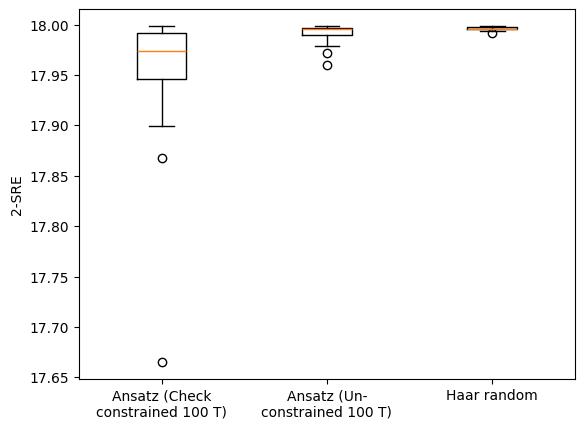}
    \caption{{\bf 2-stabilizer Rényi entropy for various doping schemes.} Using the HadaMAG simulator \cite{sresimulator} we find the exact 2-SRE for 20 instances of our $20 \times 20$ ansatz (with 10 ancillas) doped with 100 $T$ gates, first limiting to sites allowed by spacetime checks and then considering all possible sites, and brickwork Haar-random circuits. As expected, the Haar-random states have the largest 2-SRE, but there is only a small drop-off when using the ansatz with doping in check-constrained sites.}
    \label{fig:sresims}
\end{figure}

\subsection{Hybrid approaches}

Recent work proposes a hybrid approach where a disentangling Clifford frame is computed in which the non-Clifford rotations of the circuit are acting on a small set of qubits, thus generating low entanglement \cite{ybnf-rjw8, clifft}. These approaches work particularly well for circuits where most non-Clifford rotations pairwise commute. In particular, the authors provide a metric for the effectiveness of their simulation technique, directly tied to the anticommutation relations between those rotations. For our circuit, this metric hints at maximal complexity in that all rotations, save for the first $n$, would generate entanglement in the Clifford frame tracked by the simulator.

The metric proposed by \cite{ybnf-rjw8} corresponds to the rank of the kernel of the code generated by the X-components of the back-propagated rotations. In other words, we first compute the back-propagator of each of the $m$ Pauli $Z$ located in place of a $T$ gate in our circuit, compute their symplectic encodings, and record their first $n$ components, giving us a matrix $\mathcal{T}$ in $\mathbb{F}_2^{n \times m}$. We then compute the dimension of the kernel of that matrix. For our circuit with $314$ $T$ gates, we obtain maximal rank of $314-64 = 250$. Consequently, this disentangling approach will successfully disentangle the first $n=64$ $T$ gates and then start building up entanglement for each subsequent $T$ gate. In some sense, this approach will offset the simulation cost of the first $64$ $T$ gates and will then track a MPS with a bond dimension increasing for each subsequent $T$ gate.

To illustrate this point, we ran a similar experiment to the raw MPS above. We used the CAMPS algorithm of \cite{ybnf-rjw8} with increasing maximum bond dimension and tracked the rotation index at which CAMPS failed to represent the running state with sufficient fidelity (here $38\%$). The results are depicted in Figure \ref{fig:camps_scaling}.

Notice that the {\bf Clifft} algorithm proposed in \cite{clifft} is just a full state vector version of the same algorithm. As such, the scaling and memory requirements are actually worse in the sense that from the beginning of the simulation such an approach would need to allocate the $2^{64}$ complex numbers describing a full state vector of $64$ qubits.

\begin{figure}[!t]
    \centering
    \includegraphics[width=0.8\linewidth]{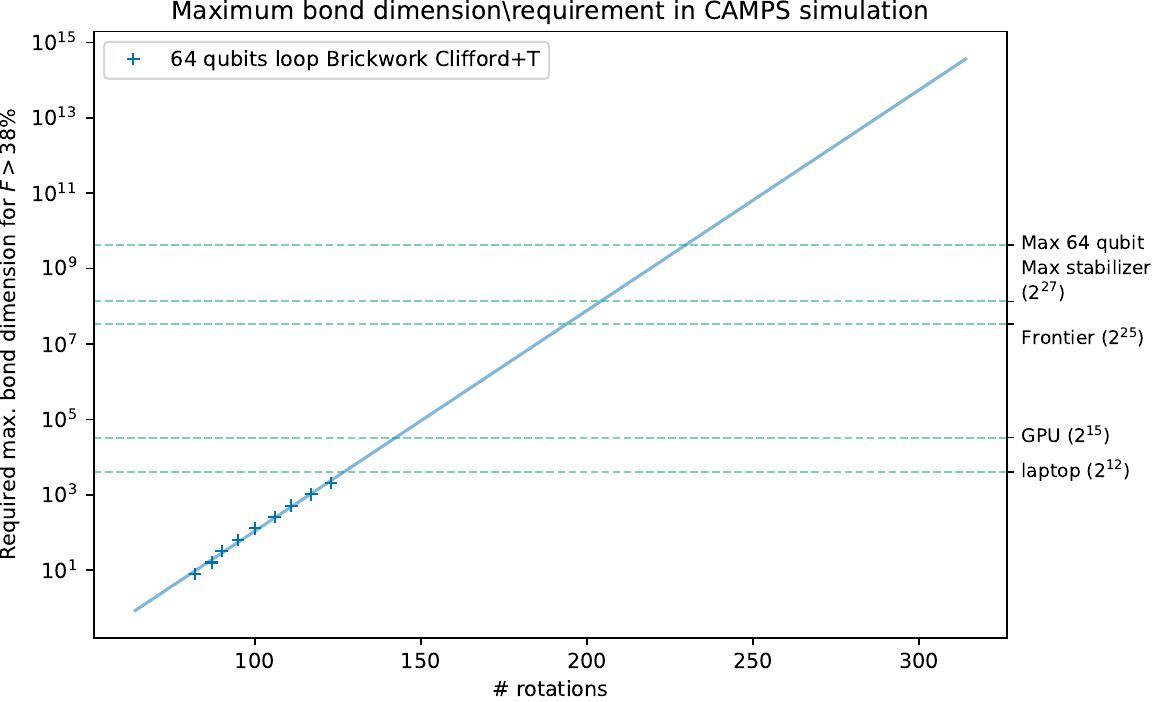}
    \caption{{\bf Bond dimension requirements for CAMPS.} Each point is obtained by running the CAMPS algorithm with fixed bond dimension and tracking the rotation index at which the CAMPS error increases beyond $1-0.38$. We stopped the extrapolation at 230 rotations since by that time the bond dimension would have saturated to the theoretical maximum of $2^{32}$.  Notice how the first $64$ rotations are virtually free and require $\chi=1$ since they were successfully disentangled. The bond dimension then grows exponentially at a rate of $2^{0.19 t}$. Since MPS' running time scales as $\chi^3$, this gives a running time scaling of $2^{0.57 t}$, worse than Clifford+T simulations, with the advantage of a very low prefactor due to the first $64$ $T$ gates being free.}
    \label{fig:camps_scaling}
\end{figure}

\subsection{Polynomial-time noisy simulations} 
Recent techniques~\cite{Aharonov_2023, Noh_2020} utilize the presence of noise in circuit implementations in order to achieve polynomial time simulation. However, the complexity of those approaches heavily depends on the noise rate of the simulated gates. Typically, the degree of their polynomial complexity will rapidly increase as we decrease the simulated error rate.
In order to reproduce the outcome of our experiment, one would need to achieve $38\%$ fidelity across a circuit of roughly $2.5$k gates. This is equivalent to simulating the circuit with a gate error rate of $\varepsilon = 1 - e^{\log{0.38}/2500}\approx 3.9\times10^{-4}$, which falls below reasonable complexities for those approaches (\cite{Noh_2020} reports simulations down to $\varepsilon=6\times10^{-2}$).
In practice, we ran the noisy MPO approach described in \cite{Noh_2020}, tracking MPO entropy after each layer and with a bond dimension limit of 256. The maximum bond dimension saturated after 4 layers, reaching a maximal MPO entropy of 8, which leads us to believe that this approach is not competitive at this noise rate.

\subsection{XEB spoofing through unfaithful simulation}

Another way to produce samples close to our distribution would be to try to simplify the quantum circuit in a simulable regime and return the corresponding samples~\cite{GaoKalinowski2024}.
One approach would be to replace some $T$ gates by some Clifford gates and simulate the resulting circuit. In order to maximize the process fidelity between the $T$ gates and their replacements, one can use $S$ gates, which achieve a relative fidelity of $0.8536$. To test this approach, we extracted slices of widths 12 and 14 in the middle of our full circuit and simulated the following experiment:
\begin{itemize}
    \item for $k\in [0, 40]$, pick $k$ $T$ gates at random and replace them with $S$ gates
    \item simulate the circuit and compute the resulting fidelity w.r.t. the fully doped circuit
\end{itemize}

Figure \ref{fig:spoofing} depicts the results of that experiment. The fidelity drops exponentially as $T$ gates are Cliffordized. As only a maximum of 5 gates can be converted into $S$ gates before the fidelity drops below the experimentally measured value, this will not offer a viable path to spoofing the experiment to obtain high XEB scores.

\begin{figure}[!t]
    \centering
    \includegraphics[width=0.5\linewidth]{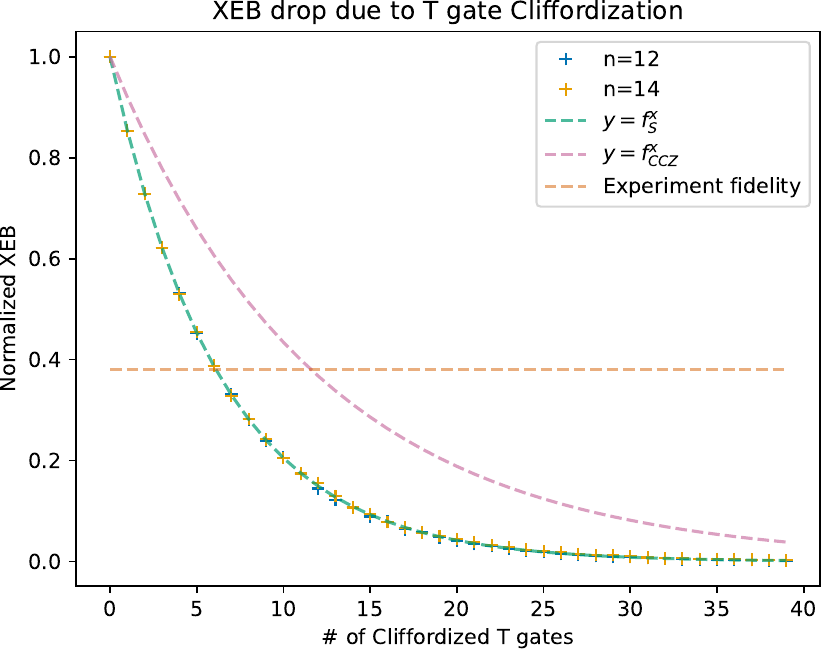}
    \caption{{\bf Attempt at spoofing via $T$ gate Cliffordization. } Each point tracks the average fidelity over 10 runs of replacing a fixed number of $T$ gates by $S$ gates. We also plot for reference the expected fidelity drop due to this replacement given the relative process fidelity between $T$ and $S$, $f_s = 0.8536$ and $CCZ$ to the closest Clifford, normalized to single $T$s, $f_{CCZ}=0.92$. Those scalings show that one can effectively Cliffordize 5 $T$ gates using $S$ gates, or up to 12 $T$ gates using $CCZ$ replacements,  and still achieve $F=0.38$. This brings the simulation requirements from $314$ $T$ gates down to $309$. Notice how the scaling is independent of the number of qubits.}
    \label{fig:spoofing}
\end{figure}

\end{document}